\newtheorem{proposition}{Proposition}[section]
\newtheorem{lemma}{Lemma}[section]
\newtheorem{theorem}{Theorem}[section]
\newtheorem{corollary}{Corollary}[section]
\newtheorem{definition}{Definition}[section]
\newtheorem{remark}{Remark}[section]
\newtheorem{observation}{Observation}[section]
\newcommand{\cclass}{\mathcal{T}}
\newcommand{\argmax}{\text{argmax}}
\newcommand{\reals}{\mathbb{R}}
\newcommand{\E}{\mathbb{E}}
\newcommand{\indicator}{\mathbb{1}}
\newcommand{\relint}{\text{relint}}
\newcommand{\actions}{A}
\newcommand{\outcomes}{[m]}
\newcommand{\prob}{p}
\newcommand{\strat}{\pi}
\newcommand{\finaloutcome}{\omega}
\newcommand{\bestresponse}{\pi^\star}
\newcommand{\boxes}{B}
\newcommand{\prev}{\text{prev}}
\newcommand{\linz}[1]{z^{#1}}
\newcommand{\haltgreater}{\tau^{>}}
\newcommand{\haltsmaller}{\tau^{<}}
\newcommand{\actionspreorder}{P_{\actions}}
\newcommand{\outcomespreorder}{P_{\outcomes}}
\newcommand{\epsiloncontract}{t^{\varepsilon}}
\newcommand{\contractregion}{[0, L]^m}
\begin{document}
\title{Contract Design for Sequential Actions
\thanks{This project has been partially funded by the European Research Council (ERC) under the European Union's Horizon 2020 research and innovation program (grant agreement No. 866132), by an Amazon Research Award, by the NSF-BSF (grant number 2020788), and by a grant from TAU Center for AI and Data Science (TAD). Tomer Ezra is supported by the Harvard University Center of Mathematical Sciences and Applications.}
} 

\author{
Tomer Ezra\thanks{ Harvard University, USA. Email: tomer@cmsa.fas.harvard.edu}\and Michal Feldman\thanks{Tel Aviv University, Israel. Email: mfeldman@tauex.tau.ac.il}\and Maya Schlesinger\thanks{{Tel Aviv University, Israel. Email: mayas1@mail.tau.ac.il}}
}
\maketitle

\begin{abstract}

We introduce a novel model of contracts with combinatorial actions that accounts for sequential and adaptive agent behavior. 
As in the standard model, a principal delegates the execution of a costly project to an agent. There are $n$ actions, each one incurring a cost to the agent and inducing a probability distribution over $m$ outcomes; each outcome generates some reward for the principal. 
The principal incentivizes the agent through a {\em contract} that specifies a payment for each potential outcome. 
Unlike the standard model, the agent chooses actions {\em sequentially}. Following each action, the agent observes the realized outcome, and decides whether to stop or continue with another action. Upon halting, the agent chooses one of the realized outcomes, which determines both his payment and the principal's reward.
This model captures common scenarios where the agent can make multiple attempts in the course of executing a project.

We study the optimal contract problem in this new setting, namely the contract that maximizes the principal's utility. We first observe that the agent's problem --- (adaptively) finding the sequence of actions that maximizes his utility for a given contract --- is equivalent to the well-known Pandora's Box problem. Using this insight, we provide algorithms and hardness results for the optimal contract problem, under both independent and correlated actions, and for both linear and general contracts. 
For independent actions, we provide a poly-time algorithm for the optimal linear contract, and establish that finding the optimal general contract is NP-hard. In cases where the number of outcomes is constant, we devise a poly-time algorithm even for the optimal general contract. 
For correlated actions, we find that, for both linear and general contracts, approximating the optimal contract within any constant ratio is NP-hard.
\end{abstract}
\newpage

\section{Introduction}

Contract theory is a pillar in microeconomics, addressing the problem of motivating agents to exert effort when their actions are not directly observable. This challenge is explored through the principal-agent model of \cite{holmstrom1979moral} and \cite{grossman-hart-1983}.

In this model, a principal delegates the execution of a project to an agent, who chooses an action from a set of $n$ actions, each associated with a given cost and a probability distribution over $m$ possible outcomes. Each one of the outcomes generates some reward to the principal.
The principal cannot observe the action chosen by the agent, but only the obtained outcome.
To motivate the agent to take the desired action, the principal designs a contract, namely a payment scheme $t:\outcomes \rightarrow \reals_{\geq 0}$ that specifies some payment to the agent for every observed outcome.
Given a contract, the agent chooses an action that maximizes his expected utility, calculated as the expected payment minus the cost.

The optimal contract problem is finding the contract that maximizes the principal's expected utility, calculated as the expected reward from the incentivized action minus the expected payment.
This problem is poly-time solvable using linear programming \citep{grossman-hart-1983}.

Contract design has traditionally been studied within the field of economics. However, recent years have seen a growing interest in exploring contract design from an algorithmic perspective, for a recent survey, see \cite{survey}.
This shift has broadened the study of contract design to include various algorithmic and computational perspectives, addressing the complexities of real-world scenarios.
A key model within this realm is the {\em combinatorial contract} model, where the agent can choose combinations of actions, rather than a single action \citep{combinatorial-contracts, duetting2023combinatorial, ezra2023approximability,DuttingEFK25}.

A common assumption in all previous works on combinatorial contracts is that the agent selects all actions simultaneously. However, in many real-world scenarios, the agent chooses actions sequentially, observing intermediate outcomes before deciding on the next action.
In this work, we propose a framework capturing such scenarios and study the optimal contract problem within this framework.

\paragraph{The sequential multi-action model.}
We introduce a contract design model, where an agent performs multiple actions sequentially, potentially adjusting their actions based on the outcomes of prior efforts. Our model captures a wide array of real-world scenarios. 
For instance, in a typical recruitment process, a company (the principal) engages a recruiter (the agent) to fill a vacancy. The recruiter may interview multiple candidates sequentially, observing their competence before deciding to proceed with others.
Similarly, a freelance data scientist (the agent) might test different methodologies, presenting only the most effective solution to the client (the principal). Other examples include a teacher employing diverse techniques to prepare students for an exam, a doctor exploring treatments for a patient, or a researcher evaluating multiple questions before selecting one for publication.

In our model, there are $n$ actions, each associated with a cost to the agent, and $m$ possible outcomes, each associated with a reward to the principal.
Every action induces a probability distribution over outcomes, which may be independent or correlated across different actions (see details below).

The principal begins by offering a contract that specifies a payment for each potential outcome. Given the contract, the agent's {\em strategy} proceeds iteratively: in each iteration, the agent selects an action, incurs its cost, and observes its realized outcome. The agent then decides whether to continue with another action or to halt.
Upon halting, the agent selects one of the outcomes obtained during the process as the {\em final outcome}, which determines the principal's reward. The principal observes only the final outcome, and pays the agent according to the contract's payment for that outcome.\footnote{In a concurrent and independent work, \cite{contract_pandora} introduces a related model with sequential actions, which deviates from the common hidden-action assumption, namely, the principal observes both the final outcome and the action that led to that outcome. Naturally, this gives the principal more power, as reflected in the results, see more details in Section~\ref{sec:related}.}
The agent's utility is then calculated as the payment received from the principal minus the total costs of the actions undertaken.

The principal's utility is the difference between the expected reward of the final outcome and the expected payment to the agent.
The {\em optimal contract} is the contract that maximizes the principal's utility, given that the agent best responds to the contract. 
In this paper, we study the computational complexity of the optimal (or approximately-optimal) contract problem in this setting.

Of special interest is the class of {\em linear contracts}, where the principal pays the agent a fixed fraction $\alpha$ of the reward. Linear contracts are common in many real-life settings, partly since they are easy to describe and apply. Recent work has discovered additional desired properties of linear contracts, such as max-min optimality \cite{caroll, dutting2019simple}, ambiguity-proofness \cite{ambiguous}, and learnability \cite{sample_complexity}.

\subsection{Our Contributions}
\label{sec:our-results}

Our contribution encompasses both modeling novelty and new results for the optimal contract problem in our setting.

\subsubsection{Modeling Novelty}
We introduce a new framework to capture scenarios where an agent takes actions sequentially, with its distinctiveness highlighted in the following aspects\footnote{Some of these aspects are also explored in \cite{contract_pandora}, introduced concurrently and independently of our work.}:
\begin{itemize}
    \item New combinatorial interaction paradigm:
All prior research on combinatorial actions 
relies on a set function that maps each action set to a success probability. In contrast, our work introduces a fundamentally new framework for how actions interact combinatorially. 
We view this as an exciting first step toward exploring contracts in other combinatorial structures, moving beyond the established paradigm.
    \item Extension to multiple outcomes:
While earlier studies on combinatorial actions 
focused on binary-outcome models (or, equivalently, linear contracts beyond binary-outcome), we extend this framework to incorporate multiple outcomes (even beyond linear contracts), providing a richer and more realistic model.
    \item Sequential and adaptive decision-making:
Our model allows the agent to take actions sequentially, adapting his subsequent decisions based on previously realized outcomes. This is a significant departure from previous work, where agents executed all selected actions simultaneously without such adaptive flexibility.
\end{itemize}

\subsubsection{Our Results}
We provide structural and complexity results for the optimal contract under independent and correlated actions, with respect to both linear and general contracts.

\paragraph{The agent's best-response problem.}
Before addressing the principal's optimal contract problem, we first explore the agent's best-response problem: given a contract, determining the agent's optimal (sequential) strategy. A key observation here is that the agent's best-response problem reduces to solving a Pandora's Box instance --- an online search problem with exploration costs \citep{weitzman1978optimal}. For further details, see Section~\ref{sec:pandora}.

\begin{observation}
    The agent's problem --- finding a strategy that maximizes his utility given a contract designed by the principal --- reduces to the Pandora's Box problem. 
\end{observation}

Notably, while this is a fundamental observation, the results for the agent's best-response problem do not directly imply either positive or negative results for the principal's problem.

\citet{weitzman1978optimal} provides a concise description of the optimal strategy for the Pandora's Box problem in cases where the values are independently distributed. This solution directly applies to the agent's strategy in the independent-action setting (but not in the correlated-action setting). We now describe the implications of Weitzman's algorithm for the agent's problem using our terminology.
Given a contract (specifying a payment for every outcome), the agent's best response involves calculating a {\em reservation value} for every action $i$, based on its cost and the induced payment distribution. The agent then executes actions in non-decreasing order of these reservation values, halting when the highest payment across all revealed outcomes exceeds (or meets) the subsequent reservation value.
Our results for the independent-action model rely on analyzing how the reservation values depend on the contract proposed by the principal.

We are now ready to introduce our results for the (principal's) optimal contract problem.

\paragraph{Independent-action, linear contracts.} Our first result for the independent-action model is a poly-time algorithm that finds the optimal linear contract.

\vspace{0.1in}
\noindent {\bf Theorem} {(Linear contracts, Theorem~\ref{thm:polytime_linear_contract})}
    In the independent action model, the optimal linear contract can be computed in polynomial (in $n$ and $m$) time.
\vspace{0.1in}

{\bf Proof sketch.} 
Recall that a linear contract is defined by a parameter $\alpha\in [0,1]$. Consequently, each action's reservation value is expressed as a function of $\alpha$. To prove this theorem, we show that this function is a convex, piecewise-linear function with at most $m$ segments. From this, we deduce that the agent's best response may change only a polynomial number of times. By identifying the {\em transition values} of $\alpha$ --- the values at which the agent's best response shifts --- we obtain a polynomially bounded set of candidates for the optimal linear contract, from which the best option can be selected.

\paragraph{Independent-action, general contracts.} 
We now turn our attention to general contracts. On the positive side, we devise an algorithm that computes the optimal general contract in polynomial time, assuming $m$ (the number of outcomes) is constant.

\vspace{0.1in}
\noindent {\bf Theorem} {(General contracts with constant $m$, Theorem~\ref{thm:opt_general_contract})}
    In the independent action model for a constant number of outcomes $m$, the optimal general contract can be computed in polynomial time.
\vspace{0.1in}

{\bf Proof sketch.}
Computing an optimal general contract in a combinatorial setting is an entirely unexplored problem.
Previous approaches to computing optimal (general or linear) contracts can be categorized into two methods: (1) LP-based: iterating over all possible actions and using linear programming to compute the best contract that incentivizes each action, or (2) Iterating over critical points: for linear contracts, identifying critical points in the single-dimensional contract space where the agent's best response changes and selecting the optimal point. 
However, neither method extends to general contracts in combinatorial settings. The LP-based approach fails due to the exponential number of agent strategies to consider.
Iterating over critical points is inapplicable because general contracts involve optimizing over a multi-dimensional space, lacking monotonicity properties that have been essential in previous work on linear contracts.

To address these challenges, we propose a novel approach that combines key elements from both existing methods. Specifically, we first characterize the agent strategies that can be incentivized by a contract (drawing inspiration from the critical-point approach) and then identify the optimal contract to incentivize them (building on ideas from the LP-based method).
In our approach, we think of a contract as a vector in $\reals_{\ge 0}^m$ and construct a hyperplane arrangement of polynomial size, ensuring that within any face of the arrangement, the set of all strategies maximizing the agent's utility remains unchanged.
Analogous to the LP-based method, we show that the optimal contract is located at a vertex of the hyperplane arrangement. Additionally, inspired by the critical-point method and utilizing the fact that the hyperplane arrangement is of polynomial size in a vector space of a constant dimension, we deduce that it has a polynomial number of vertices. By iterating over these vertices, we can efficiently identify the optimal contract.

On the negative side, we show that computing the optimal general contract for an arbitrary number of outcomes is NP-hard. This is done using a reduction from 
\textsc{Partition}.

\vspace{0.1in}
\noindent {\bf Theorem} {(General contracts with arbitrary $m$, Theorem~\ref{thm:independent_np_hard})}
    In the independent action model for  a general number of outcomes $m$, the optimal general contract problem is NP-hard.
\vspace{0.1in}

\paragraph{Independent-action, linear vs. general contracts.} 

In Appendix~\ref{app:independent_lin_opt_ratio}, we show that, as in the classical setting studied by \citet{dutting2019simple} (where the agent is restricted to a single action), the gap between the principal's utility under the best linear contract and the best general contract can be as large as 
$\Omega(n)$.
Additionally, we establish an upper bound of $O(n^2m)$ on this gap by leveraging our analysis of linear contracts. Specifically, we show that linear contracts admit at most $O(n^2m)$
agent best responses, which, as shown by \citet{dutting2019simple}, provides an upper bound on the gap between the best linear contract and the best general contract (or even the social welfare).

\paragraph{The correlated-action model.}
In the correlated-action model, we establish a hardness of approximation result, showing that even in the binary-outcome case (where a project can either succeed or fail) it is NP-hard to approximate the optimal contract within any constant. This hardness also rules out a poly-time approximation of the optimal linear contract, since in the binary-outcome case the optimal contract is linear.

Notably, the NP-hardness established by \citet{chawla2020pandora} for approximating the Pandora's Box problem with correlated boxes applies to the agent's problem, but does not in itself imply NP-hardness for the (principal's) optimal contract problem, which is what the following theorem shows.

\vspace{0.1in}
\noindent {\bf Theorem} {(Theorem~\ref{thm:corr_contract_hardness})}
    In the correlated action model, it is NP-hard to approximate the optimal (linear or general) contract within any constant, even in the binary-outcome case.
\vspace{0.1in}

{\bf Proof idea.}
We establish the hardness of the 
optimal contract problem by observing that, in the binary-outcome setting, 
the complex multi-dimensional distribution of the actions' stochastic outcomes can be effectively represented by a set function $f$,
which maps each set of actions to the probability that at least one leads to a project success. We refer to this class of functions ``correlated OR'' and show its equivalence to the class of coverage functions (an equivalence that may be of independent interest).
Our hardness result builds on an NP-hard promise problem with respect to coverage functions, introduced in \citep{ezra2023approximability}. By leveraging the equivalence between coverage functions and ``correlated OR'' functions, we construct a reduction from this problem to ours.

\subsection{Related Work}
\label{sec:related}
\paragraph{Comparison to recent related work on multi-action contract design.}

Our model is closely related to two recent studies in the algorithmic contract design literature. 

The first is \cite{combinatorial-contracts}, where the agent can choose to take any subset of his $n$ available actions.
The main difference between our model and this one is that, in that model, the agent makes a static decision regarding the set of actions to take, whereas in our model, the agent takes actions sequentially, and can dynamically adjust his actions based on history. The main result in \cite{combinatorial-contracts} is a poly-time algorithm for the optimal (linear) contract when $f$ is gross-substitutes. 
This model has been further studied by \citet{vuong2023supermodular} and \citet{duetting2023combinatorial}, who generalize the algorithmic result \citet{combinatorial-contracts} from gross-substitutes $f$ to any class of functions that admits an efficient algorithm for the agent's best-response and poly-many ``breakpoints'' in the agent's best-response.
\citet{ezra2023approximability} provide hardness of approximation results for this setting. 

The second is the one proposed by \citet{contract_pandora} who, concurrently and independently to us, study a model where the principal delegates a Pandora's Box problem. 
This model and ours are incomparable. 
On the one hand, in their model, the principal gains more information about the agent's actions compared to our assumptions. 
On the other hand, \citet{contract_pandora} allow the agent to have his own inherent rewards for outcomes.
Indeed, in their paper they show a poly-time algorithm for the optimal contract in the case where all agent's rewards are zero, whereas we show this problem to be NP-hard under our principal's information assumptions.
\citet{contract_pandora} also show a polynomial time algorithm for computing the optimal linear contract in their model, as well as poly-time algorithms for computing the optimal general contract under different assumptions, namely the case of binary-value boxes, and the case of i.i.d. boxes with a single positive prize to the principal. 

\paragraph{Delegated search settings.}

Also relevant to our work are delegated search models considered in \citep{KleinbergK18}. In the most general form, in these models the agent draws samples of outcomes from known distributions and chooses one of the drawn outcomes, which the principal can then accept or reject. These outcomes yield different values to the principal and the agent, misaligning their incentives. To incentivize the agent, rather than designing a contract, \citet{KleinbergK18} assume the principal sets a condition on which outcomes the agent may choose, and show simple mechanisms by the principal that achieve a constant-factor approximation for various models in this setting.

Other studies of delegated search and experimentation include \citet{postl2004delegated}, who considers a two-alternative delegated search setting, \citet{guo2016dynamic}, who considers a delegated experimentation model where the principal offers the agent a menu of strategies from which the agent chooses one, and \citet{mauring2016two}, considering a case where two players play out a search strategy together, one revealing alternatives and one selecting their own preferred revealed outcome.

\paragraph{The Pandora's Box problem. } 
A closely related problem to the agent's problem is the Pandora's Box problem, introduced by \citet{weitzman1978optimal}, describing optimal search with costly inspection. In the Pandora's Box problem there are $n$ boxes each with a stochastic hidden value and an associated cost. The decision maker opens the boxes sequentially. Upon opening a box, the hidden value is revealed and the associated cost is incurred. The goal is to maximize the expected maximum value revealed minus the expected total cost. \citet{weitzman1978optimal} showed a simple optimal strategy for this problem, when the hidden values are distributed independently. 

\citet{chawla2020pandora} showed that in the Pandora's Box problem with correlated values, the optimal non-adaptive strategy is NP-hard to approximate within any constant, even with fully adaptive strategies. For a modified version in which one seeks to minimize the expected value plus total cost they show a poly-time constant-factor approximation algorithm for the optimal partially-adaptive strategy. Further work on the correlated values model focuses on the minimization variant and advances our understanding of approximation algorithms for this setting \citep{chawla2023approximating, gergatsouli2023weitzmans}.

\paragraph{Other combinatorial contract settings. }
Another combinatorial model of contracts is the multi-agent setting, where the principal interacts with multiple agents.
This model was first introduced by \citet{BabaioffFNW12}, who focused on the binary-action, binary-outcome scenario, where agents have their own individual outcomes and some boolean function maps these to the project's outcome. \citet{EmekF12} study this model in the special case where the project succeeds when at least one agent succeeds. 
This model was later extended by \citet{multi-agent-contracts}, who consider a model where the success of the project is determined by a function $f: {\{0,1\}}^{[n]}\rightarrow [0,1]$, mapping the agent's actions to a probability of success. \citet{CastiglioniM023} study a multi-agent setting in which each agent has his own outcome, and the principal's reward depends on all individual outcomes. \citet{dasaratha2023equity} consider a multi-agent setting with graph-based reward functions and continuous effort levels. \citet{cacciamani2024multi} study a multi-agent setting where the entire contract instance (which is exponential in size) is given as input, and study questions such as utility guarantees and randomized contracts.

Additional literature on combinatorial contracts includes settings with exponentially many outcomes \citep{dutting2021complexity} and settings with multiple principals \citep{AlonLST21}.

\section{Model and Preliminaries} \label{sec:model}

We consider a principal-agent setting where the principal delegates the execution of a project to an agent.
In order to execute the project, the agent can take actions in a sequential way from a set $\actions$ of $n$ possible actions, where each action $i \in \actions$ is associated with a cost $c_i \geq 0$.
The execution of the project 
leads to an outcome in a set $\outcomes$ of $m$ possible outcomes, where each outcome $j \in \outcomes$ is associated with a reward for the principal $r(j) \ge 0$.  We assume without loss of generality that $r(1) \le \dots \le r(m)$, and that $r(1) = 0$. We sometimes refer  to $1$ as \emph{the zero outcome}.

Actions lead to outcomes stochastically. Specifically each action $i \in \actions$ is associated with a random variable $X_i$ that attains values within $\outcomes$, denoting that action's outcome.
The variables $\{X_i\}_{i \in \actions}$ are distributed according to some
joint distribution $\mathcal{F}$, and are either independent or correlated. The case where they are independent is referred to as the \emph{independent action model} and the case where they may be correlated is referred to as the \emph{correlated action model}.
If they are independent, then the distribution of each $X_i$ is given by a vector $\{\prob_{ij}\}_{j \in \outcomes}$, where $p_{ij}$ is the probability that $X_i = j$.
If the actions are correlated, then for every realization vector $v \in \outcomes^{n}$, $\prob(v)$ denotes the probability that $v$ is realized.

The agent's choices are {\em hidden} from the principal, who can observe only the project's final outcome, for which she is rewarded.
To incentivize the agent to take actions, the principal designs a contract $t:\outcomes \rightarrow \mathbb{R}_{\ge 0}$ which maps each outcome to a non-negative payment to the agent.

\paragraph{Sequential model.}
In a sequential model, once a contract has been determined by the principal, the agent takes actions sequentially. At each point in time, the agent selects an action $i \in \actions$, incurs the associated cost $c_i$, and observes the outcome that is revealed as a result of this action (i.e., the value of $X_i$). The agent then decides whether to halt or continue with another action (that has not yet been taken).
When the agent halts, he selects one outcome from those revealed, and this selected outcome becomes the final outcome of the project, upon which the agent is paid. 

A strategy of the agent in this scenario, denoted $\strat$, determines which action to take at each point in time, and may choose to halt.
When halting, the agent chooses which of the obtained outcomes is the final project's outcome, for which the principal is rewarded and the agent is paid.
If the agent decides to halt before taking any action, then the zero outcome is the final outcome, which we assume is always revealed.
All decisions may depend on the instance, the contract, the actions taken so far, and their revealed outcomes.

Given an agent's strategy $\strat$, we denote by $S(\strat)$ the random variable depicting the set of actions taken by the agent, and by $\finaloutcome(\strat)$ the final outcome. 

The agent's utility under a contract $t$ and a strategy $\strat$ is the expected payment from the principal minus the expected sum of the costs of the actions taken, 
\[
u_A(t, \strat) = \E_{X_1, \dots, X_n \sim \mathcal{F}}[t(\finaloutcome(\strat)) - \sum_{i\in S(\strat)} c_i].
\]
The principal's utility under a contract $t$ and a strategy $\strat$  is the expected reward minus the expected payment, i.e.,
\[
u_P(t, \strat) = \E_{X_1, \dots, X_n \sim \mathcal{F}}[r(\omega(\strat)) - t(\omega(\strat))].
\]

\paragraph{Best response.} The agent's best response to a contract $t$, denoted $\bestresponse(t)$, is a strategy that maximizes the agent's utility, namely $\bestresponse(t) \in \argmax_{\strat} u_A(t, \strat)$. As standard in the literature, the agent breaks ties in favor of the principal's utility. 
When both the agent and the principal are indifferent, ties are broken arbitrarily and consistently\footnote{Since the set of strategies is finite, one can impose a full order over the set, thus consistently is well defined.}.

We let $u_P(t)$ denote the principal's utility under the contract $t$ given that the agent best responds, i.e., $u_P(t) = u_P(t, \bestresponse(t))$.
Given a family of contracts $\cclass$, the principal's problem is to find a contract $t\in \cclass$ that maximizes $u_P(t)$.

\paragraph{Linear contracts.} 
A prevalent class of contracts is {\em linear contracts}, where the principal compensates the agent with a constant fraction of the reward.
A linear contract is given by a parameter $\alpha \in [0,1]$, where the payment is $t(j) = \alpha \cdot r(j)$. We denote a linear contract with parameter $\alpha$ by $t_\alpha$.

\paragraph{Approximation algorithms. } 
A $\beta$-approximation algorithm for the optimal contract returns a contract $t$ such that $u_P(t) \ge \beta \cdot u_P(t^\star)$, where $t^\star$ is an optimal contract.

\subsection{Pandora's Box Problem}
\label{sec:pandora}

A closely related problem is the well known Pandora's Box problem, introduced in \citep{weitzman1978optimal}. This problem models a search for the best alternative under uncertainty, when inspections are costly. In the Pandora's Box problem there is a set $\boxes$ of $n$ boxes. Each box $i\in \boxes$ is associated with an inspection cost $c_i \ge 0$ and a random value $V_i$. 
The variables $\{V_i\}_{i\in \boxes}$ are distributed according to a 
joint distribution $\mathcal{F}$.

At each point in time, a decision maker selects a box $i\in \boxes$ to open, incurring cost $c_i$, and revealing the value hidden inside (the value of the random variable $V_i$). The decision maker then decides whether to halt, in which case he obtains the maximum value revealed until this point, or continue and open another (yet-unopened) box. 

A strategy for a problem instance of Pandora's Box defines which box to open next and when to halt. For a given strategy $\strat$, let $S(\strat)$ denote the (random) set of boxes opened, and $V(\strat) = \max_{i\in S(\strat)}V_i$ denote the (random) maximum value observed by $\strat$, this is interpreted as $0$ if the strategy opens no boxes. The strategy's expected utility is given by
\[
u(\pi) = \E_{\{V_i\}_{i\in \boxes} \sim \mathcal{F}}[V(\pi) - \sum_{i\in S(\pi)}c_i].
\]

\paragraph{Relationship between the agent's problem and Pandora's Box.} The agent's problem in our sequential model, i.e., finding the best response $\bestresponse(t)$ to a contract $t$, 
relates
very naturally to an instance of the Pandora's Box problem. 
Indeed, given a contract $t$, to maximize his utility, the agent seeks to solve the Pandora's Box problem instance in which the set of boxes corresponds to the set of actions, the cost of box $i$ equals the cost of action $i$, and the value of box $i$ is $V_i = t(X_i)$, i.e., the payment for the outcome of the action associated with box $i$. We call this instance of the Pandora's Box problem the \emph{induced Pandora's Box instance}. Note that solving the induced Pandora's Box instance doesn't necessarily solve the problem of the agent's best response since, when more than one strategy maximizes the agent's utility, it may not be trivial which of them leads to the best principal's utility.

\section{Independent Actions} \label{sec:independent}
In this section, we consider the independent action model. 

We begin our analysis, in Section~\ref{sec:independent_best_response}, by characterizing the agent's optimal strategies (strategies which maximize his utility) under a given contract in the independent-action setting. This characterization gives insight into how these optimal strategies change when the contract changes. Additionally, we show that it is possible to compute the principal's utility from a contract in polynomial time.  

In Section~\ref{sec:opt_linear}, we present our first result: a polytime algorithm for the optimal linear contract.
In Section~\ref{sec:independent_general_positive} we consider the problem of computing the best general contract and show that this can be done in polynomial time in cases where $m$ is constant. In Section~\ref{sec:independent_general_negative} we show that without the restriction of $m$ being constant, the problem of computing the best general contract is NP-hard.

Additionally, in Appendix~\ref{app:independent_lin_opt_ratio} we quantify the worst-case ratio between the principal's utility under general contracts and linear contracts, and show that this ratio can be as high as $\Omega(n)$ but is upper bounded by $O(n^2 m)$.

\subsection{The Agent's Optimal Strategies} \label{sec:independent_best_response}
We call any strategy $\strat$ that maximizes the agent's utility in response to a contract $t$ an \emph{optimal strategy}. As noted in Section~\ref{sec:model}, the agent's optimal strategies are the optimal strategies in the induced Pandora's Box instance. \citet{weitzman1978optimal} gives a simple description of these strategies in the independent-action model.

\paragraph{Weitzman's algorithm.} 
\citet{weitzman1978optimal} gives a simple description of all optimal strategies in a Pandora's Box problem instance where the variables $\{V_i\}_{i\in \boxes}$ are independent (i.e., for every $i\in \boxes$, $V_i$ is sampled independently from a probability distribution $\mathcal{F}_i$). 
At the core of Weitzman's algorithm is the notion of a reservation value of box $i$, defined as the value $z_i$ satisfying
\begin{equation} \label{eq:res_value}
\E_{\{V_i\}_{i\in \boxes} \sim \mathcal{F}}[(V_i - z_i)^+] = c_i,
\end{equation}
where $(X)^+$ denotes $\max(X, 0)$. When $c_i = 0$, we define $z_i = \infty$. 

\citet{weitzman1978optimal} shows that any optimal strategy opens the boxes in non-increasing order of $z_i$, until the maximum value revealed so far\footnote{We interpret the maximum revealed value to be $0$ if no boxes have been opened yet.} exceeds (or, possibly, meets) the next reservation value, or until all boxes have been opened. Furthermore, any strategy satisfying these conditions is optimal.
More formally,
\begin{proposition} [\cite{weitzman1978optimal}] \label{prop:pandora_opt_strat}
    A strategy $\strat$ of the Pandora's Box problem is optimal if and only if the following conditions are always satisfied at each point in time: 
\begin{enumerate}
    \item When $\strat$ opens box $i\in \boxes$, it holds that $z_i \ge z_{i'}$ for any unopened box $i'\in \boxes$.
    \item Let $v$ be the maximum value revealed so far, and let $i\in \boxes$ be an unopened box of maximum $z_i$.  If $v > z_i$, $\strat$ halts, and if $v < z_i$, $\strat$ doesn't halt.
\end{enumerate}
\end{proposition}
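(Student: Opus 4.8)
The plan is to follow Weitzman's ``fair cap'' reduction to a cost-free search problem. For each box $i$, let $z_i$ be its reservation value from~\eqref{eq:res_value} and set $\kappa_i \coloneqq \min(V_i, z_i)$ (with $\kappa_i = V_i$ when $z_i = \infty$); rewriting~\eqref{eq:res_value} gives $\E[\kappa_i] = \E[V_i] - c_i$. The one computation everything rests on is the inequality
\[
\E[\max(v, V_i)] - c_i \;\le\; \E[\max(v, \kappa_i)],
\]
valid for every box $i$ and every threshold $v \ge 0$ independent of $V_i$, with equality whenever $v \le z_i$. I would prove it by writing $\max(v,V_i) - \max(v,\kappa_i) = (V_i - v)^+ - (\kappa_i - v)^+$: this difference vanishes on $\{V_i \le z_i\}$ (where $\kappa_i = V_i$) and equals $(V_i - v)^+ - (z_i - v)^+ \le V_i - z_i$ on $\{V_i > z_i\}$, so its expectation is at most $\E[(V_i - z_i)^+] = c_i$, which is tight precisely when $v \le z_i$.

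Next I would amortize this over an arbitrary strategy $\strat$. Expanding the running maximum, $V(\strat) = \sum_{t \ge 1}(V_{i_t} - \sigma_{t-1})^+$, where $i_1, i_2, \dots$ are the boxes opened and $\sigma_{t-1}$ is the best value seen before step $t$ (with $\sigma_0 = 0$), so
\[
u(\strat) = \E\!\left[\sum_{t\ge 1}\bigl((V_{i_t} - \sigma_{t-1})^+ - c_{i_t}\bigr)\right].
\]
At step $t$, the identity of $i_t$ and the value $\sigma_{t-1}$ are determined by the first $t-1$ openings, and --- this is the only place independence enters --- $V_{i_t}$ is independent of that history since box $i_t$ is still closed. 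Conditioning on the history and applying the inequality above with $v = \sigma_{t-1}$ bounds each step's expected contribution by $\E[(\kappa_{i_t} - \sigma_{t-1})^+ \mid \text{history}]$; summing, and using that the capped running maximum never exceeds $\sigma_{t-1}$, I get $u(\strat) \le \E[\max_{i \in S(\strat)}\kappa_i] \le \E[\max_{i \in \boxes}\kappa_i]$. So $\E[\max_{i\in\boxes}\kappa_i]$ upper-bounds the value of every strategy.

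To finish the ``if'' direction I would check that a strategy obeying Conditions 1 and 2 meets this bound. Condition 1 makes it open boxes in non-increasing $z_i$ order, and Condition 2 makes it open $i_t$ only while $\sigma_{t-1}$ is below the largest remaining reservation value, which is $z_{i_t}$ --- so $\sigma_{t-1} \le z_{i_t}$ at every step and the key inequality is tight at every step. Moreover, when such a strategy halts after opening a set $S$, the incumbent weakly exceeds $z_j$, hence $\kappa_j$, for every closed box $j$, and since every box opened before the last revealed a value below its own reservation value (again by Condition 2) one checks $\max_{i\in S}\kappa_i \ge \kappa_j$ too; thus $\max_{i\in S}\kappa_i = \max_{i\in\boxes}\kappa_i$, both collapsing steps above become equalities, and $u(\strat) = \E[\max_{i\in\boxes}\kappa_i]$, which no strategy can beat.

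For the converse, an optimal strategy must make every inequality in the amortization an equality. Tightness of the key inequality at each opening forces $\sigma_{t-1} \le z_{i_t}$, which is the ``don't halt while $v<z_i$'' half of Condition 2 together with the content of Condition 1; the ``halt once $v>z_i$'' half holds because opening a box whose reservation value lies below the incumbent has strictly negative marginal value (here a finite $z_i$ forces $\Pr[V_i > z_i] > 0$, making $\E[(V_i - v)^+] - c_i < 0$ for $v > z_i$); and matching $\max_{i\in S(\strat)}\kappa_i$ with $\max_{i\in\boxes}\kappa_i$ forces the $z$-order of Condition 1, since under independence an out-of-order closed box attains, with positive probability, a cap not matched by the opened set. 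I expect this converse to be the main obstacle: one has to chase the equality cases carefully and handle the degenerate cost-zero boxes ($z_i = \infty$) and the ties at $v = z_i$, for which the optimal strategies coincide with those described by Conditions 1--2 only under the consistent tie-breaking rule fixed in the model. The ``if'' direction, by contrast, is clean and self-contained via the cap inequality, and it is the part I would write out in full first.
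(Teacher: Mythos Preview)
The paper does not prove this proposition at all: it is stated as a result of \cite{weitzman1978optimal} and used as a black box, so there is no ``paper's own proof'' to compare against. Your write-up is thus not a reconstruction of anything in the paper but an independent proof of Weitzman's theorem via the capped-values (``fair cap'') argument, which is a standard and correct modern route to this result.

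On the substance of what you wrote: the ``if'' direction is clean and correct as sketched --- the telescoping $\sigma_t = \sigma_{t-1} + (V_{i_t}-\sigma_{t-1})^+$, the per-step cap inequality, and the observation $\tau_{t-1}\le\sigma_{t-1}$ (so that $(\kappa_{i_t}-\sigma_{t-1})^+\le(\kappa_{i_t}-\tau_{t-1})^+$ telescopes to $\max_{i\in S(\strat)}\kappa_i$) all go through. For the converse, your identification of the equality cases is mostly right, but the claim that ``matching $\max_{i\in S(\strat)}\kappa_i$ with $\max_{i\in\boxes}\kappa_i$ forces the $z$-order of Condition~1'' is not quite the right lever: that almost-sure equality constrains \emph{which} boxes eventually get opened, not the order among them. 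The order is pinned down instead by combining per-step tightness ($\sigma_{t-1}\le z_{i_t}$) with the fact that opening a lower-$z$ box first can, with positive probability, reveal a value strictly between the two reservation values, after which opening the higher-$z$ box is both mandatory (to keep $\max\kappa$ matched) and strictly loss-making (since $\sigma>z$ there). You flagged the converse as the hard part, and it is; just be aware that the ordering argument needs this two-step contradiction rather than the single equality you cite.
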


This implies that an agent's strategy is optimal if and only if it satisfies the conditions of Proposition~\ref{prop:pandora_opt_strat} in the induced Pandora's Box instance and upon halting chooses a revealed outcome that maximizes the agent's payment. More formally, for any action $i\in \actions$, we denote by $z_i(t)$ the reservation value of box $i$ in the Pandora's Box instance induced by a contract $t$, and get the following:
\begin{observation} \label{obs:agent_opt_strat_cond} 
    For any contract $t$, an agent's strategy $\strat$ is optimal if and only if the following conditions are always satisfied at each point in time:
    \begin{enumerate}
        \item When $\strat$ takes an action $i\in \actions$ it holds that $z_i(t) \ge z_{i'}(t)$ for any untaken action $i'\in \actions$.
        \item Let $j\in \outcomes$ be a revealed outcome of maximum payment, and let $i\in \actions$ be an untaken action of maximum $z_i(t)$. If $t(j) > z_i(t)$, $\strat$ halts, and if $t(j) < z_i(t)$, $\strat$ doesn't halt.
        \item Upon halting, it holds that $t(\omega(\strat)) \ge t(j)$ for any revealed outcome $j$ (recall that $\omega(\strat)$ is the final outcome chosen by $\strat$, see Section~\ref{sec:model}).
    \end{enumerate}
\end{observation}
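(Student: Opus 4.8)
Fix a contract $t$ and pass to the induced Pandora's Box instance: $n$ boxes, box $i$ with cost $c_i$ and value $V_i = t(X_i)$. Since we are in the independent-action model the $X_i$ are independent, hence so are the $V_i$, so Proposition~\ref{prop:pandora_opt_strat} applies; moreover, substituting $V_i = t(X_i)$ into the defining equation~\eqref{eq:res_value} shows that the Weitzman reservation value of box $i$ in this instance is exactly $z_i(t)$ (with the convention $z_i(t)=\infty$ when $c_i = 0$). The plan is to transfer Proposition~\ref{prop:pandora_opt_strat} across this correspondence, the only real work being to account for the agent's additional freedom to choose which revealed outcome becomes the final one.

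To set up the correspondence, given an agent strategy $\strat$ let $\hat\strat$ be the Pandora's Box strategy that makes the same open/halt decisions. Then $S(\hat\strat) = S(\strat)$ and $V(\hat\strat) = \max_{i \in S(\strat)} V_i = \max_{i \in S(\strat)} t(X_i)$. Since on halting the agent picks $\finaloutcome(\strat)$ among the revealed outcomes, we always have $t(\finaloutcome(\strat)) \le \max_{i\in S(\strat)} t(X_i)$, with equality on a given realization exactly when condition~3 holds there; taking expectations, $u_A(t,\strat) \le u(\hat\strat)$, with equality iff condition~3 holds almost surely. Conversely, any Pandora's Box strategy $\hat\strat$ lifts to an agent strategy $\strat$ with the same decisions which, on halting, selects a revealed outcome of maximum payment; this $\strat$ satisfies condition~3 by construction and $u_A(t,\strat) = u(\hat\strat)$. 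Combining the two directions, $\max_\strat u_A(t,\strat) = \max_{\hat\strat} u(\hat\strat) =: \mathrm{OPT}$.

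Next I would prove the two implications. $(\Leftarrow)$ If $\strat$ satisfies conditions~1--3: condition~3 gives $u_A(t,\strat) = u(\hat\strat)$, and conditions~1--2 are precisely the two conditions of Proposition~\ref{prop:pandora_opt_strat} for $\hat\strat$ in the induced instance (using that ``the maximum value revealed so far'' equals the maximum payment over revealed outcomes, and that the reservation values are the $z_i(t)$), so $\hat\strat$ is Pandora's-Box-optimal and $u(\hat\strat) = \mathrm{OPT}$; hence $u_A(t,\strat) = \mathrm{OPT}$ and $\strat$ is optimal. $(\Rightarrow)$ If $\strat$ is optimal, then $u_A(t,\strat) = \mathrm{OPT}$; since $u_A(t,\strat) \le u(\hat\strat) \le \mathrm{OPT}$ from the previous paragraph, both inequalities are tight, so condition~3 holds almost surely and $\hat\strat$ attains $\mathrm{OPT}$, i.e.\ is Pandora's-Box-optimal; Proposition~\ref{prop:pandora_opt_strat} then yields conditions~1--2 for $\hat\strat$, which are conditions~1--2 for $\strat$.

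The main obstacle is not deep but is the crux: disentangling the agent's outcome-selection decision from the box-opening decisions. The observation that makes it go through is that selecting anything other than a maximum-payment revealed outcome can only lose utility, so an optimal agent never does it; once condition~3 is forced, the agent's problem is literally the induced Pandora's Box problem and Weitzman's characterization applies verbatim. Minor points to handle carefully are the interpretation of ``at each point in time'' as ranging over histories reachable (with positive probability) under $\strat$, so that ``almost surely'' and ``always'' coincide for the statement, and the $c_i=0$ edge case where $z_i(t)=\infty$ forces each such box to be opened, consistently on both sides; one also uses that halting before any action yields the zero outcome, matching the Pandora's Box value $0$ when no box is opened.
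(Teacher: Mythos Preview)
Your proposal is correct and follows exactly the approach the paper has in mind: the paper does not give a formal proof but states the observation as an immediate consequence of Weitzman's characterization (Proposition~\ref{prop:pandora_opt_strat}) applied to the induced Pandora's Box instance, together with the remark that upon halting the agent must pick a maximum-payment revealed outcome. Your write-up is simply a careful unpacking of this one-line deduction, making explicit the correspondence $\strat\mapsto\hat\strat$ and the inequality $u_A(t,\strat)\le u(\hat\strat)$ that isolates condition~3.
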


\paragraph{Computing the principal's utility.}
We now turn to the task of computing the principal's utility from a given contract. Specifically, we show the following proposition.
\begin{restatable}{proposition}{princutility}
\label{prop:principal_utility}
The principal's utility from a contract is computable in polynomial time.
\end{restatable}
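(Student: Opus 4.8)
The plan is to compute the principal's utility $u_P(t) = u_P(t,\bestresponse(t))$ by first pinning down a canonical best response and then evaluating its expected reward and expected payment via a dynamic program over actions sorted by reservation value.

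First I would compute, for each action $i$, its reservation value $z_i(t)$ in the induced Pandora's Box instance. Since $V_i = t(X_i)$ takes only the $m$ values $t(1),\dots,t(m)$ with probabilities $p_{i1},\dots,p_{im}$, the function $z \mapsto \E[(V_i - z)^+]$ is an explicit continuous, piecewise-linear, decreasing function; solving $\E[(V_i-z)^+]=c_i$ for $z_i(t)$ is therefore a matter of locating the right linear segment and solving a linear equation, all in $\mathrm{poly}(m)$ time (with $z_i(t)=\infty$ when $c_i=0$). Sort the actions so that $z_1(t)\ge z_2(t)\ge\cdots\ge z_n(t)$. By Observation~\ref{obs:agent_opt_strat_cond}, every optimal strategy takes actions in this order, halts as soon as the best payment revealed so far exceeds the next reservation value, and upon halting picks a revealed outcome of maximum payment. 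The only source of ambiguity is ties (equalities in conditions 1 and 2), and since the agent breaks ties in favor of the principal and then consistently, I need to argue that the resulting principal's utility is the same regardless; I would handle this by observing that different tie-broken strategies lead to the same distribution over the pair (set of opened actions, best revealed payment) — ties in $z_i$ only reorder actions that are anyway all opened or all not opened at the relevant moment — so it suffices to evaluate the principal's utility of the canonical strategy that opens in sorted order and halts at the first strict or weak violation, with ties resolved by continuing (or stopping) as needed. Care here is the place I expect the main subtlety: formally showing the principal's utility is tie-break-invariant, or else showing how to efficiently select the tie-break that the model prescribes (favoring the principal).

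Given the canonical best response, I would compute $u_P(t)$ by a forward dynamic program. Process actions $1,\dots,n$ in sorted order while tracking the distribution of the ``current best payment'' $b \in \{t(1),\dots,t(m)\}$, which has support of size at most $m$; this is a Markov chain since each $X_i$ is independent. At step $k$, before opening action $k$ the state is a distribution over $b$; the strategy opens action $k$ exactly on the event $b < z_k(t)$ (using the appropriate tie-break on $b = z_k(t)$), and on the complementary event it has already halted with final payment $b$. Conditioning on opening, update $b \leftarrow \max(b, t(X_k))$ using $\{p_{kj}\}_j$. Because the halting outcome is always a revealed outcome of maximum payment, the final outcome's payment equals the terminal value of $b$, and — crucially — the final outcome's \emph{reward} is $r(j^\star)$ where $j^\star$ is (the canonical choice of) an outcome attaining that maximal payment; I would track alongside $b$ the induced distribution over which outcome index realizes it (again at most $m$ states, or $m^2$ if one tracks (payment, reward) pairs), so that both $\E[r(\omega)]$ and $\E[t(\omega)] = \E[b_{\text{final}}]$ can be read off. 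Accumulating the expected costs $\sum_i c_i \Pr[\text{action } i \text{ opened}]$ along the way is immediate. Each of the $n$ steps does $O(m^2)$ work, so the total running time is $\mathrm{poly}(n,m)$, and $u_P(t) = \E[r(\omega)] - \E[t(\omega)]$ is returned.

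The main obstacle, as noted, is not the dynamic program (which is routine once the state space is identified) but the tie-breaking analysis: ensuring that ``the'' principal's utility of the best response is well-defined and computable, i.e., that among all agent-optimal strategies the one maximizing the principal's utility can be identified by a local rule at each tie, and that this rule is consistent with a polynomial-time forward sweep. I would resolve this by arguing that at any moment where several untaken actions share the maximal reservation value $z$ and the current best payment equals $z$, all agent-optimal strategies are indifferent between halting now and opening any subset of those tied actions first, so the principal-favoring choice is simply to pick, among these finitely many equally-agent-optimal continuations, the one with the best principal payoff — and since the set of distinct ``tie configurations'' encountered along the sorted order is small, this selection can be folded into the dynamic program (or, more simply, one shows the principal's utility is in fact identical across all these continuations, making the point moot).
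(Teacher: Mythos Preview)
Your dynamic program for evaluating a \emph{fixed} agent-optimal strategy is fine; the gap is exactly where you locate it, and your proposed resolution does not work. The claim that ``ties in $z_i$ only reorder actions that are anyway all opened or all not opened'' is false. Take two actions $a,b$ with equal reservation value $z_a(t)=z_b(t)=5$, where $a$ deterministically reveals an outcome with payment $10$ and reward $100$, and $b$ deterministically reveals a different outcome with the same payment $10$ but reward $1$. Starting from best payment $0<5$, any optimal strategy must open one of them; after doing so the payment is $10>5$ and it must halt. Opening $a$ first yields final reward $100$ while opening $b$ first yields final reward $1$---both are agent-optimal, the principal's utilities differ by $99$, and in each case exactly one of the two tied actions is ever opened. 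So neither the opened set nor the final outcome is tie-invariant. Your fallback (``pick, among these finitely many equally-agent-optimal continuations, the one with the best principal payoff'') is not obviously polynomial once tied blocks are large and interact with the ``halt when $b=z$'' ties, and you give no argument that a local rule is globally principal-optimal.

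The paper supplies the missing idea: perturb the contract to $t^\varepsilon(j)=t(j)+\varepsilon\bigl(r(j)-t(j)\bigr)$. A one-line computation gives $u_A(t^\varepsilon,\pi)=u_A(t,\pi)+\varepsilon\,u_P(t,\pi)$, so maximizing the agent's utility under $t^\varepsilon$ is exactly the lexicographic objective ``maximize $u_A(t,\cdot)$, then among those maximize $u_P(t,\cdot)$''---the prescribed tie-break. Continuity of the reservation values lets one choose $\varepsilon>0$ (explicitly, in polynomial time) small enough that every $t^\varepsilon$-optimal strategy is also $t$-optimal; hence any Weitzman non-adaptive optimal strategy for $t^\varepsilon$ is a best response under $t$, and one evaluates $u_P(t,\cdot)$ on it with precisely your forward sweep. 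This replaces the combinatorial tie analysis you were attempting with a single perturbation.
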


There are a few challenges we face here. Firstly, the number of optimal strategies may be exponential, so finding an optimal strategy that also maximizes the principal's utility is non-trivial. Also,
despite the existence of very simple optimal strategies, some optimal strategies may be complex, and it is not clear if the principal's utility from a complex strategy can be computed efficiently. 

To address these challenges, for any contract $t$ and $\varepsilon>0$, we define the contract $\epsiloncontract$ as $\epsiloncontract(j) = t(j) + \varepsilon \cdot (r(j)-t(j))$. We then show that for a small enough $\varepsilon$, any optimal strategy under the contract $\epsiloncontract$, is also an optimal strategy under $t$. We continue to show that in such a case, an optimal strategy under the contract $\epsiloncontract$ also maximizes the principal's utility under $t$. 

We then restrict our attention to a simple type of agent's strategies which we call \emph{non-adaptive} strategies. In such strategies, the agent's choice of the order in which to take actions is fixed in advance, and his choice of when to halt depends only on the maximum revealed outcome so far. It is easy to see that non-adaptive optimal strategies exist (and are easy to find). The important property of non-adaptive strategies is that the principal's utility when the agent engages in one can be computed efficiently. 
Therefore, by finding a strategy $\strat$ that is a non-adaptive optimal strategy under the contract $t^\varepsilon$, we can compute $u_P(t)$ by computing $u_P(t, \strat)$.

For the full details, see Appendix~\ref{app:independent_principal_utility}.

\subsection{Optimal Linear Contracts}  \label{sec:opt_linear}
In this section, we prove that the optimal linear contract can be found in polynomial time. More formally, we prove the following theorem:

\begin{theorem} \label{thm:polytime_linear_contract}
    In the independent action model, the optimal linear contract can be computed in polynomial (in $n$ and $m$) time.
\end{theorem}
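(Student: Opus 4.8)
The plan is to analyze how the agent's reservation values — and hence the agent's optimal strategy — depend on the linear contract parameter $\alpha \in [0,1]$, argue that the principal's utility is a piecewise-nice function with polynomially many pieces, and then enumerate the breakpoints. The starting point is the characterization in Observation~\ref{obs:agent_opt_strat_cond}: under a linear contract $t_\alpha$, each action $i$ has a reservation value $z_i(t_\alpha)$ obtained by solving $\E[(t_\alpha(X_i) - z)^+] = c_i$, i.e. $\sum_{j \in \outcomes} p_{ij}(\alpha r(j) - z)^+ = c_i$. The key structural observation is that, for fixed $i$, the map $\alpha \mapsto z_i(t_\alpha)$ is convex and piecewise-linear in $\alpha$, with at most $m$ linear segments: the left-hand side $g_i(\alpha, z) = \sum_j p_{ij}(\alpha r(j) - z)^+$ is convex in $(\alpha, z)$ and piecewise-linear with breakpoints governed by the signs of $\alpha r(j) - z$, and solving $g_i(\alpha, z) = c_i$ for $z$ as a function of $\alpha$ inherits this structure (one segment per "active set" of outcomes $\{j : \alpha r(j) > z\}$, of which there are at most $m$ because the $r(j)$ are ordered). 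I would make this precise by writing, on the segment where exactly outcomes $j \ge k$ are "active," $z_i(t_\alpha) = \alpha \cdot (\sum_{j \ge k} p_{ij} r(j))/(\sum_{j \ge k} p_{ij}) - c_i / (\sum_{j\ge k} p_{ij})$, an explicit affine function of $\alpha$, and checking that as $\alpha$ increases the active set only grows, giving convexity and at most $m$ pieces.

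**Next I would** combine these $n$ convex piecewise-linear functions to bound the number of "transition values" of $\alpha$ at which the agent's optimal strategy can change. By Observation~\ref{obs:agent_opt_strat_cond}, the optimal strategy is determined by (a) the relative order of the $z_i(t_\alpha)$, and (b) for each subset-of-outcomes-revealed configuration, whether the best revealed payment $\alpha r(j)$ exceeds the next reservation value. Item (a) changes only at intersections of two curves $z_i(t_\alpha)$ and $z_{i'}(t_\alpha)$; since each is piecewise-linear with at most $m$ pieces, any two cross at most $O(m)$ times, for $O(n^2 m)$ order-change values total. Item (b) changes only when some $\alpha r(j)$ equals some $z_i(t_\alpha)$; again each pair contributes $O(m)$ crossings, for $O(nm^2)$ values. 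All these breakpoints are computable in polynomial time because each curve has an explicit polynomial-size description. Between consecutive breakpoints the set of agent-optimal strategies is fixed.

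**The remaining step** is to handle the principal's side. On each open interval between consecutive breakpoints, the agent's (tie-broken-in-favor-of-the-principal) best response follows a fixed non-adaptive optimal strategy $\strat$ (using the machinery behind Proposition~\ref{prop:principal_utility} — non-adaptive optimal strategies always exist), and $u_P(t_\alpha, \strat) = \E[r(\omega(\strat)) - \alpha r(\omega(\strat))] = (1-\alpha)\E[r(\omega(\strat))]$ is affine in $\alpha$ on that interval (the distribution of the final outcome under a fixed non-adaptive strategy does not depend on $\alpha$). An affine function on an interval is maximized at an endpoint, so the optimum over $[0,1]$ is attained at one of the polynomially many breakpoints (including $0$ and $1$). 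I would therefore evaluate $u_P(t_\alpha)$ at each breakpoint using Proposition~\ref{prop:principal_utility} and return the best, which also requires a little care to evaluate the utility exactly at a breakpoint where the best response is the principal-favorable tie-break — handled again by the $t^\varepsilon$-perturbation argument already developed for Proposition~\ref{prop:principal_utility}.

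**The main obstacle** I anticipate is the bookkeeping around ties and the exact behavior at breakpoints: at a transition value of $\alpha$ the agent may be indifferent among several strategies, and the paper's convention is that he then breaks ties for the principal, so one must argue that $u_P(t_\alpha)$ as a function of $\alpha$ is upper semicontinuous (or at least that its supremum over each interval, including the closed endpoints, is realized among the breakpoint evaluations). The clean way around this is exactly the $\epsiloncontract$ trick from Proposition~\ref{prop:principal_utility}: perturbing toward $r$ makes the agent-optimal set shrink to the principal-preferred strategies while staying agent-optimal for the original contract, so evaluating at (perturbations of) breakpoints suffices. Establishing the at-most-$m$-segments convexity of $\alpha \mapsto z_i(t_\alpha)$ cleanly is the other place where one must be careful, but it is essentially a one-variable convexity computation.
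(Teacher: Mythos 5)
Your proof follows essentially the same route as the paper's: establish that $\alpha \mapsto z_i(t_\alpha)$ is convex piecewise-linear with at most $m$ segments, bound the number of breakpoints (the paper's ``critical values'') by counting intersections of these curves with each other and with the lines $\alpha r(j)$, observe that between breakpoints the principal's utility is affine decreasing in $\alpha$ so the maximum sits at a breakpoint, and enumerate. The only cosmetic difference is that your count for the $z_i$-vs.-$\alpha r(j)$ intersections is $O(nm^2)$ where the paper tightens it to $O(nm)$ using convexity (a convex curve meets a line at most twice), but this does not affect the polynomial-time conclusion.
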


Our proof of Theorem~\ref{thm:polytime_linear_contract} relies on the concept of \emph{critical values}, introduced next.
\paragraph{Critical values. }
The concept of \emph{critical values} follows from
an analysis by \citet{dutting2019simple}. Put in terms of our problem, for any strategy $\strat$ we denote by $c(\strat) = \E[\sum_{i\in S(\strat)}c_i]$ the strategy's expected cost to the agent, and by $R(\strat) = \E[r(\finaloutcome(\strat))]$ the strategy's expected principal's reward. Under a linear contract $t_\alpha$ and a strategy $\strat$, the agent and principal's utilities are given by
\[
u_A(t_\alpha, \strat) = \alpha \cdot R(\strat) - c(\strat)~\text{and}~u_P(t_\alpha, \strat) = (1-\alpha) R(\strat)
\]

From this description of the principal's utility it follows that strategies that maximize $R(\strat)$ are the ones that maximize the principal's utility.
From the description of the agent's utility it follows that the agent's optimal strategies only changes when $\alpha = \frac{c(\strat_1) - c(\strat_2)}{R(\strat_1) - R(\strat_2)}$ for some two strategies $\strat_1, \strat_2$ with $R(\strat_1) \ne R(\strat_2)$. At such points of $\alpha$, the agent is indifferent between the strategies $\strat_1$ and $\strat_2$ (and possibly other strategies), and therefore breaks ties in favor of the highest expected reward, which, as one can verify, are the same strategies that maximize his utility to the right of $\alpha$. 
Due to consistent tie-breaking we get that agent's best response $\bestresponse(t_\alpha)$ only changes finitely many times and is right-continuous.

The values of $\alpha$ where the agent's best response changes are called the \emph{critical values}. 
An important note to stress that follows from this analysis is the following:
\begin{observation} \label{obs:crit_value_opt_strat} 
    If $\alpha$ is a critical value then $\bestresponse(t_\alpha)$ --- the agent's best response to the contract $t_\alpha$ --- wasn't an agent's optimal strategy under any linear contract $t_{\alpha'}$ where $\alpha' < \alpha$.
\end{observation}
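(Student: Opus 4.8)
The plan is to study the agent's optimal utility as a function of the linear parameter, $E(\alpha) := \max_{\strat} u_A(t_\alpha, \strat) = \max_{\strat}\bigl(\alpha R(\strat) - c(\strat)\bigr)$. Since there are finitely many strategies and each $\alpha \mapsto \alpha R(\strat) - c(\strat)$ is affine, $E$ is a convex, piecewise-linear function of $\alpha \in [0,1]$ with finitely many breakpoints. Two structural facts drive the argument. First, on the interior of any maximal linear piece of $E$ the set of utility-maximizing strategies is constant --- it is exactly the set of strategies whose line carries that piece (a strategy's line lies weakly below $E$, and if it touches $E$ at an interior point of a linear piece it must coincide with that piece throughout the piece). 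By the ``highest expected reward, then consistent'' tie-breaking rule, $\bestresponse(t_\alpha)$ is therefore constant on the interior of each linear piece, so every critical value $\alpha$ must be a breakpoint of $E$. Second, at a breakpoint $\alpha$, convexity forces the slope $R^-$ of the piece immediately to the left of $\alpha$ to be strictly smaller than the slope $R^+$ of the piece immediately to the right.

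Next I would identify $\strat^\star := \bestresponse(t_\alpha)$ at a critical value $\alpha$. It is a utility-maximizer at $\alpha$, and among the maximizers at $\alpha$ it has the largest expected reward. The maximizers at the breakpoint $\alpha$ are the strategies carrying the left piece together with those carrying the right piece, and by the second fact the latter have strictly larger reward; hence $R(\strat^\star) = R^+$, and the affine map $\alpha' \mapsto u_A(t_{\alpha'}, \strat^\star)$ is precisely the line supporting the right piece (which is consistent with the right-continuity of $\bestresponse$ already noted in the text).

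Finally I would compare this line with $E$ to the left of $\alpha$. Let $g(\alpha') := u_A(t_{\alpha'}, \strat^\star)$; it is affine, satisfies $g \le E$ everywhere, and $g(\alpha) = E(\alpha)$. On the left piece, $E$ is affine with slope $R^- < R^+$ and also passes through $(\alpha, E(\alpha))$, so a one-line computation gives $E(\alpha') - g(\alpha') = (R^+ - R^-)(\alpha - \alpha') > 0$ for every $\alpha'$ in that piece with $\alpha' < \alpha$. To extend the strict inequality to all $\alpha' < \alpha$, observe that the gap $E - g \ge 0$ is convex, so its zero set is an interval; this interval contains $\alpha$ and points immediately to its right yet omits the points immediately to its left, which is impossible unless it omits all of $[0,\alpha)$. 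Hence $u_A(t_{\alpha'}, \strat^\star) < E(\alpha') = \max_{\strat} u_A(t_{\alpha'}, \strat)$ for every $\alpha' < \alpha$, i.e. $\strat^\star$ is not an optimal strategy under $t_{\alpha'}$, as claimed. The only delicate step is making the first structural fact airtight --- in particular, ruling out a change of best response at a non-breakpoint, including the degenerate case where several distinct strategies share the same reward--cost pair $(R(\strat), c(\strat))$; once that is settled, the convexity bookkeeping is routine.
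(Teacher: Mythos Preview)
Your argument is correct and follows the same line as the paper, which only sketches the reasoning: the paper simply notes that tie-breaking in favor of the principal selects the highest-$R$ optimal strategy, which coincides with the strategy optimal immediately to the right of $\alpha$, and leaves the ``never optimal for any smaller $\alpha'$'' part implicit. One small imprecision worth flagging: at a breakpoint the set of maximizers can also contain strategies with slope strictly between $R^-$ and $R^+$ that carry neither adjacent piece, so your description of the maximizer set is incomplete---but your conclusion $R(\strat^\star)=R^+$ is unaffected, since any maximizer with slope exceeding $R^+$ would lie strictly above $E$ just to the right of $\alpha$.
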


It is easy to see that the optimal value of $\alpha$ is also a critical value since the principal seeks to minimize $\alpha$ while incentivizing the agent to engage in some strategy. 
By enumerating over the critical values and computing the principal's utility for each, we can choose the best among them and are guaranteed it maximizes the principal's utility.

Therefore, to prove Theorem~\ref{thm:polytime_linear_contract}, it suffices show that iterating over all critical values can be done in polynomial time (in $n$ and $m$). This is established in the following proposition.

\begin{proposition} \label{prop:polymany_critvalues}
    In the independent action model, there are $O(n^2 m)$ critical values of $\alpha$, and they are computable in polynomial time.
\end{proposition}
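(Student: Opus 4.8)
The plan is to exploit the structure of Weitzman's reservation value $z_i(t_\alpha)$ as a function of $\alpha$, and to show that the combinatorial ordering of these $n$ functions changes only polynomially often. First I would fix an action $i$ and examine the defining equation $\E[(t_\alpha(X_i) - z)^+] = c_i$, i.e. $\sum_{j \in \outcomes} p_{ij}\,(\alpha r(j) - z)^+ = c_i$. For fixed $\alpha$ the left-hand side is continuous, strictly decreasing in $z$ on the relevant range, so $z_i(\alpha)$ is well defined; the key observation is that, viewed as a function of $\alpha$, the map $\alpha \mapsto z_i(\alpha)$ is nondecreasing, piecewise-linear, and convex, with at most $m$ linear pieces. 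The breakpoints occur precisely at the (at most $m$) values of $\alpha$ where $z_i(\alpha)$ crosses one of the reward levels $\alpha r(j)$ — equivalently, where the set of outcomes $j$ with $\alpha r(j) > z_i(\alpha)$ changes by one element. Between consecutive breakpoints, the equation reduces to $\sum_{j \in J} p_{ij}(\alpha r(j) - z) = c_i$ for a fixed subset $J$, which is linear in $(\alpha, z)$ and hence defines $z$ as an affine function of $\alpha$; convexity follows because as $\alpha$ increases more outcomes enter $J$, making the left-hand side more sensitive to $z$ and thus flattening the slope — a short monotonicity argument. This gives at most $O(nm)$ breakpoints in total across all actions, all computable in closed form in polynomial time (for each action and each candidate active set $J$, solve a linear equation).

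Next I would argue that the agent's best response $\bestresponse(t_\alpha)$ can change only at: (a) a breakpoint of some $z_i(\cdot)$; (b) a crossing point where $z_i(\alpha) = z_{i'}(\alpha)$ for two actions $i \ne i'$ — since two piecewise-linear functions with $O(m)$ pieces each cross at most $O(m)$ times, this contributes $O(n^2 m)$ values; or (c) a value where the halting condition in Observation~\ref{obs:agent_opt_strat_cond} flips, i.e. where $t_\alpha(j) = z_i(\alpha)$ for some outcome $j$ and some action $i$ — but $t_\alpha(j) = \alpha r(j)$ is itself one of the reward-level lines already tracked in (a) as the breakpoints of $z_i$, so (c) is subsumed. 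The heart of the argument is that within any open interval of $\alpha$ avoiding all these $O(n^2m)$ candidate values, the relative order of the $z_i(\alpha)$'s is fixed, and the comparisons $\alpha r(j) \gtrless z_i(\alpha)$ are all fixed; by Observation~\ref{obs:agent_opt_strat_cond} the set of agent-optimal strategies is therefore constant on such an interval. Combined with consistent tie-breaking and right-continuity (as established in the discussion preceding the proposition), $\bestresponse(t_\alpha)$ is constant on each such interval, so every critical value lies in the candidate set, giving the $O(n^2m)$ bound.

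Finally, for the computability claim, I would note that all the candidate values are roots of explicit linear equations in $\alpha$ derived from the data $\{p_{ij}\}, \{r(j)\}, \{c_i\}$: the breakpoints of $z_i$ come from intersecting the affine pieces of $z_i(\cdot)$ with the lines $\alpha r(j)$, and the pairwise crossings come from equating affine pieces of $z_i(\cdot)$ and $z_{i'}(\cdot)$. Each requires only $O(m)$ or $O(nm)$ arithmetic operations, so the full list is produced in polynomial time; sorting it and deduplicating is also polynomial.

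The main obstacle I anticipate is the clean proof that $\alpha \mapsto z_i(\alpha)$ is convex piecewise-linear with only $m$ pieces — one must rule out the active set $\{j : \alpha r(j) > z_i(\alpha)\}$ oscillating (adding and later dropping an outcome) as $\alpha$ increases. This should follow from the monotonicity $z_i(\alpha') \le z_i(\alpha)$ for $\alpha' < \alpha$ together with $\alpha' r(j) < \alpha r(j)$: once $\alpha r(j) > z_i(\alpha)$ we need the same strict inequality to persist for all larger $\alpha$, which requires comparing the growth rate of the line $\alpha r(j)$ against that of $z_i(\alpha)$ on each piece — manageable, but the bookkeeping is the delicate part. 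A secondary subtlety is the degenerate case $c_i = 0$ (so $z_i = \infty$), which must be handled separately but is straightforward since such actions are always taken first and never trigger a change in best response as $\alpha$ varies.
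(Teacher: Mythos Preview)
Your approach is essentially identical to the paper's: establish that each $z_i(\alpha)$ is convex piecewise-linear with at most $m$ segments (the paper's Lemma~\ref{lem:res_piecwise_lin}), then argue (the paper's Lemma~\ref{lemma:crit_value_cond}) that critical values lie among the pairwise crossings $z_{i_1}(\alpha)=z_{i_2}(\alpha)$ and the crossings $z_i(\alpha)=\alpha r(j)$, and count. One small correction to your intuition: as $\alpha$ increases the active set $J=\{j:\alpha r(j)>z_i(\alpha)\}$ \emph{shrinks} (outcomes with the smallest rewards drop out in order), so the slope $\big(\sum_{j\in J}p_{ij}r(j)\big)/\big(\sum_{j\in J}p_{ij}\big)$ increases --- this is precisely where convexity comes from, and it also dissolves your anticipated obstacle about $J$ oscillating, since once an outcome leaves $J$ it never re-enters.
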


\begin{proof}
To prove Proposition~\ref{prop:polymany_critvalues} we characterize $z_i(\alpha)$ --- the reservation values as a function of $\alpha$ --- and show a relationship between $z_i(\alpha)$ and the critical values of $\alpha$. 

The characterization of $z_i(\alpha)$ is cast in the following lemma.
\begin{restatable}{lemma} {respiecwiselin}\label{lem:res_piecwise_lin}
    For every action $i\in \actions$, $z_i(\alpha)$ is a convex piecewise linear function with at most $m$ segments and is given by an efficiently computable closed-form expression. 
\end{restatable}

The proof of Lemma~\ref{lem:res_piecwise_lin} is deferred to the appendix (see Appendix~\ref{app:proof_lem_res_piecewise_lin}).

We now describe a relation between the values of $z_i(\alpha)$, and the critical points of $\alpha$.
\begin{restatable}{lemma} {critvaluecond}\label{lemma:crit_value_cond}
    If $\alpha'$ is a critical value, then one of the following holds:
    \begin{enumerate}
    \itemsep0em 
        \item For some two actions $i_1 \ne i_2$, it holds that $z_{i_1}(\alpha)$ and $z_{i_2}(\alpha)$ intersect at $\alpha'$.
        \item For some action $i \in \actions$ and outcome $j\in \outcomes$, it holds that $z_{i_1}(\alpha)$ intersects with $\alpha \cdot r(j)$ at $\alpha'$. 
    \end{enumerate}
\end{restatable}
Lemma~\ref{lemma:crit_value_cond} follows from an analysis of the conditions in Observation~\ref{obs:agent_opt_strat_cond}. The full details of the proof can be found in Appendix~\ref{app:proof_lemma_crit_value_cond}. 

Overall, we have that only points which that satisfy either condition of Lemma~\ref{lemma:crit_value_cond} may be critical values. 
Any two piecewise linear functions with at most $m$ segments intersect at most $m$ times, which, due to Lemma~\ref{lem:res_piecwise_lin}, implies that at most $O(n^2 m)$ points satisfy the first condition of Lemma~\ref{lemma:crit_value_cond}. Since $z_i(\alpha)$ is convex, it intersects with any straight line at most twice, which implies at most $O(n\cdot m)$ points satisfy the second condition of Lemma~\ref{lemma:crit_value_cond}. The values of $\alpha$ at which these intersections happen can be computed in $O(m)$ time from our explicit expression for $z_i(\alpha)$ (given by Lemma~\ref{lem:res_piecwise_lin}), and are the only candidates for critical values of $\alpha$.
The principal's utility from each such candidate can be computed in $O(n m^2)$ time (see Appendix~\ref{app:independent_principal_utility}), so overall our algorithm has a running time of $O((n^2 m+n m) (m + n m^2)) = O(n^3 m^3)$.
\end{proof}

\begin{remark} \label{remark:crit_points_m_dependence}
    In Appendix~\ref{app:crit_points_m_dependence}, we show that the dependence on $m$ is necessary, and in fact even with only $n=2$ actions there may be $\Omega(m)$ critical points. This presents a significant departure from the simultaneous model, in which under linear contracts the number of outcomes has no effect on the number of critical points.
\end{remark}

\subsection{Computing Optimal General Contracts When $m$ is Constant}\label{sec:independent_general_positive}
In this section, we prove that when $m$ is constant, the best general
contract can also be found in polynomial time. This is cast in the following theorem.

\begin{theorem} \label{thm:opt_general_contract}
    In the independent action model, the optimal general contract can be computed in polynomial time, in cases where $m$ is constant. 
\end{theorem}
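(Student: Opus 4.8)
The plan is to regard a contract as a point $t\in\contractregion=[0,L]^m$ (it is standard that one may restrict attention to such a bounded region with $L$ polynomially bounded), to build a hyperplane arrangement $\mathcal{H}$ in $\reals^m$ whose size is polynomial in $n$ and on the relative interior of each of whose faces the \emph{set} of the agent's optimal strategies is constant, to show that some optimal contract lies at a vertex of $\mathcal{H}$, and finally --- since a polynomial-size arrangement in the constant-dimensional space $\reals^m$ has only polynomially many vertices --- to enumerate the vertices and evaluate the principal's utility at each by Proposition~\ref{prop:principal_utility}.

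To build $\mathcal{H}$, note that by Observation~\ref{obs:agent_opt_strat_cond} whether a given strategy is optimal under $t$ depends on $t$ only through the combinatorial data consisting of (i) the order of the reservation values $\{z_i(t)\}_{i\in\actions}$, (ii) the signs of $t(j)-z_i(t)$ for all $i\in\actions$, $j\in\outcomes$, and (iii) the order of the payments $\{t(j)\}_{j\in\outcomes}$. Generalizing Lemma~\ref{lem:res_piecwise_lin} from the $\alpha$-axis to all of $\contractregion$: on the region where the active set $A_i(t):=\{j\in\outcomes : t(j)>z_i(t)\}$ equals a fixed $J\subseteq\outcomes$, the reservation-value equation reads $\sum_{j\in J}p_{ij}\,(t(j)-z_i)=c_i$, so there $z_i(t)=z_i^{J}(t):=\bigl(\sum_{j\in J}p_{ij}\,t(j)-c_i\bigr)\big/\bigl(\sum_{j\in J}p_{ij}\bigr)$ is affine in $t$, and $A_i$ can change only across the hyperplanes $\{z_i^{J}(t)=t(j)\}$; crucially, since $m$ is constant there are only $O(1)$ candidate active sets $J$. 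Let $\mathcal{H}$ collect all hyperplanes $\{z_i^{J}=z_{i'}^{J'}\}$ (over $i\ne i'\in\actions$ and $J,J'\subseteq\outcomes$), all $\{z_i^{J}=t(j)\}$, all $\{t(j)=t(j')\}$, together with the facets of $\contractregion$, so that $|\mathcal{H}|=O(n^2)$. Inside a full-dimensional cell of $\mathcal{H}$ each $z_i$ coincides with a single $z_i^{J_i}$ (if $A_i$ changed along a segment in the cell, the transition point would lie on some $\{z_i^{J}=t(j)\}\in\mathcal{H}$, impossible inside a cell), hence every comparison in (i)--(iii) has a fixed outcome throughout the cell; running the same argument inside the affine hull treats lower-dimensional faces. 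Therefore, by Observation~\ref{obs:agent_opt_strat_cond}, the set $O_F$ of optimal strategies is constant on $\relint(F)$ for every face $F$ of $\mathcal{H}$.

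For a fixed strategy $\strat$ --- i.e. a fixed decision rule mapping action--outcome histories to moves --- the distributions of $S(\strat)$ and $\finaloutcome(\strat)$ are independent of $t$, so $u_P(t,\strat)=\E[r(\finaloutcome(\strat))]-\sum_j \Pr[\finaloutcome(\strat)=j]\,t(j)$ and $u_A(t,\strat)$ are affine in $t$, whence $\max_{\strat}u_A(t,\strat)$ is continuous. Since the agent best responds within $O_F$ and then breaks ties in favor of the principal, for $t\in\relint(F)$ we have $u_P(t)=\max_{\strat\in O_F}u_P(t,\strat)$, the restriction to $\relint(F)$ of the globally defined convex function $g_F(t):=\max_{\strat\in O_F}u_P(t,\strat)$. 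Hence $\sup_{\contractregion}u_P=\max_F \max_{\overline F} g_F$ (a finite maximum, each inner one attained since $g_F$ is continuous and $\overline F$ is compact); pick a face $F$ achieving this. A convex function on the polytope $\overline F$ attains its maximum at a vertex $v$ of $\overline F$, which is a vertex of $\mathcal{H}$, so $g_F(v)=\sup_{\contractregion}u_P$; let $\strat_0\in O_F$ attain $g_F(v)=u_P(v,\strat_0)$. Because $u_A(\cdot,\strat_0)$ and $\max_{\strat}u_A(\cdot,\strat)$ are continuous and agree on the dense subset $\relint(F)\subseteq\overline F$, $\strat_0$ is a best response at $v$, so (tie-breaking toward the principal) $u_P(v)\ge u_P(v,\strat_0)=\sup_{\contractregion}u_P$; thus $v$ is an optimal contract located at a vertex of $\mathcal{H}$.

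The algorithm then enumerates the $O(|\mathcal{H}|^{m})=O(n^{2m})$ vertices of $\mathcal{H}$ (each the solution of the $m\times m$ linear system given by a choice of $m$ of its hyperplanes), keeps those lying in $\contractregion$, computes $u_P$ at each via Proposition~\ref{prop:principal_utility}, and returns the best; since $m$ is constant this runs in polynomial time. I expect the crux to be the second step --- establishing that the subdivision of $\contractregion$ according to ``which strategies are optimal for the agent'' is refined by a polynomial-size hyperplane arrangement. This rests on the closed-form piecewise-linear description of $z_i(t)$ (the $m$-constant analogue of Lemma~\ref{lem:res_piecwise_lin}) and on verifying that its linearity pieces are cut out by genuine hyperplanes; and it is the constancy of $m$ --- which bounds the number of candidate active sets and hence $|\mathcal{H}|$ --- that keeps the arrangement, and so its vertex set, polynomial. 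The remaining, more delicate point (the third step) is controlling $u_P$ on the lower-dimensional faces, where consistent tie-breaking together with continuity of the agent's optimal utility is exactly what forces a maximizer onto a vertex.
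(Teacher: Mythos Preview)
Your proposal is correct and follows essentially the same approach as the paper: build a polynomial-size hyperplane arrangement on $\contractregion$ whose faces have constant agent-optimal-strategy sets (via the piecewise-affine description of $z_i(t)$ and the comparisons in Observation~\ref{obs:agent_opt_strat_cond}), argue that an optimal contract sits at a $0$-face, and enumerate. The only cosmetic differences are that you parameterize the ``halting'' hyperplanes by subsets $J\subseteq\outcomes$ rather than by bijections $\rho$ (both $O(n)$ many when $m$ is constant), and you reach a vertex via a one-shot convexity argument for $g_F$ instead of the paper's iterative boundary-reduction Lemma~\ref{lem:polytope_boundary}/Corollary~\ref{cor:0_faces}.
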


Before turning to the proof of Theorem~\ref{thm:opt_general_contract}, we set up a framework. Firstly, we expand our analysis from Section~\ref{sec:independent_best_response} and examine more closely how the agent's optimal strategies change as a function of the contract. Then, as our proof of Theorem~\ref{thm:opt_general_contract} relies heavily on hyperplane arrangements, we provide some preliminaries regarding these.
\paragraph{The set of optimal strategies. }
Based on the description of the agent's optimal strategies given in Observation~\ref{obs:agent_opt_strat_cond}, all optimal strategies under some contract $t$ are determined by a 4-tuple $(\actionspreorder, \outcomespreorder, \haltgreater, \haltsmaller)$ where 
\begin{itemize}
    \item $\actionspreorder$ is the full preorder\footnote{A preorder $P$ is a reflexive and transitive binary relation. A full preorder also satisfies $x \le_P y$ or $ y \le_P x$ for any $x,y$.} induced on the set of actions $\actions$ according to $i_1 \le_{\actionspreorder} i_2$ if and only if $z_{i_1}(t) \le z_{i_2}(t)$. $\actionspreorder$ determines the possible action-taking order of an optimal strategy.
    \item $\outcomespreorder$ is the full preorder induced on the set of outcomes $\outcomes$ according to $j_1 \le_{\outcomespreorder} j_2$ if and only if $t(j_1) \le t(j_2)$. $\outcomespreorder$ determines which outcomes an optimal strategy may choose upon halting.
    \item $\haltgreater = \{\haltgreater_i\}_{i\in \actions}\subseteq 2^{\outcomes}$ and $\haltsmaller = \{\haltsmaller_i\}_{i\in \actions}\subseteq 2^{\outcomes}$ are defined by 
    \[
    \haltgreater_i = \{j \in \outcomes \mid t(j) > z_i(t)\} ~\text{and}~\haltsmaller_i = \{j \in \outcomes \mid t(j) < z_i(t)\}.
    \]
    $\haltgreater$ and $\haltsmaller$ determine when an optimal strategy must halt and when it must continue.
\end{itemize}
We sometimes use $P_\actions(t), P_{\outcomes}(t), \haltgreater(t), \haltsmaller(t)$ denote the values of $P_\actions, P_{\outcomes}, \haltgreater, \haltsmaller$ under a contract $t$,
but usually we neglect the contract which defines them for brevity.

\paragraph{Hyperplane arrangements.} A \emph{hyperplane arrangement} is a set $\mathcal{A}$ of hyperplanes in some vector space. We focus on hyperplanes over the reals. 
Consider some hyperplane arrangement $\mathcal{A}$ over $\reals^d$.
When the hyperplanes of $\mathcal{A}$ are removed, the space is divided into components, each one being an open $d$-dimensional polytope. These are called the \emph{regions} of $\mathcal{A}$. 

A \emph{face} of an arrangement $\mathcal{A}$ is an intersection of the closure of a region with some intersection of the hyperplanes in $\mathcal{A}$, i.e,
\[
\bar{R} \cap \bigcap_{A\in \mathcal{I}}A,
\]
where $R$ is a region of $\mathcal{A}$ and $\mathcal{I} \subseteq \mathcal{A}$. 

The \emph{dimension} of a face $F$ is defined as the dimension of the affine span of $F$. A \emph{$k$-face} of $\mathcal{A}$ is a face of dimension $k$.

Two useful definitions in this context are  the \emph{relative interior} of a set $S$, which we denote by $\relint(S)$ and is defined as the interior of $S$ within the affine span of $S$, and the \emph{relative boundary} of $S$, which is its boundary within the affine space. An important property is that, for any $k>0$, the relative boundary of any $k$-face is comprised of $k-1$-faces.
\begin{observation} \label{obs:face_relint_cond}
    For any face $F$ of $\mathcal{A}$ and any hyperplane $A\in \mathcal{A}$, either $\relint(F) \subseteq A$ or $\relint(F) \cap A = \emptyset$.
\end{observation}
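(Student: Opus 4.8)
The plan is to reduce the statement to two elementary facts about convex polyhedra, using only that a region is, by definition, an open connected set that misses every hyperplane of $\mathcal{A}$. First I would observe that if $R$ is the region with $F=\bar R\cap\bigcap_{A'\in\mathcal{I}}A'$, then for each hyperplane $A'\in\mathcal{A}$ the set $R$ is connected and contained in $\reals^d\setminus A'$, which is the disjoint union of the two open half-spaces bounded by $A'$; hence $R$ lies entirely in one of them, and so $\bar R$ lies in the corresponding \emph{closed} half-space. Consequently $\bar R$ is an intersection of closed half-spaces (one for each $A'\in\mathcal{A}$), and therefore so is $F$, which in particular makes $F$ a closed convex polyhedron.

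Next I would fix $F$ and a hyperplane $A\in\mathcal{A}$, and write $A=\{x:\ell(x)=0\}$ for an affine functional $\ell$, oriented (by the previous paragraph) so that $F\subseteq\bar R\subseteq\{x:\ell(x)\ge 0\}$. If $F\subseteq A$, then $\relint(F)\subseteq F\subseteq A$ and we are in the first case. Otherwise pick $y\in F$ with $\ell(y)>0$; I claim $\relint(F)\cap A=\emptyset$. Suppose not and take $x_0\in\relint(F)\cap A$, so $\ell(x_0)=0$. Since $x_0,y\in F$ the whole line through them lies in $\mathrm{aff}(F)$, and since $x_0\in\relint(F)$ there is $\varepsilon>0$ with $z:=x_0+\varepsilon(x_0-y)\in F$. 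But then $\ell(z)=\ell(x_0)+\varepsilon(\ell(x_0)-\ell(y))=-\varepsilon\,\ell(y)<0$, contradicting $F\subseteq\{x:\ell(x)\ge 0\}$. This yields the required dichotomy.

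The step I would be most careful about is the passage from $R$ to $\bar R$: it is exactly here that one uses that a region is open and connected (a connected component of the complement of $\bigcup_{A'\in\mathcal{A}}A'$), together with the monotonicity of closure under inclusion in a closed half-space; the rest is routine convex geometry and the elementary characterization of relative interior points of a convex set (a point $x_0$ is in $\relint(F)$ iff for every $y\in F$ one can move slightly from $x_0$ away from $y$ and stay in $F$). No structural properties of the arrangement beyond "each region is open, connected, and disjoint from every hyperplane" enter the argument.
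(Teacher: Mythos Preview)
Your argument is correct. The paper does not actually prove this statement: it is recorded as an \emph{observation} and treated as a standard fact about hyperplane arrangements, with no accompanying proof. Your write-up supplies a clean, self-contained justification; the two key ingredients you isolate---that each region, being connected and disjoint from every hyperplane of $\mathcal{A}$, lies in a single closed half-space of each such hyperplane, and the elementary ``overshoot'' characterization of points in $\relint(F)$---are exactly what is needed, and the contradiction via $\ell(z)<0$ is airtight.
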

Finally, we rely on the following proposition:
\begin{proposition} [\cite{zaslavsky1975facing}]
    The number of $k$-faces in a hyperplane arrangement of size $n$ in $\reals^d$ is at most
    \[
    {n \choose d-k} \sum_{i=0}^k {n-d+k \choose i}.
    \]
\end{proposition}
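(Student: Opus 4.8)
The plan is to reduce the count of $k$-faces to the classical bound on the number of full-dimensional regions of a hyperplane arrangement, applied inside each $k$-dimensional flat of the arrangement. Throughout, write $\mathcal{A}$ for the arrangement and $d$ for the ambient dimension.

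\emph{Step 1: the region bound.} First I would establish the standard fact that an arrangement of $N$ hyperplanes in $\reals^{\ell}$ has at most $\sum_{i=0}^{\ell}\binom{N}{i}$ regions. This goes by induction on $N+\ell$, the cases $N=0$ and $\ell=0$ being immediate. For the inductive step, fix a hyperplane $H$ in the arrangement; adding $H$ back to $\mathcal{A}\setminus\{H\}$ splits exactly those regions of $\mathcal{A}\setminus\{H\}$ whose interior meets $H$ into two, so the number of regions of $\mathcal{A}$ equals the number of regions of $\mathcal{A}\setminus\{H\}$ plus the number of regions of the induced arrangement $(\mathcal{A}\setminus\{H\})|_{H}$ on $H\cong\reals^{\ell-1}$, which has at most $N-1$ hyperplanes. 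By the induction hypothesis these are bounded by $\sum_{i=0}^{\ell}\binom{N-1}{i}$ and $\sum_{i=0}^{\ell-1}\binom{N-1}{i}$ respectively, and Pascal's identity gives $\sum_{i=0}^{\ell}\binom{N-1}{i}+\sum_{i=0}^{\ell-1}\binom{N-1}{i}=\sum_{i=0}^{\ell}\binom{N}{i}$, as needed.

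\emph{Step 2: localizing $k$-faces on flats.} For a $k$-face $F$ of $\mathcal{A}$, let $L=\mathrm{aff}(F)$; this is a $k$-dimensional flat, namely the intersection of all hyperplanes of $\mathcal{A}$ containing $F$. Consider the induced arrangement $\mathcal{A}|_{L}=\{H\cap L : H\in\mathcal{A},\ L\not\subseteq H,\ H\cap L\neq\emptyset\}$ on $L\cong\reals^{k}$. I claim the $k$-faces of $\mathcal{A}$ with affine span $L$ are in bijection with the regions of $\mathcal{A}|_{L}$. Indeed, for a point $x\in L$ the sign vector of $x$ with respect to all hyperplanes of $\mathcal{A}$ determines the face of $\mathcal{A}$ containing $x$ in its relative interior; by Observation~\ref{obs:face_relint_cond} this sign vector is constant on each region of $\mathcal{A}|_{L}$ (each hyperplane not containing $L$ either misses the region or is disjoint from its relative interior), so each region of $\mathcal{A}|_{L}$ is contained in $\relint(F)$ for a single $k$-face $F$, and conversely $\relint(F)\subseteq L$ avoids every $H\cap L$ with $L\not\subseteq H$, so $\relint(F)$ is exactly one such region.

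\emph{Step 3: counting flats and assembling.} Every $k$-flat $L$ arising as the span of a $k$-face is the intersection of the hyperplanes containing it; greedily picking hyperplanes that strictly drop the dimension, $L$ is an intersection of exactly $d-k$ hyperplanes of $\mathcal{A}$, so there are at most $\binom{n}{d-k}$ such flats. For each of them those $d-k$ hyperplanes all contain $L$, hence $\mathcal{A}|_{L}$ has at most $n-(d-k)=n-d+k$ hyperplanes in $\reals^{k}$, so by Step 1 it has at most $\sum_{i=0}^{k}\binom{n-d+k}{i}$ regions. Summing the count from Step 2 over the at most $\binom{n}{d-k}$ flats $L$ yields the bound $\binom{n}{d-k}\sum_{i=0}^{k}\binom{n-d+k}{i}$; the only slack is that several $(d-k)$-subsets may define the same flat $L$, which only over-counts, so the inequality holds in general position as well.

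\emph{Main obstacle.} The delicate point is Step 2: one must argue carefully, from the definition of a face as $\bar R\cap\bigcap_{A\in\mathcal{I}}A$ together with Observation~\ref{obs:face_relint_cond}, that the $k$-faces of $\mathcal{A}$ lying in a flat $L$ are precisely the relatively open cells of the sub-arrangement induced on $L$ (and in particular that $\relint(F)$ is genuinely a full cell of $\mathcal{A}|_{L}$, not a proper subset of one). Step 1 is a routine induction and Step 3 is bookkeeping; alternatively one could cite the classical region bound and Zaslavsky's theorem directly, the latter giving the exact count via the characteristic polynomial of the intersection poset, but the upper bound above is all that is needed here.
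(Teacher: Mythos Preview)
The paper does not actually prove this proposition; it is stated with a citation to \cite{zaslavsky1975facing} and used only for the corollary that the total number of faces is $O(|\mathcal{A}|^d)$. So there is no ``paper's own proof'' to match.

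Your argument is nonetheless correct and is the standard elementary route to the bound. Step~1 is the textbook induction for the region count. In Step~2 the point you flag as the main obstacle is handled by your sign-vector reasoning together with Observation~\ref{obs:face_relint_cond}: for a $k$-flat $L$ of the arrangement, each region of $\mathcal{A}|_L$ has a constant sign pattern with respect to every $H\in\mathcal{A}$, hence lies in $\relint(F)$ for a single face $F$; since the region is open in $L$ and $F\subseteq L$, that face is $k$-dimensional with $\mathrm{aff}(F)=L$, and conversely $\relint(F)$ avoids every $H\cap L$ with $L\not\subseteq H$, so it is a full region of $\mathcal{A}|_L$. One small point worth making explicit is that $\mathrm{aff}(F)$ is always a flat of $\mathcal{A}$: with the paper's definition $F=\bar R\cap\bigcap_{A\in\mathcal{I}}A$, the facets of $\bar R$ lie on hyperplanes of $\mathcal{A}$, so $\mathrm{aff}(F)$ is an intersection of hyperplanes from $\mathcal{A}$. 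Step~3 is fine; the greedy selection of $d-k$ hyperplanes works because the normals of the hyperplanes containing $L$ span a $(d-k)$-dimensional space, and any overcounting (repeated flats, or more than $d-k$ hyperplanes through $L$) only helps the inequality.

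In short: you supply a clean self-contained proof where the paper simply cites the result. Zaslavsky's original derivation proceeds via the M\"obius function of the intersection poset and yields the exact face count, but for the paper's purposes (a polynomial upper bound when $d=m$ is constant) your elementary argument is entirely sufficient and arguably preferable for a reader who does not want to chase the reference.
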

This implies that the number of faces of an arrangement $\mathcal{A}$ is bounded by $O(|\mathcal{A}|^d)$.

\begin{proof} [Proof of Theorem~\ref{thm:opt_general_contract}]
Denote $L = \max_{j\in \outcomes, i\in \actions: p_{ij}\ne 0} \frac{r(m)}{p_{ij}}$. It is easy to verify that any optimal contract $t$ satisfies $t(j)\in [0,L]$ for any outcome $j\in \outcomes$.
For our proof we think of a contract $t$ as an element of $\contractregion$, where the $j$th coordinate of the vector is the payment for outcome $j$. 
The basic idea is to construct a hyperplane arrangement which captures changes in the 4-tuple $(\outcomespreorder, \actionspreorder, \haltgreater, \haltsmaller)$. The intuition here is that, in the same way that under linear contracts the points where the agent's best response changes are the candidates for the optimal contracts, in a general contract the points where the 4-tuple changes are of interest. More formally, if within the relative interior of each face of a hyperplane arrangement the 4-tuple $(\outcomespreorder, \actionspreorder, \haltgreater, \haltsmaller)$ is constant, we show that the optimal contract is a $0$-face (a point which is the intersection of $m$ hyperplanes).

We begin by characterizing the reservation value as piecewise linear. This is essential in order to capture changes in $\actionspreorder$ with hyperplanes.

\begin{lemma} \label{lem:linz}
    For any $i\in \actions$, there exist $2^{m}$ linear functions, denoted by $\linz{S}_i(t)$ for any $S\in 2^{\outcomes}$, such that for any contract $t$ it holds that $z_i(t) = \linz{S}_i(t)$, where $S=\haltgreater_i(t)$.
\end{lemma}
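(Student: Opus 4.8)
The plan is to unpack the defining equation~\eqref{eq:res_value} for the reservation value in the induced Pandora's Box instance and show that, once we fix which outcomes pay strictly more than $z_i(t)$, the equation becomes linear in the vector $t$. Recall that in the induced instance the value of box $i$ is $V_i = t(X_i)$, so the reservation value $z_i(t)$ is defined by
\[
\E_{X_i \sim \mathcal{F}_i}\bigl[(t(X_i) - z_i(t))^+\bigr] = \sum_{j \in \outcomes} p_{ij}\,(t(j) - z_i(t))^+ = c_i.
\]
The key observation is that the term $(t(j) - z_i(t))^+$ is nonzero precisely when $t(j) > z_i(t)$, i.e.\ when $j \in \haltgreater_i(t)$. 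So if we write $S = \haltgreater_i(t)$, the equation collapses to $\sum_{j \in S} p_{ij}\,(t(j) - z_i(t)) = c_i$, which is linear in $z_i(t)$ and in the coordinates $t(j)$. Solving for $z_i(t)$ gives the explicit candidate
\[
\linz{S}_i(t) = \frac{\sum_{j \in S} p_{ij}\, t(j) - c_i}{\sum_{j \in S} p_{ij}},
\]
a linear (affine) function of $t$, with the convention that $\linz{\emptyset}_i(t) = \infty$ (matching $z_i = \infty$ when $c_i = 0$ and all relevant mass is on zero-payment outcomes — one should check the degenerate cases where $\sum_{j\in S} p_{ij}=0$ separately, using that such $S$ never actually arises as $\haltgreater_i(t)$ for a finite reservation value). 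This is the family of $2^m$ linear functions claimed, one per subset $S \in 2^{\outcomes}$.

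It then remains to verify that $z_i(t) = \linz{S}_i(t)$ for $S = \haltgreater_i(t)$. This direction is essentially immediate from the derivation above: substituting $S = \haltgreater_i(t)$ into the defining equation, every omitted term $(t(j) - z_i(t))^+$ with $j \notin S$ vanishes (because $t(j) \le z_i(t)$ for such $j$, by definition of $\haltgreater_i$), and every retained term equals $p_{ij}(t(j) - z_i(t))$ with no truncation (because $t(j) > z_i(t)$), so the equation reads $\sum_{j\in S} p_{ij}(t(j) - z_i(t)) = c_i$, whose unique solution is $\linz{S}_i(t)$. One subtlety worth a sentence: the reservation value is the \emph{unique} root of the monotone-decreasing function $z \mapsto \sum_j p_{ij}(t(j)-z)^+ - c_i$, so once we exhibit $\linz{S}_i(t)$ as a root it must coincide with $z_i(t)$; and consistency of the self-referential definition of $S$ (that $\linz{S}_i(t)$ does indeed put exactly the outcomes of $S$ strictly above it) is automatic since $z_i(t)$ is well-defined and $S$ was chosen as $\haltgreater_i(t)$.

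I expect the main obstacle to be purely bookkeeping around the degenerate subsets: when $\sum_{j \in S} p_{ij} = 0$ the expression for $\linz{S}_i(t)$ is ill-defined, and when $c_i = 0$ the reservation value is $\infty$ by fiat, so one must argue these $S$ are never the relevant $\haltgreater_i(t)$ — or, more cleanly, simply define $\linz{S}_i$ arbitrarily (e.g.\ as the constant $\infty$ or $0$) on those $S$, since the lemma only asserts the equality $z_i(t) = \linz{S}_i(t)$ for the specific $S = \haltgreater_i(t)$, and that $S$ always has $\sum_{j\in S} p_{ij} > 0$ whenever $z_i(t)$ is finite. Beyond that, the argument is a direct computation, and the statement should follow without further machinery.
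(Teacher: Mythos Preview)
Your proposal is correct and follows essentially the same approach as the paper: define $\linz{S}_i(t) = \frac{\sum_{j\in S} p_{ij}\, t(j) - c_i}{\sum_{j\in S} p_{ij}}$, observe that in the defining equation $c_i = \sum_j p_{ij}(t(j)-z_i(t))^+$ only the terms with $j\in \haltgreater_i(t)$ survive, and solve the resulting linear equation. Your extra remarks on degenerate subsets and uniqueness of the root are sound but go slightly beyond what the paper spells out; the paper simply gives the formula and the two-line computation.
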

\begin{proof}
    Let $i\in \actions$. For any subset $S\subseteq \outcomes$ we define $\linz{S}_i(t)$ as follows:
    \[
    \linz{S}_i(t) = \frac{\sum_{j\in S} p_{ij}\cdot t (j) - c_i}{\sum_{j\in S} p_{ij}},
    \]
    Let $t$ be some contract.
    The random variable $t(X_i)$ attains values within $\{t(j)\}_{j\in \outcomes}$ and is equal to $t(j)$ with probability $p_{ij}$. Plugging this into Equation~\eqref{eq:res_value} we get
    \[
    c_i = \E[(t(X_i)-z_i(t))^+] = \sum_{j\in \outcomes: t(j) > z_i(t)} p_{ij} (t(j)-z_i(t))  = \sum_{j\in \haltgreater_i(t)} p_{ij}(t(j) - z_i(t)), 
    \]
    where the last equality is by the definition of $\haltgreater_i(t)$. This simplifies into
    \[
    z_i(t) = \frac{\sum_{j\in \haltgreater_i(t)} p_{ij}\cdot t (j) - c_i}{\sum_{j\in\haltgreater_i(t)} p_{ij}} = \linz{\haltgreater_i(t)}_i(t),
    \]
    as needed.
\end{proof}

We now begin to formalize our earlier notion of the candidates for the optimal contract being $0$-faces of an arrangement. We begin our analysis by considering the principal's utility in polytopes (which, in a sense, are the basic building block of hyperplane arrangements). 
\begin{restatable}{lemma}{polytopemax}\label{lem:polytope_max}
    Let $Q \subseteq \contractregion$ be a polytope. $u_P(t)$ attains a maximum\footnote{The principal's utility is not continuous therefore this is not immediate from the Extreme Value Theorem.} on $Q$.
\end{restatable}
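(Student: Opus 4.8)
The goal is to show that $u_P(t)$ attains a maximum on any polytope $Q \subseteq \contractregion$, despite $u_P$ being discontinuous. The plan is to partition $Q$ according to the agent's behavior and argue that $u_P$ restricted to each piece behaves well enough for a maximum to exist, then take a max over finitely many pieces. The key structural fact to exploit is that the agent's set of optimal strategies is governed by the $4$-tuple $(\outcomespreorder, \actionspreorder, \haltgreater, \haltsmaller)$, and each coordinate of this tuple is defined by finitely many inequalities of the form ``$z_i(t) \le z_{i'}(t)$'', ``$t(j) \le t(j')$'', ``$t(j) > z_i(t)$'' etc. Using Lemma~\ref{lem:linz}, on any region where $\haltgreater_i$ is fixed, $z_i(t) = \linz{S}_i(t)$ is linear, so all these inequalities become linear. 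Hence $Q$ can be subdivided into finitely many \emph{relatively open} pieces (intersections of $Q$ with the relatively open cells of the hyperplane arrangement generated by all the relevant hyperplanes) on each of which the entire $4$-tuple is constant, and therefore on each of which the agent's best response $\bestresponse(t)$ — and consequently $u_P(t)$ — is given by a single fixed strategy.

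\textbf{Key steps.} First I would enumerate the finite set of hyperplanes: for each pair of actions $i_1, i_2$ and each pair of subsets $S_1, S_2 \in 2^{\outcomes}$, the hyperplane $\{\linz{S_1}_{i_1}(t) = \linz{S_2}_{i_2}(t)\}$; for each pair of outcomes $j_1, j_2$, the hyperplane $\{t(j_1) = t(j_2)\}$; and for each action $i$, each outcome $j$, and each subset $S$, the hyperplane $\{\linz{S}_i(t) = t(j)\}$. Call this arrangement $\mathcal{A}$ (it is finite since $m$ is treated as a constant — though finiteness is all that is needed here, not polynomial size). I claim that within the relative interior of each face $F$ of $\mathcal{A}$ (intersected with $Q$), the $4$-tuple is constant: indeed, by Observation~\ref{obs:face_relint_cond}, each hyperplane of $\mathcal{A}$ is either disjoint from $\relint(F)$ or contains it, so in particular the sign of each $z_i(t) - t(j)$ and each $t(j_1) - t(j_2)$ is constant on $\relint(F)$; once $\haltgreater_i$ is pinned down, $z_i(t)$ equals the linear function $\linz{\haltgreater_i}_i(t)$ there, and the relative orders $z_{i_1}(t)$ vs.\ $z_{i_2}(t)$ are then governed by the linear hyperplanes $\{\linz{S_1}_{i_1} = \linz{S_2}_{i_2}\}$, which are likewise constant-sign on $\relint(F)$. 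So the tuple, hence the set of optimal agent strategies, is constant on $\relint(F)$. Second, on such a piece, the principal's utility: the agent breaks ties first in favor of $u_P$, then by a fixed consistent order over the finite strategy set; since the set of optimal strategies is constant on $\relint(F)$, the \emph{profile} of values $\{u_P(t, \strat)\}$ for $\strat$ ranging over this fixed set varies continuously (in fact, these are fixed affine functions of $t$ by linearity — $u_P(t,\strat) = R(\strat) - \sum_j \Pr[\omega(\strat)=j] t(j)$ for the appropriate fixed strategy), but the \emph{argmax} (which strategy the agent ends up playing) can still jump within $\relint(F)$. The cleanest route is: enumerate, within the fixed optimal-strategy set $\Sigma_F$, all strategies; $u_P(t)$ on $\relint(F)$ equals $\max_{\strat \in \Sigma_F'} u_P(t,\strat)$ where $\Sigma_F' \subseteq \Sigma_F$ is the subset of strategies that are ever selected — but actually, since the agent maximizes $u_P$ among $\Sigma_F$, we simply have $u_P(t) = \max_{\strat \in \Sigma_F} u_P(t, \strat)$ for $t \in \relint(F)$, which is a maximum of finitely many affine functions, hence continuous (indeed convex, piecewise-linear) on $\relint(F)$, and it extends continuously to $\bar F$.

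\textbf{Concluding the argument.} Third, I would do the outer induction on dimension. $\bar Q$ is compact, so it suffices to show that $\sup_{t \in Q} u_P(t)$ is attained. Write $Q$ as the disjoint union $\bigcup_F (\relint(F) \cap Q)$ over the finitely many faces $F$ of $\mathcal{A}$ that meet $Q$ (including low-dimensional faces — a point of $Q$ lies in the relative interior of exactly one face of the arrangement it generates, so this is a genuine partition, and intersecting with the polytope $Q$ keeps each piece a relatively open subset of a polytope). On each piece, $u_P$ agrees with the restriction of a convex piecewise-linear function $g_F := \max_{\strat \in \Sigma_F} u_P(\cdot, \strat)$ defined on all of the affine hull. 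Now $g_F$ is continuous on the compact set $\bar F \cap \bar Q$ and attains a max there; its maximizer lies either in $\relint(F) \cap Q$ — in which case $u_P$ actually attains this value on $Q$ — or on the relative boundary of $F$, which is a union of lower-dimensional faces, each of which is itself a polytope contained in $\bar Q$. By applying the statement inductively (base case: $0$-dimensional $Q$, trivial) to these lower-dimensional faces, every relevant supremum is attained at some genuine point of $\bar Q$, and since $\bar Q$ is covered by finitely many such faces and $u_P$ is defined everywhere on $\bar Q \subseteq \contractregion$, the overall supremum over $Q$ is the max of finitely many attained values, hence attained. The main obstacle is handling the discontinuity carefully at face boundaries: one must be sure that the value $g_F$ would ``want'' to attain on the relative boundary of $F$ is actually dominated by, or matched by, the value of $u_P$ at that boundary point under \emph{its own} optimal-strategy set $\Sigma_{F'}$ — but this is automatic because $\Sigma_{F'} \supseteq$ the limiting strategies (any strategy optimal throughout $\relint(F)$ remains available at the boundary point, and the agent there does at least as well for the principal by the tie-breaking rule, or else a strictly better strategy exists, which only helps), so no supremum is ``lost'' by passing to the boundary. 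Making this domination precise — that $\limsup_{t \to t_0,\, t \in \relint(F)} u_P(t) \le u_P(t_0)$ for $t_0 \in \bar F \cap Q$ — is the one place where the agent's tie-breaking in favor of the principal must be invoked explicitly, and it is the crux of the proof.
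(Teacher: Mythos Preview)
Your proof is correct, but the paper takes a much shorter route. The paper's argument is a five-line sequential-compactness proof: take a sequence $\{t^\ell\}\subseteq Q$ with $u_P(t^\ell)\to u:=\sup_Q u_P$; pass to a convergent subsequence with limit $t'\in Q$ (polytopes are compact); since there are only \emph{finitely many} agent strategies, pass to a further subsequence on which the agent's best response is a single fixed strategy $\strat$; by continuity of $u_A(\cdot,\strat')$ for every $\strat'$, this $\strat$ is still optimal for the agent at $t'$; then tie-breaking gives $u_P(t')\ge u_P(t',\strat)=\lim u_P(t^\ell,\strat)=u$.

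The essential content of both proofs is the same upper-semicontinuity fact you isolate at the end (``$\limsup_{t\to t_0} u_P(t)\le u_P(t_0)$, because any limit of optimal strategies remains optimal and the agent then breaks ties in the principal's favor''). The difference is that the paper extracts this directly from the finiteness of the strategy set, whereas you first build the entire hyperplane arrangement of Section~\ref{sec:independent_general}, partition $Q$ into cells of constant $4$-tuple, and argue cell by cell. Your route is valid and more constructive---it essentially previews the machinery behind Lemma~\ref{lem:polytope_boundary} and Corollary~\ref{cor:0_faces}---but for the bare existence statement of Lemma~\ref{lem:polytope_max} it is considerable overkill: none of the arrangement, the piecewise-linearity of $z_i$, or the dimension induction is needed. (Incidentally, your induction on dimension is also unnecessary even within your own framework: once you have $\sup_{\relint(F)\cap Q} u_P \le g_F(t_F^\ast)\le u_P(t_F^\ast)$ for some $t_F^\ast\in Q$, taking the maximum over the finitely many faces $F$ finishes immediately.)
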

The proof of this lemma is deferred to Appendix~\ref{app:proof_lemma_polytope_max}.

Furthermore, when the agent's optimal strategies remain the same throughout the relative interior of a polytope, this maximum is attained at the boundary. More formally,
\begin{restatable}{lemma}{polytopeboundary} \label{lem:polytope_boundary}
    Let $Q\subseteq \contractregion$ be a polytope such that in the relative interior of $Q$ the 4-tuple $(\actionspreorder, \outcomespreorder, \haltgreater, \haltsmaller)$ is constant. The principal's utility is maximized at the relative boundary of $Q$.
\end{restatable}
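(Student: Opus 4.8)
The plan is to show that within the relative interior of $Q$ the principal's utility $u_P(t)$ is a linear (affine) function of the contract $t$, and then combine this with the preceding two lemmas (Lemma~\ref{lem:polytope_max}, which guarantees a maximum on $Q$, and the structure of polytopes) to conclude that no interior point can be a strict maximizer unless the function is constant, in which case we may retreat to the relative boundary anyway. Concretely, since the 4-tuple $(\actionspreorder, \outcomespreorder, \haltgreater, \haltsmaller)$ is constant on $\relint(Q)$, by Observation~\ref{obs:agent_opt_strat_cond} the \emph{set} of all agent-optimal strategies is the same at every point of $\relint(Q)$ --- the combinatorial data that Observation~\ref{obs:agent_opt_strat_cond} uses to certify optimality (the action-order preorder, the outcome-payment preorder, and the strict halt/continue thresholds) does not change. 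Fix one non-adaptive optimal strategy $\strat$ in this common set (which exists, as discussed after Proposition~\ref{prop:principal_utility}). Because tie-breaking among optimal strategies is consistent and the set of optimal strategies is literally constant on $\relint(Q)$, the agent's best response $\bestresponse(t)$ --- as a strategy, i.e.\ a decision tree over actions and outcomes --- is the same strategy $\strat^\circ$ for every $t\in\relint(Q)$. Then $u_P(t) = u_P(t,\strat^\circ) = \E[r(\omega(\strat^\circ)) - t(\omega(\strat^\circ))]$, and since the distribution of the random pair $(S(\strat^\circ),\omega(\strat^\circ))$ depends only on the fixed decision tree and the (fixed) distributions $\mathcal{F}_i$ --- not on the payment vector $t$ --- this is an affine function of $t$ on $\relint(Q)$: it equals a constant minus $\sum_j \Pr[\omega(\strat^\circ)=j]\cdot t(j)$.

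Given affineness on $\relint(Q)$, the argument finishes as follows. Let $M = \sup_{t\in\relint(Q)} u_P(t)$; by Lemma~\ref{lem:polytope_max} the supremum of $u_P$ over all of $Q$ is attained, say at $t^\star\in Q$. If $t^\star\in\relint(Q)$, then an affine function attaining its maximum over the relatively open convex set $\relint(Q)$ at an interior point must be constant on $Q$ (extend any direction within the affine span both ways; if the function strictly increased in some direction we could move and increase it, contradiction; hence its gradient along the affine span vanishes, so it is constant on $\relint(Q)$ and, by affineness, on all of $\overline{\relint(Q)} = Q$). In that case the maximum is also attained on the relative boundary of $Q$, and we are done. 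If instead $t^\star$ already lies on the relative boundary of $Q$, we are done immediately. Either way, $u_P$ is maximized at the relative boundary of $Q$.

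The one subtlety to be careful about --- and what I expect to be the crux of the write-up --- is the passage from "the 4-tuple is constant on $\relint(Q)$'' to "the best-response \emph{strategy} (decision tree) is constant on $\relint(Q)$.'' Constancy of the 4-tuple gives, via Observation~\ref{obs:agent_opt_strat_cond}, that the \emph{set} of utility-maximizing strategies is the same at every point of $\relint(Q)$; one then needs that the principal-favorable, consistently-broken tie among this fixed set selects the same strategy throughout. This is exactly where the consistent-and-principal-favorable tie-breaking rule from Section~\ref{sec:model} is used: among a fixed finite set of strategies, the subset maximizing $u_P(t,\cdot)$ could in principle still vary with $t$ within $\relint(Q)$, so one should argue that it does not --- e.g.\ by invoking the $\epsiloncontract$ device from the discussion around Proposition~\ref{prop:principal_utility}, or by noting that for the \emph{purpose of this lemma} it suffices to pick, for each $t\in\relint(Q)$, \emph{some} optimal strategy whose principal-utility matches $u_P(t)$, and to show that a single fixed $\strat^\circ$ from the common optimal set achieves the maximum of $u_P(t,\cdot)$ over the optimal set simultaneously for all $t\in\relint(Q)$; since $u_P(t,\strat)$ is affine in $t$ for each fixed strategy $\strat$ and the competing finite family is fixed, the pointwise maximizer is constant on any region where it is continuous, hence on the connected relatively-open set $\relint(Q)$ after discarding the lower-dimensional exceptional set (which is empty here since the 4-tuple, and hence the optimal-strategy set, truly does not change). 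Once this bookkeeping is pinned down, the remaining steps are the routine affine-maximum observations sketched above, and the appendix proof can be kept short.
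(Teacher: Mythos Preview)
Your approach and the paper's differ in a key respect, and your version has a gap at exactly the point you flagged as the ``crux.''

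You try to show that $u_P(t)$ is \emph{affine} on $\relint(Q)$ by arguing that $\bestresponse(t)$ is a single fixed strategy $\strat^\circ$ throughout $\relint(Q)$. Constancy of the 4-tuple does guarantee that the \emph{set} $\mathcal{S}$ of agent-optimal strategies is constant on $\relint(Q)$, but the principal-favoring tie-break then selects $\argmax_{\strat\in\mathcal{S}} u_P(t,\strat)$, and since each $u_P(t,\strat)$ is affine in $t$, this argmax can genuinely vary across $\relint(Q)$. (For instance, if two actions have equal reservation value throughout $\relint(Q)$, both orderings are agent-optimal, and which ordering the principal prefers may depend on $t$.) Your closing sentence --- ``the pointwise maximizer is constant on any region where it is continuous \ldots\ after discarding the lower-dimensional exceptional set (which is empty here \ldots)'' --- conflates constancy of the optimal \emph{set} with constancy of the principal-preferred element of that set; the exceptional set where the argmax switches need not be empty. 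So $u_P$ is in general only convex (a max of finitely many affine functions) on $\relint(Q)$, not affine, and your argument as written does not go through.

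The paper's proof avoids this entirely. It never claims $\bestresponse(t)$ is constant. Instead it fixes a maximizer $t^0\in Q$; if $t^0\in\relint(Q)$, it takes the single strategy $\strat_0=\bestresponse(t^0)$, uses linearity of $u_P(\cdot,\strat_0)$ to find a vertex $t^1$ of $Q$ with $u_P(t^1,\strat_0)\ge u_P(t^0,\strat_0)$, observes that $\strat_0$ remains agent-optimal at $t^1$ (constancy of the 4-tuple on $\relint(Q)$ plus continuity to the boundary), and then invokes tie-breaking at $t^1$ to get $u_P(t^1)\ge u_P(t^1,\strat_0)\ge u_P(t^0)$. This needs only that \emph{one} agent-optimal strategy from the interior persists to the boundary, not that the best response is globally constant. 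Your route can be repaired --- replace ``affine'' by ``convex'' and argue that a convex function attains its max over $Q$ on the relative boundary, handling the boundary comparison via the same continuity-plus-tie-breaking step --- but the paper's argument is shorter and sidesteps the bookkeeping you were worried about.
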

The proof of this lemma relies on two observations. Firstly, that for any single strategy $\strat$ the principal's utility $u_P(t, \strat)$ is maximized at a vertex. Secondly, any strategy which maximizes the agent's utility at the relative interior of a polytope also maximizes it at the boundary. The full details are deferred to Appendix~\ref{app:proof_lemma_polytope_boundary}.
As a corollary, we get the following key finding.
\begin{corollary} \label{cor:0_faces}
    Let $\mathcal{A}$ be a hyperplane arrangement, such that any face $F$ of $\mathcal{A}$ satisfies either $F\subseteq \contractregion$ or $\relint(F) \cap \contractregion = \emptyset$. If any face $F$ such that $F\subseteq \contractregion$ satisfies the condition of Lemma~\ref{lem:polytope_boundary}, then some $0$-face (a point which is the intersection of $m$ hyperplanes) of $\mathcal{A}$ maximizes the principal's utility.
\end{corollary}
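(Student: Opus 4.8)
The plan is to induct on the dimension of faces, pushing the maximizer of $u_P$ down from top-dimensional regions to $0$-faces. First I would fix such an arrangement $\mathcal{A}$ and let $F^\star$ be a face of minimal dimension among all faces $F$ with $F \subseteq \contractregion$ on which $u_P$ attains its maximum over $\contractregion$; such a face exists because, by Lemma~\ref{lem:polytope_max} applied to $Q = \contractregion$ (a polytope), $u_P$ attains a maximum somewhere in $\contractregion$, and every point of $\contractregion$ lies in the relative interior of a unique face, which by hypothesis is contained in $\contractregion$. The goal is then to show $\dim F^\star = 0$; combined with the fact that a $0$-face of an arrangement in $\reals^m$ is by definition the intersection of (at least) $m$ of the hyperplanes, this gives the statement.

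Next I would argue that on $F^\star$ the $4$-tuple $(\actionspreorder, \outcomespreorder, \haltgreater, \haltsmaller)$ is constant throughout $\relint(F^\star)$. This is where Observation~\ref{obs:face_relint_cond} does the work: each of the finitely many conditions defining the $4$-tuple --- namely $z_i(t) \lessgtr z_{i'}(t)$, $t(j) \lessgtr t(j')$, and $t(j) \lessgtr z_i(t)$ --- is, by Lemma~\ref{lem:linz} (which writes $z_i$ as a linear function $\linz{S}_i$ once $\haltgreater_i$ is fixed), locally a linear (in)equality, so the locus where it changes sign is contained in a hyperplane. The intended reading of the hypothesis "any face $F$ with $F\subseteq\contractregion$ satisfies the condition of Lemma~\ref{lem:polytope_boundary}" is precisely that $\mathcal{A}$ was built to contain all these hyperplanes, so by Observation~\ref{obs:face_relint_cond} each such hyperplane either contains $\relint(F^\star)$ entirely or is disjoint from it; either way the sign of each condition is constant on $\relint(F^\star)$, hence the $4$-tuple is constant there. (One subtlety to spell out: to invoke Lemma~\ref{lem:linz} one needs $\haltgreater_i$ itself fixed on $\relint(F^\star)$, but $\haltgreater_i$ is part of the $4$-tuple, so this should be phrased as a simultaneous statement --- fixing the cell of the arrangement fixes which linear branch $\linz{S}_i$ governs $z_i$, and then the remaining comparisons are genuinely linear. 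I would handle this by noting that the hyperplanes $\{t : \linz{S}_i(t) = t(j)\}$ for all $i,j,S$ and $\{t : \linz{S}_i(t) = \linz{S'}_{i'}(t)\}$ for all $i,i',S,S'$ are all included in $\mathcal{A}$, so inside a single region of $\mathcal{A}$ every relevant comparison has a definite sign.)

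Now suppose for contradiction $\dim F^\star = k > 0$. Then $F^\star$ is a polytope (the closure of a region intersected with an affine subspace) of dimension $k$ on whose relative interior the $4$-tuple is constant, so Lemma~\ref{lem:polytope_boundary} applies to $Q = F^\star$ and tells us $u_P$ is maximized (over $F^\star$) on the relative boundary of $F^\star$. Since the maximum of $u_P$ over $F^\star$ equals the maximum over all of $\contractregion$ (that is how $F^\star$ was chosen), there is a point $t_0$ in the relative boundary of $F^\star$ with $u_P(t_0) = \max_{\contractregion} u_P$. But the relative boundary of a $k$-face of $\mathcal{A}$ is a union of $(k-1)$-faces of $\mathcal{A}$ (the property recalled just before Observation~\ref{obs:face_relint_cond}), and each of these is contained in $F^\star \subseteq \contractregion$, hence itself a face of $\mathcal{A}$ contained in $\contractregion$. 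So $t_0$ lies on some face $F'$ of $\mathcal{A}$ with $\dim F' \le k-1 < \dim F^\star$, $F' \subseteq \contractregion$, and $u_P$ attaining its $\contractregion$-maximum on $F'$ --- contradicting minimality of $\dim F^\star$. Hence $\dim F^\star = 0$, i.e.\ $F^\star$ is a $0$-face, which is the intersection of $m$ hyperplanes of $\mathcal{A}$, and it maximizes the principal's utility. The main obstacle, and the step I would be most careful writing out, is the compatibility between "the $4$-tuple is constant on $\relint(F^\star)$" and the hypotheses as stated: I would need to argue cleanly that the hyperplane arrangement hypothesized in the statement can be taken to simultaneously resolve the branch choice in Lemma~\ref{lem:linz} and all the resulting linear comparisons, so that Lemma~\ref{lem:polytope_boundary} is genuinely applicable to $F^\star$.
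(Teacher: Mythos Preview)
Your descent argument (take a minimal-dimension face $F^\star\subseteq\contractregion$ carrying the maximum, apply Lemma~\ref{lem:polytope_boundary} to push to its relative boundary, land on a strictly lower-dimensional face, contradiction) is exactly the intended proof, and the paper treats the corollary as immediate from Lemmas~\ref{lem:polytope_max} and~\ref{lem:polytope_boundary} for precisely this reason.

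Where you are overcomplicating things is your second paragraph and your closing ``main obstacle''. You write that you would need to argue that the $4$-tuple is constant on $\relint(F^\star)$, and you worry about the branch choice in Lemma~\ref{lem:linz} and whether $\mathcal{A}$ contains all the relevant hyperplanes. None of that is your job here: the phrase ``satisfies the condition of Lemma~\ref{lem:polytope_boundary}'' in the corollary's hypothesis \emph{is} the statement that the $4$-tuple $(\actionspreorder,\outcomespreorder,\haltgreater,\haltsmaller)$ is constant on $\relint(F)$. So for every face $F\subseteq\contractregion$, and in particular for $F^\star$, this constancy is simply assumed, and Lemma~\ref{lem:polytope_boundary} applies immediately. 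The work of showing that a specific arrangement $\mathcal{A}$ satisfies this hypothesis is done later, in Proposition~\ref{prop:final_hyperplane_arrangment}; it is not part of the corollary. Once you drop that detour, your proof is two short paragraphs and matches the paper.
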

To prove Theorem~\ref{thm:opt_general_contract}, it suffices to show a hyperplane arrangement of polynomial size that can be computed in polynomial time and satisfies the conditions of the Corollary~\ref{cor:0_faces}. This is because, in a vector space of constant dimension, all $0$-faces of a hyperplane arrangement can be found in polytime. 

Consider the following hyperplane arrangements: 
\begin{enumerate}
\item $\mathcal{A}_1 = \{\{t\in \reals^m \mid t_j = 0\} \mid j\in [m]\} \cup \{\{t\in \reals^m \mid t_j = L\} \mid j\in [m]\}$
\item $\mathcal{A}_2 = \{E_{j_1, j_2} \mid  j_1\ne j_2 \in [m]\}$, where $E_{j_1, j_2}$ is defined as 
\[
E_{j_1, j_2}=\{t\in \reals^m \mid t_{j_1} = t_{j_2} \}.
\]
\item $\mathcal{A}_3 = \{ T_{\rho, i,j} \mid \rho:[m]\rightarrow [m]\text{ is a bijection}, i\in \actions, j\in [m]\}$, where $T_{\rho, i,j}$ is defined as \[
T_{\rho, i,j} = \{t\in \reals^m \mid \sum_{k=j}^m p_{ik}(t(\rho(k))-t(\rho(j))) = c_i\}. 
\]
\item $\mathcal{A}_4 = \{Z_{i_1, i_2, S_1, S_2} \mid i_1\ne i_2 \in \actions, S_1, S_2 \subseteq \outcomes\}$
where $Z_{i_1, i_2, S_1, S_2}$ is defined as 
\[
Z_{i_1, i_2, S_1, S_2} = \{t\in \reals^m \mid \linz{S_1}_{i_1}(t) = \linz{S_2}_{i_2}(t)\}.
\]
\end{enumerate}
We consider the hyperplane arrangement $\mathcal{A} = \mathcal{A}_1 \cup \mathcal{A}_2 \cup \mathcal{A}_3 \cup \mathcal{A}_4$, and show that it satisfies the conditions of Corollary~\ref{cor:0_faces}.

It is immediate that $\mathcal{A}_1$ guarantees the condition that any face $F$ satisfies $F \subseteq \contractregion$ or $\relint(F)\cap \contractregion = \emptyset$, i.e., it creates a separation between regions which define a contract ($\contractregion$) and regions which do not. The hyperplanes of $\mathcal{A}_2$ capture the transition points in $P_{\outcomes}$, these are points at which the agent's preference between outcomes might change. More specifically, each hyperplane $E_{j_1, j_2}$ captures the places where the agent may change his preference between $j_1$ and $j_2$. 

The hyperplanes of $\mathcal{A}_3$ capture the transition points in $\haltgreater$ and $\haltsmaller$. More specifically for any bijection $\rho: \outcomes \rightarrow \outcomes$, within regions that satisfy $t(\rho(1)) \le \dots \le t(\rho(m))$, each hyperplane $T_{\rho, i, j}$ captures the points where the agent might change his choice of halting before action $i$ when the highest outcome revealed is $\rho(j)$.

Finally, the hyperplanes of $\mathcal{A}_4$ capture transition points in $\actionspreorder$. More specifically, for any two actions $i_1\ne i_2\in \actions$, within regions that satisfy $\haltgreater_{i_1} = S_1$ and $\haltgreater_{i_2} = S_2$, the hyperplane $Z_{i_1, i_2, S_1, S_2}$ captures the points where the agent might choose to switch his ordering of $i_1$ and $i_2$.

\begin{restatable}{proposition}{finalhyperplanearrangment}\label{prop:final_hyperplane_arrangment}
    For any face $F$ of the hyperplane arrangement $\mathcal{A}$ such that $F\subseteq \contractregion$, it holds that, within $\relint(F)$, the 4-tuple $(\actionspreorder, \outcomespreorder, \haltgreater, \haltsmaller)$ is constant.
\end{restatable}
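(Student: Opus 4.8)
The plan is to fix a face $F$ with $F \subseteq \contractregion$ and show that each of the four components $\actionspreorder$, $\outcomespreorder$, $\haltgreater$, $\haltsmaller$ is constant on $\relint(F)$, handling them essentially one at a time, and using Observation~\ref{obs:face_relint_cond} repeatedly: for every hyperplane $A \in \mathcal{A}$, either $\relint(F) \subseteq A$ or $\relint(F) \cap A = \emptyset$, so on $\relint(F)$ the sign of each defining linear form of $\mathcal{A}$ is constant (either identically zero, or strictly positive throughout, or strictly negative throughout). The argument is then a matter of showing that each of the four combinatorial objects is determined by the sign pattern of the forms defining $\mathcal{A}_1,\dots,\mathcal{A}_4$.

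First I would dispatch $\outcomespreorder$: for any two outcomes $j_1 \ne j_2$, whether $t(j_1) \le t(j_2)$, $t(j_1) = t(j_2)$, or $t(j_1) \ge t(j_2)$ is exactly the sign of the form defining $E_{j_1,j_2} \in \mathcal{A}_2$, which is constant on $\relint(F)$; hence $\outcomespreorder$ is constant there. Next, I would handle $\haltgreater$ and $\haltsmaller$ together. Since $\outcomespreorder$ is already constant on $\relint(F)$, there is a fixed bijection $\rho:\outcomes\to\outcomes$ with $t(\rho(1)) \le \cdots \le t(\rho(m))$ throughout $\relint(F)$ (break ties in $\rho$ consistently using the consistent order on outcomes). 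By Lemma~\ref{lem:linz}, $z_i(t) = \linz{S}_i(t)$ with $S = \haltgreater_i(t)$, and unwinding the definition, $j \in \haltgreater_i(t)$ iff $t(j) > z_i(t)$; using the $\rho$-ordering this is equivalent to a threshold condition on $j$, and the threshold is governed precisely by the sign of the form defining $T_{\rho,i,j'} \in \mathcal{A}_3$ for the relevant $j'$. I need to verify the bookkeeping that $\sum_{k=j}^m p_{ik}(t(\rho(k)) - t(\rho(j))) - c_i$ has constant sign on $\relint(F)$ and that this sign determines whether $\rho(j)$ (and hence which outcomes) lie above or below $z_i(t)$; this pins down $\haltgreater_i$ and $\haltsmaller_i$ for every $i$, so these two components are constant on $\relint(F)$.

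Finally, for $\actionspreorder$, I would use that $\haltgreater$ is now known to be constant on $\relint(F)$, so there are fixed sets $S_1 = \haltgreater_{i_1}(t)$ and $S_2 = \haltgreater_{i_2}(t)$ valid throughout $\relint(F)$; then by Lemma~\ref{lem:linz}, $z_{i_1}(t) = \linz{S_1}_{i_1}(t)$ and $z_{i_2}(t) = \linz{S_2}_{i_2}(t)$ on $\relint(F)$, and the comparison $z_{i_1}(t) \lessgtr z_{i_2}(t)$ is the sign of the form defining $Z_{i_1,i_2,S_1,S_2} \in \mathcal{A}_4$, which is constant on $\relint(F)$; hence $\actionspreorder$ is constant there. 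Combining the four parts gives the claim. The main obstacle I anticipate is the $\mathcal{A}_3$ step: one must be careful that the linear forms $\sum_{k=j}^m p_{ik}(t(\rho(k))-t(\rho(j))) - c_i$ actually capture the membership condition $t(\rho(j)) > z_i(t)$ correctly — in particular handling the degenerate cases where some $p_{ij}=0$, where $\sum_{k \in \haltgreater_i} p_{ik}=0$ (so $z_i$ is defined as $+\infty$ or via the $c_i=0$ convention $z_i=\infty$), and where ties $t(\rho(k))=t(\rho(k+1))$ occur — and checking that none of these edge cases can make $\haltgreater_i$ or $\haltsmaller_i$ jump within $\relint(F)$ without some form of $\mathcal{A}$ changing sign.
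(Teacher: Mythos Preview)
Your proposal is correct and takes essentially the same approach as the paper: handling $\outcomespreorder$ first via $\mathcal{A}_2$, then $\haltgreater,\haltsmaller$ via $\mathcal{A}_3$ using a fixed $\rho$, and finally $\actionspreorder$ via $\mathcal{A}_4$ with the now-constant $\haltgreater$ and Lemma~\ref{lem:linz}. The only stylistic difference is that the paper phrases each step as a proof by contradiction---using continuity on the segment between two points of $\relint(F)$ to locate an intersection with the relevant hyperplane and then invoking Observation~\ref{obs:face_relint_cond}---whereas you directly observe that the sign of each defining linear form is constant on $\relint(F)$; these are equivalent, and the edge cases you flag are not treated explicitly in the paper either.
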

The proof of this proposition relies on an analysis of each of the tuple's components, together with Observation~\ref{obs:face_relint_cond}, and is deferred to Appendix~\ref{app:hyperplane_arrangement_correctness}.

To conclude the proof we note that $|\mathcal{A}_1|=m$, $|\mathcal{A}_2| \le m^2$, $\mathcal{A}_3 \le m! \cdot n \cdot m$, and $|\mathcal{A}_4| \le n^2 \cdot 2^{2m}$, so overall $|\mathcal{A}|$ is polynomial in $n$ as needed. It is also clear that the hyperplanes of $\mathcal{A}$ can be computed in polynomial time. This, together with Proposition~\ref{prop:final_hyperplane_arrangment} and Corollary~\ref{cor:0_faces} concludes the proof of Theorem~\ref{thm:opt_general_contract} by showing an algorithm with a running time of $O(|\mathcal{A}|^m (n m^2+m^3))=O(n^{2m+1})$, where $n m^2$ is the complexity of evaluating the principal's utility at each vertex and $m^3$ is the complexity of finding each vertex (equivalently, solving a system of $m$ linear equations in $m$ variables).
\end{proof}

\subsection{NP-hardness of Optimal General Contract for Non-Constant $m$}\label{sec:independent_general_negative}
In this section we complement our polytime algorithm from Section~\ref{sec:independent_general_positive} by showing that when removing the restriction of a constant $m$, computing the optimal general contract is NP-hard.
\begin{theorem} \label{thm:independent_np_hard}
    In the independent action model, the optimal general contract is NP-hard to compute
\end{theorem}
\begin{proof}
Our proof is via a reduction from PARTITION.
Let $a_1, \dots, a_k \in (0,1)$ such that $\sum_{i=1}^k a_i = 0.2$. We wish to decide whether there exists a subset $S\subseteq [k]$ such that $\sum_{i\in S} a_i =0.1$. 
Consider the following contract instance with $k+2$ outcomes and $3$ actions (and parameters $\varepsilon= 0.01 \min_{i\in [k]}a_i$, $q$ that solves $q^2 (-10 + \varepsilon)+q (-2 - 0.8 \varepsilon)  +0.9- 0.99 \varepsilon=0$, which is approximately $q\approx 0.216$, and $c = \frac{\varepsilon q}{10}$): 
\begin{enumerate}
    \item Here outcome $0$ (instead of outcome $1$) is the zero outcome which is always available to the agent (even without taking any actions).
    \item $r_0 = r_1 = \dots = r_{k} = 0$, and $r_{k+1} = 1$.
    \item Action $1$ is free ($c_1= 0$) and leads to outcome $k+1$ with probability $1-\varepsilon$ and otherwise to outcome $0$ (i.e., $p_{1, k+1} = 1-\varepsilon$ and $p_{1, 0} = \varepsilon$).
    \item Action $2$ is free ($c_2 = 0$) and leads to each outcome $j\in [k]$ with probability $a_j$, otherwise it leads to outcome $0$ (i.e., $p_{2,j} = a_j$ for each $j\in [k]$ and $p_{2, 0} = 1- \sum_{i=1}^k a_i = 0.8$).
    \item Action $3$ has cost $c$ ($c_3 = c$) and leads to outcome $k+1$ with probability $q$, leads to each outcome $j \in [k]$ with probability $a_j$, and otherwise leads to outcome $0$ (i.e.,  for any $j\in [k]$, $p_{3, j} = a_j$ and $p_{3, k+1} = q$, $p_{3, 0} = 1-\sum_{i=1}^k a_i-q = 0.8-q$).
\end{enumerate}

This construction is given in Table~\ref{table:np_hardness_construction}.

\begin{table}[h!]
\centering
\begin{tabular}{ |c|c|c|c|c|c|c| } 
 \hline
 Action & $c_i$ & $p_{i, 0}$ & $p_{i, 1}$ & \dots & $p_{i,k}$ & $p_{i, k+1}$ \\
 \hline
 Action $1$ & $0$ & $\varepsilon$ & $0$ & \dots & $0$ & $1-\varepsilon$\\ 
 Action $2$ & $0$ & $0.8$ & $a_1$ & \dots & $a_k$ & $0$\\ 
 Action $3$ & $c$ & $0.8-q$ & $a_1$ & \dots & $a_k$ & $q$\\ 
 \hline
\end{tabular}
\caption{An overview of costs and probabilities in the reduced sequential contracts instance.}\label{table:np_hardness_construction}
\end{table}

Before proceeding with the proof, we give intuition. The idea here is that the agent's best response is always to start by taking actions $1$ and $2$, then potentially also take action $3$, and finally choose an outcome with maximal $t(j)$. Thus, the contract design problem is essentially how to best incentivize the agent to take action $3$ (the only costly action), while ensuring the agent also selects outcome $k+1$ (the only good outcome). A crucial feature of this construction is that the optimal contract $t^\star$ always satisfies $t^\star(k+1) \ge t^\star(i)$ for any $i\in [k]$; intuitively, this is to ensure that the agent chooses the good outcome when halting. This key property, along with the fact that the principal wants to keep $t^\star(k+1)$ as low as possible, since this will be the payment to the agent in most cases, implies that an optimal contract is of a very special form, fully described by a subset $S\subseteq [k]$ (see Definition~\ref{def:equal_spread_contract}). Furthermore, the principal's utility is maximized if and only if this subset satisfies $\sum_{j\in S} a_j = 0.1$.

The following observation is immediate from our construction and from the optimality of Weitzman's algorithm (see Observation~\ref{obs:agent_opt_strat_cond})
\begin{observation} \label{obs:agent_best_response}
    The agent's best response to any contract $t$ is to take actions $1$ and $2$, then take action $3$ if and only if $z_3(t) \ge 0$ and no outcome $j$ with $t(j) > z_3(t)$ has been revealed, and then choose an outcome with maximal $t(j)$ as the final outcome.
\end{observation}

We now define a simple form of contracts which is optimal for our constructed problem instance.
\begin{definition} \label{def:equal_spread_contract}[equal-spread contract]
    In a reduced contract instance, an \emph{equal-spread contract} with respect to a set $S\subseteq[k]$ equals
    \[
    t^S(j) = \frac{c}{q+\sum_{\ell\in S} a_\ell} \cdot \indicator[j\in S \cup \{k+1\}]
    \]
\end{definition}

The equal-spread contract $t^S$ incentivizes the agent to take action $3$ if and only if action $1$ yielded outcome $0$ and action $2$ yielded an outcome not in $S$.

\begin{restatable}{claim}{equalspreadutil}\label{claim:equal_spread_util}
    For any $S\subseteq [k]$, the principal's utility from the equal-spread contract $t^S$ is given by 
    \[
    u_P(t^S) = \underbrace{1-\varepsilon}_{\mathclap{\text{exp. reward from action $1$}}} + \overbrace{\varepsilon(1- x) q}^{\mathclap{\text{exp. reward from action $3$}}} - \underbrace{\left(1- \varepsilon (1- x) (1-x-q)\right)}_{\mathclap{\text{prob. of positive payment}}} \overbrace{\frac{c}{x+q}}^{\mathclap{\text{positive payment amount under $t^S$}}},
    \]
    where $x=\sum_{j\in S} a_j$.
\end{restatable}
The proof of Claim~\ref{claim:equal_spread_util} is deferred to Appendix~\ref{app:equal_spread_util}.

\begin{restatable}{claim}{optisequalspread} \label{claim:opt_is_equal_spread}
    The optimal contract is an equal-spread contract.
\end{restatable}
We prove  Claim~\ref{claim:opt_is_equal_spread} by leveraging Claim~\ref{claim:equal_spread_util} and Observation~\ref{obs:agent_best_response} to
show that an optimal contract $t^\star$ equals to the equal-spread contract $t^S$, where $S= \{j\in [k] \mid t^\star(j) > z_{3}(t^\star)\}$. The proof is deferred to Appendix~\ref{app:opt_is_equal_spread}.

Our choice of $\varepsilon,q,c$ is such that $x=0.1$ uniquely maximizes the expression given for the principal's utility by Claim~\ref{claim:equal_spread_util}, in the range $x\in [0, \infty)$. Additionally, when $x=0.1$ the principal's utility is strictly greater than $1-\varepsilon$.
This, combined with Claim~\ref{claim:opt_is_equal_spread} and Claim~\ref{claim:equal_spread_util} implies that the answer to the original PARTITION problem instance is ``yes'' if and only if $t^\star(k+1)=\frac{c}{s+q}$, thus implying the NP-hardness of computing an optimal contract.
\end{proof}

\section{Correlated Actions} \label{sec:corr} 
In this section we establish a hardness result for sequential contract design in the correlated action model. 
The complexity of algorithms in this model is given with respect to the size of the support of the multi-dimensional distribution from which the random variables $\{X_i\}_{i\in \actions}$ are drawn. 

Our main result in this section is that in the correlated action model, it is NP-hard to approximate the optimal contract within any constant, even in the binary-outcome model (see Theorem~\ref{thm:corr_contract_hardness}). 
This result also implies a hardness result for the optimal linear contract in this setting, since in a binary-outcome model, there exists an optimal contract that is linear.

The remainder of this section is structured as follows. 
In Section~\ref{sec:init_corr} we formally define the binary-outcome model and present necessary preliminaries. 
In particular, we show an interesting equivalence between coverage set functions (see Definition~\ref{def:coverage}) and ``correlated OR'' set functions (see Proposition~\ref{prop:coverage_equiv}).
In Section~\ref{sec:corr_contract} we present our hardness result for correlated-action sequential contracts.

\subsection{Correlated Actions: Preliminaries} \label{sec:init_corr}
We restrict attention to the binary-outcome scenario. Since our result is a hardness one, this only strengthens it. 
In the binary-outcome scenario, the set of outcomes is $\{1, 2\}$, and the corresponding rewards are $r(1) = 0$ and $r(2)=1$. 

In the Pandora's Box problem corresponding to the agent's best response problem, this means that all random variables $\{V_i\}_{i\in \boxes}$ attain values in $\{0, r\}$, for some $r\in (0,1]$. We refer to this problem as the binary-outcome correlated Pandora's Box problem.

The following proposition by \citet{chawla2020pandora} shows that an optimal strategy in the binary-outcome correlated Pandora's Box problem takes a simple form:
\begin{proposition} \citep{chawla2020pandora}\label{prop:strat_corr_pandora_binary}
In the binary-outcome correlated Pandora's Box problem any optimal strategy can be represented as a tuple $\strat = (i_1, \dots, i_\ell)\in \boxes^\ell$ (for some $\ell \ge 0$). This tuple
is interpreted as opening boxes $i_1,\dots, i_\ell$ in this order, 
halting when a non-zero value is revealed or when all boxes in $\strat$ have been opened.
\end{proposition}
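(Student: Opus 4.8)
\textbf{Proof plan for Proposition~\ref{prop:strat_corr_pandora_binary}.}

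The plan is to reduce the binary-outcome correlated Pandora's Box problem to a cleaner combinatorial object and then argue that adaptivity buys nothing once all revealed values are either $0$ or $r$. First I would note that the only informative event, upon opening a box $i$, is whether $V_i = r$ or $V_i = 0$: if $V_i = r$ then the maximum achievable value $r$ has already been obtained, so continuing can only incur further cost, hence any optimal strategy must halt immediately (it cannot strictly improve, and under the standard tie-breaking convention one may as well halt). Thus the only decisions an optimal strategy ever faces are \emph{which box to open next} at a history where every box opened so far revealed $0$. Consequently the entire strategy is described by the sequence of boxes it would open along the all-zero history, namely a tuple $(i_1,\dots,i_\ell)\in\boxes^\ell$: open $i_1,\dots,i_\ell$ in order, halting the moment some $i_k$ reveals $r$, and halting after $i_\ell$ otherwise. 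This is exactly the claimed form, so it remains to show that \emph{some} optimal strategy is of this form — equivalently, that restricting to such tuples does not lose optimality.

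For that step I would compute the utility of the tuple-strategy $\strat = (i_1,\dots,i_\ell)$ explicitly. Writing $E_{<k}$ for the event that $V_{i_1} = \dots = V_{i_{k-1}} = 0$, the strategy opens box $i_k$ precisely on $E_{<k}$, so its expected cost is $\sum_{k=1}^{\ell} c_{i_k}\cdot\Pr[E_{<k}]$, and its expected value is $r\cdot\Pr[\exists k\le\ell: V_{i_k}=r] = r\cdot(1-\Pr[E_{<\ell+1}])$. Hence
\[
u(\strat) = r\left(1-\Pr[V_{i_1}=\dots=V_{i_\ell}=0]\right) - \sum_{k=1}^{\ell} c_{i_k}\Pr[V_{i_1}=\dots=V_{i_{k-1}}=0].
\]
The point of the first paragraph is that an \emph{arbitrary} (fully adaptive) optimal strategy $\strat'$ also has this structure: by the halt-on-$r$ observation, $\strat'$ only ever branches on all-zero histories, and at such a history it deterministically (after tie-breaking) picks some next box; following these choices yields a tuple $(i_1,\dots,i_\ell)$, and — since on every all-zero prefix the continuation of $\strat'$ agrees with opening the next box in the tuple, while on any non-all-zero history $\strat'$ has already halted — the induced tuple-strategy realizes the exact same distribution over (set of boxes opened, value obtained) as $\strat'$. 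Therefore it has the same utility, hence is also optimal.

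The main obstacle is the bookkeeping in the previous sentence: making precise that a general adaptive strategy, once we know it halts on every realized $r$, is fully determined by its behavior on the single all-zero "spine" of the decision tree, and that collapsing it to a tuple preserves the joint law of cost and payoff. A subtlety worth handling carefully is the possibility that an optimal adaptive strategy randomizes or that several next-box choices tie; here one invokes the assumption (from Section~\ref{sec:model}, imposed via a fixed consistent order on strategies) that ties are broken consistently, so a canonical deterministic choice exists at each all-zero history and the spine is well defined. Given this, the equality of utilities is immediate from the displayed formula, completing the argument; I expect no further computation is needed, as the formula above is the only quantitative ingredient.
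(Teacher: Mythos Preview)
The paper does not give its own proof of this proposition: it is quoted directly from \citet{chawla2020pandora} and used as a black box, so there is nothing to compare your argument against in the paper itself.

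That said, your sketch is the standard argument and is essentially correct. One small point to tighten: you assert that ``any optimal strategy must halt immediately'' upon revealing $r$, appealing to a tie-breaking convention. In the Pandora's Box problem as stated there is no such convention, and if some unopened box has cost~$0$ an optimal strategy may continue without loss. The clean fix is exactly what you gesture at later: rather than claiming every optimal strategy halts on $r$, argue that replacing ``continue after $r$'' by ``halt'' never decreases utility, so some optimal strategy halts on every $r$; then extract the tuple from \emph{that} strategy's all-zero spine. With this adjustment the ``bookkeeping'' step you flag --- that a halt-on-$r$ strategy is fully determined by its behavior on the unique all-zero history path, and that the induced tuple realizes the same cost/payoff distribution --- goes through verbatim, and your displayed utility formula is the right computation.
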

We assume without loss of generality that in such representations 
the last box is opened with a non-zero probability.
Proposition~\ref{prop:strat_corr_pandora_binary} implies that, in a binary-outcome correlated-action model, the agent's and principal's utilities
can be fully described by the set of actions $A$, the costs $\{c_i\}_{i \in A}$, and the ``correlated OR'' set function $f:2^A \rightarrow [0,1]$, where 
\begin{equation}
\label{eq:correlated-or}
f(S) = Pr[\exists i\in S.~X_i > 0].
\end{equation}
Indeed, the agent's expected cost from a strategy $\strat=(i_1,\ldots,i_\ell)$ is given by 
\[
c(\strat) = \sum_{k=1}^\ell (1-f(\{i_1, \dots, i_{k-1}\})) c_k.
\]
Then, given a contract $t_\alpha$ for some $\alpha \in [0,1]$ (in the binary case, linear contracts are without loss of generality), the agent's and principal's expected utilities are
\[
u_A(t_\alpha, \strat) = \alpha \cdot f(\{i_1, \dots, i_\ell\})  - c(\strat) ~\text{and}~u_P(t_\alpha, \strat) = (1-\alpha) f(\{i_1, \dots, i_\ell\}).
\]

This implies that a problem instance 
of the correlated-action sequential contracts 
can be represented by a tuple $(\actions, \{c_i\}_{i\in \actions}, f)$.
The ``correlated OR'' function $f$ is given by the OR of correlated Bernoulli random variables.

\begin{definition} \label{def:coverage}
A set function $f:2^A\rightarrow \reals_{\ge 0}$ is \emph{coverage} if there is a set of elements $U$, with associated non-negative weights  $\{w_u\}_{u\in U}$, and a mapping $h:A \rightarrow 2^U$ such that for every $S\subseteq A$, 
$f(S)=\sum_{u\in U } w_u \cdot \indicator[\exists i \in S.~u \in  h(i)] $.
A coverage function can be described by a tuple $(U,\{w_u \}_{u\in U}, A,h)$.
\end{definition}

The following proposition shows an equivalence between correlated OR functions (as in Equation~\eqref{eq:correlated-or}) and coverage functions. Importantly, it also shows that the coverage function $f$ has the same (asymptotic) representation size as that of the joint distribution of $\{X_i\}_{i\in \actions}$ when given explicitly (which we defined as the size of the support).  This means our choice of representation doesn't affect an algorithm's complexity as a function of input size.

\begin{proposition} \label{prop:coverage_equiv}
For every (correlated) Bernoulli random variables $\{X_i\}_{i\in A}$ with support $\mathcal{R} \subseteq \{0,1\}^A$ and probability density function $p: \mathcal{R} \rightarrow [0,1]$, there exists a coverage function  $f=(U, \{w_u\}_{u\in U}, A, h)$ with $|U| \le |\mathcal{R}|$ such that $f(S) =  Pr[\exists i\in S.~X_i > 0]$. 

Moreover, for every coverage function $f=(U, \{w_u\}_{u\in U}, A, h)$ such that $Im(f) \subseteq [0,1]$, there exist (correlated) Bernoulli random variables $\{X_i\}_{i\in A}$ with support $\mathcal{R}\subseteq \{0,1\}^A$ for which
$f(S) =  Pr[\exists i\in S.~X_i > 0]$,
and $|\mathcal{R}| \le |U|+1$.
\end{proposition}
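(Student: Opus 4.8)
The plan is to exhibit two explicit constructions, one per direction, and verify each by unwinding the definition of a coverage function. For the forward direction, given $\{X_i\}_{i\in A}$ with support $\mathcal{R}$ and density $p$, I would take the ground set to be $U=\mathcal{R}$ itself, assign to each realization $v\in\mathcal{R}$ the weight $w_v=p(v)\ge 0$, and set $h(i)=\{v\in\mathcal{R}: v_i=1\}$, the set of realizations in which action $i$ succeeds. Then for every $S\subseteq A$,
\[
f(S)=\sum_{v\in\mathcal{R}}w_v\cdot\indicator[\exists i\in S.~v\in h(i)]=\sum_{v\in\mathcal{R}:~\exists i\in S.~v_i=1}p(v)=Pr[\exists i\in S.~X_i>0],
\]
so $f$ is coverage with $|U|=|\mathcal{R}|\le|\mathcal{R}|$, and its image lies in $[0,1]$ automatically, being a probability.

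For the reverse direction I would first normalize the coverage representation: deleting every $u\in U$ with $w_u=0$ and every $u\in U$ lying in no set $h(i)$ (such $u$ never affect $f(S)$), we may assume $w_u>0$ for all $u$ and that every $u$ belongs to some $h(i)$, whence $\sum_{u\in U}w_u=f(A)\le 1$ using $Im(f)\subseteq[0,1]$. Now define a distribution on $\{0,1\}^A$: for each $u\in U$ let $v^u\in\{0,1\}^A$ be the indicator vector with $v^u_i=1\iff u\in h(i)$, place probability mass $w_u$ on $v^u$, and place the remaining mass $1-\sum_{u\in U}w_u\ge 0$ on the all-zeros vector $\mathbf{0}$ (coinciding $v^u$'s are merged by summing their masses). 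Since every surviving $u$ is covered by some action, $v^u\ne\mathbf{0}$, so the support $\mathcal{R}$ of this distribution is contained in $\{v^u:u\in U\}\cup\{\mathbf{0}\}$, giving $|\mathcal{R}|\le|U|+1$. Letting $\{X_i\}_{i\in A}$ follow this distribution, the atom $\mathbf{0}$ contributes nothing to the event $\{\exists i\in S.~X_i>0\}$, while $v^u$ contributes its mass $w_u$ precisely when $u\in h(i)$ for some $i\in S$, so summing yields $Pr[\exists i\in S.~X_i>0]=\sum_{u:~\exists i\in S.~u\in h(i)}w_u=f(S)$; the merging step is consistent because coinciding $v^u$'s induce the same ``covered by $S$'' predicate, so they are included or excluded together in this sum.

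The constructions are one-liners; the only care needed is in the bookkeeping that produces the exact size bounds. In the reverse direction one must prune the zero-weight and uncovered elements and invoke $f(A)\le 1$ to guarantee that $1-\sum_u w_u$ is a legal (non-negative) probability, and must identify duplicate indicator vectors $v^u$ so that the support has at most $|U|+1$ \emph{distinct} points. I expect this normalization-and-merging bookkeeping to be the only place where a subtlety could be overlooked; the probabilistic content of both directions is an immediate verification.
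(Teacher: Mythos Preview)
Your proposal is correct and follows essentially the same approach as the paper: in both directions the constructions coincide (take $U=\mathcal{R}$ with $w_v=p(v)$ and $h(i)=\{v:v_i=1\}$ for the forward direction; sample a latent element of $U$ plus a residual all-zeros atom for the reverse). The only cosmetic difference is that the paper packages the pruning of uncovered or zero-weight elements into a ``without loss of generality $\sum_u w_u\le 1$'' remark and bounds the support via the number of values of the latent variable, whereas you explicitly prune, merge duplicate indicator vectors, and observe $v^u\ne\mathbf{0}$; these are equivalent bookkeeping choices.
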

\begin{proof}
    Let $\{X_i\}_{i\in A}$ be (correlated) Bernoulli random variables with support $\mathcal{R}$ and probability density function $p:\mathcal{R}\rightarrow [0,1]$. We consider the coverage function $f=(U, \{w_u\}_{u\in U}, A, h)$, where we define $U=\mathcal{R}$, for each $u\in U$ we define $w_u = p(u)$, and for each $i\in A$ we define $h(i)= \{u \in U \mid u_i > 0\}$. This yields
    \[
    Pr[\exists i\in S. ~X_i > 0] = \sum_{v\in \mathcal{R}} p(v)\indicator[\exists i\in S. ~v_i > 0] 
    = \sum_{u\in U } w_u \cdot \indicator[\exists i \in S.~u \in  h(i)] = f(S),
    \]
    as needed.

    In the other direction, let $f=(U, \{w_u\}_{u\in U}, A, h)$ such that $Im(f) \subseteq [0,1]$. Assume without loss of generality\footnote{This is without loss of generality because the only relevant elements of $U$ are those that belong to some $h(i)$, and the sum of their weights is $f(A) \le 1$.} that $\sum_{u \in U} w_u \le 1$.
    
    We define our variables $\{X_i\}_{i\in A}$ by sampling an element $x\in U \cup \{0\}$, where each element $u\in U$ is sampled with probability $Pr[x=u] = w_u$,
    and $0$ is sampled with probability $1-\sum_{u\in U} w_u$.
    For any $i\in A$ and $u\in U$, we denote $X_i(u) = \indicator[u\in h(i)]$ and $X_i(0) = 0$. Our random variables $\{X_i\}_{i\in A}$ are defined as $X_i = X_i(x)$, for any $i\in A$. Now:
    \[
    f(S) = \sum_{u\in U} w_u \cdot \indicator[\exists i \in S.~u \in  h(i)]  = \sum_{u\in U} Pr_x[x=u] \cdot \indicator[\exists i \in S. ~X_i(u) > 0] = Pr[\exists i\in S.~X_i > 0],
    \]
    as needed. To show the bound on the size of $\mathcal{R}$, we note that each $X_i$ is a function of $x$, and there are at most $|U|+1$ possible values of $x$ we get that $\mathcal{R}$, which is defined as support of $\{X_i\}_{i\in A}$, has size $|\mathcal{R}|\le |U|+1$.
\end{proof}

\begin{remark}
Proposition~\ref{prop:coverage_equiv} can be generalized further to an equivalence between coverage functions and ``correlated maximum'' functions, i.e., functions which map subsets of random variables to their expected maximum. This generalization is shown in Appendix~\ref{app:generalized_coverage_equiv}.
\end{remark}

It is known that any coverage function is also \emph{submodular}. A set function $f:2^A\rightarrow \reals_{\ge 0}$ is \emph{submodular} if for any two sets $S, S'\subseteq A$ s.t. $S\subseteq S'$ and $i\in A$ it holds that $f(i\mid S) \ge f(i \mid S')$, where $f(i \mid S) = f(S \cup \{i\}) - f(S)$ is the marginal contribution of $i$ to $S$.

It is also known that any submodular function (and therefore, any coverage function) is also \emph{subadditive}, which means for any $S,T\subseteq A$, $f(S\cup T) \le f(S) + f(T)$. 

\subsection{Hardness of Approximation of the Optimal Contract Under Correlated Actions} \label{sec:corr_contract}

\citet{chawla2020pandora} established a hardness of approximation for the correlated Pandora's Box problem. 
However, this hardness result does not directly imply a hardness of approximation for the optimal contract problem.
This is because the hardness of Pandora's Box corresponds to the hardness of the agent's best response problem, but even if the agent's problem is hard to approximate, it may still be the case that approximating the optimal contract admits a poly-time algorithm. 

In this section we present the following hardness of approximation result.
\begin{theorem} \label{thm:corr_contract_hardness}
    In the correlated action model, it is NP-hard to approximate the optimal contract within any constant, even in the binary-outcome case.
\end{theorem}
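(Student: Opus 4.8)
\textbf{Proof plan for Theorem~\ref{thm:corr_contract_hardness}.}

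The plan is to reduce from the NP-hard promise problem on coverage functions introduced in \citet{ezra2023approximability}, exploiting the equivalence between coverage functions and ``correlated OR'' functions established in Proposition~\ref{prop:coverage_equiv}. Recall from Section~\ref{sec:init_corr} that a binary-outcome correlated-action instance is fully specified by the tuple $(\actions, \{c_i\}_{i\in \actions}, f)$, where $f$ is the correlated-OR function; and that under a linear contract $t_\alpha$, an optimal agent strategy is an ordered tuple of actions that halts on the first success (Proposition~\ref{prop:strat_corr_pandora_binary}). The promise problem from \citet{ezra2023approximability} asks, roughly, to distinguish instances where a coverage function $f$ admits a ``cheap'' set $S$ with large $f(S)$ from instances where every set with large $f$-value has prohibitively large cost --- precisely the structure that controls whether a principal can extract utility via a contract. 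I would take such a promise instance, pass through Proposition~\ref{prop:coverage_equiv} to obtain a correlated-OR function of polynomially-related support size, and attach the costs $\{c_i\}$ dictated by the hard instance.

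The key steps, in order: (i) State the \citet{ezra2023approximability} promise problem precisely and recall that it is NP-hard to decide (and that this hardness is what underlies their constant-factor inapproximability for submodular combinatorial contracts). (ii) Given a promise instance specified as a coverage function $(U, \{w_u\}, A, h)$ with costs, invoke the second direction of Proposition~\ref{prop:coverage_equiv} to realize $f$ as the correlated-OR of Bernoulli variables $\{X_i\}_{i\in A}$ with support of size $|U|+1$, yielding a valid binary-outcome correlated-action instance of polynomial size. (iii) Analyze the principal's optimal utility: since in the binary case linear contracts are without loss of generality, $u_P(t^\star) = \max_\alpha (1-\alpha) f(\{i_1,\dots,i_\ell\})$ where $(i_1,\dots,i_\ell)$ is the agent's best response to $t_\alpha$; show that a ``yes'' instance (a cheap set with large coverage) forces the existence of an $\alpha$ bounded away from $1$ for which the agent is incentivized to probe a set with large $f$-value, giving the principal utility at least some threshold $u_{\mathrm{hi}}$, while a ``no'' instance (every high-coverage set is expensive) caps the principal's utility at $u_{\mathrm{lo}}$ for every $\alpha$, because incentivizing any strategy with non-negligible $f$-value requires $\alpha$ so close to $1$ that $(1-\alpha)f$ is tiny. (iv) Conclude that a constant-factor approximation of $u_P(t^\star)$ would distinguish yes from no instances, contradicting NP-hardness; and since the gap $u_{\mathrm{hi}}/u_{\mathrm{lo}}$ inherited from \citet{ezra2023approximability} can be made an arbitrary constant, no constant-factor approximation is possible unless P $=$ NP. Finally note that since the optimal contract here is linear, this also rules out approximating the optimal \emph{linear} contract.

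The main obstacle I anticipate is step (iii): the \citet{ezra2023approximability} hardness is phrased for the \emph{agent's} combinatorial optimization (choosing a single subset), whereas here the quantity of interest is the \emph{principal's} equilibrium utility, which depends on the agent's incentive-compatible best response as $\alpha$ varies and on the subtlety that the agent breaks ties in the principal's favor. The delicate part is translating ``every high-coverage set is costly'' into a uniform upper bound on $(1-\alpha)f(S_\alpha)$ over \emph{all} $\alpha$ simultaneously --- one must argue that if $\alpha$ is small then the agent's best response has negligible coverage (hence negligible principal reward), and if $\alpha$ is large then $(1-\alpha)$ kills the utility regardless; choosing the cost scale in the reduction so that these two regimes meet at the desired $u_{\mathrm{lo}}$, while keeping $u_{\mathrm{hi}}$ large in yes-instances, is where the quantitative care is needed. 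Subadditivity of $f$ (noted at the end of Section~\ref{sec:init_corr}) should help bound the agent's incremental incentives and thereby control the best-response coverage as a function of $\alpha$.
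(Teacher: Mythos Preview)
Your high-level approach matches the paper's exactly: reduce from the coverage-function promise problem of \citet{ezra2023approximability} (the paper's Proposition~\ref{prop:coverage_hardness}), and use Proposition~\ref{prop:coverage_equiv} to realize the coverage function as a correlated-OR instance. You also correctly anticipate that the hard quantitative work is step~(iii), and that subadditivity is the main tool there. Two points, however, need correction.

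First, a minor one: the promise problem does \emph{not} come with costs. Its input is just $(k,f)$ with $f(\{i\})=1/k$ for all $i$; the costs must be engineered as part of the reduction. The paper sets all costs uniformly to $c_i = \tfrac{1.5}{k+1}$, calibrated so that in a ``yes'' instance the expected cost of probing a perfect cover of size $k$ is exactly $3/4$, making $\alpha=3/4$ incentive-compatible with principal utility $\ge 1/4$, while in a ``no'' instance subadditivity plus condition~(2) force every nonempty strategy to have cost exceeding its coverage, hence negative agent utility for \emph{every} $\alpha<1$.

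Second, and this is the real gap: your step~(iv) does not go through as stated. A $\gamma$-approximation algorithm returns a \emph{contract} $t_\alpha$, not the value $u_P(t^\star)$, and in the correlated model you cannot evaluate $u_P(t_\alpha)$ in polynomial time (that is the agent's problem, which is itself hard). So you need a polynomial-time-checkable property of the returned contract that separates the two cases. With only the construction above, the ``no'' case has $u_P(t^\star)=0$ (the agent does nothing for all $\alpha$), so \emph{any} contract --- including $\alpha=0$ --- is vacuously a $\gamma$-approximation, and $\alpha$ carries no information. The paper's fix is to add a single dummy action $0$ with $f(\{0\})=1$ and cost $c_0 = 1-\gamma/8$. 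This action is never taken when $\alpha<c_0$ (Lemma~\ref{lem:no_zero}), so it does not disturb the ``yes'' analysis at $\alpha=3/4$; but in the ``no'' case it makes the optimum strictly positive (namely $\gamma/8$, achieved at $\alpha=1-\gamma/8$), and --- crucially --- forces every positive-utility contract to have $\alpha\ge 1-\gamma/8$. The distinguisher is then simply the returned $\alpha$: in the yes-case $u_P(t_\alpha)\ge \gamma/4$ implies $\alpha\le 1-\gamma/4$, whereas in the no-case $\alpha\ge 1-\gamma/8$. This dummy-action trick is the missing idea in your plan.
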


To prove this theorem, we construct a reduction from an NP-hard promise problem dealing with coverage functions, introduced in \citep{ezra2023approximability}. This promise problem is described in the following proposition:
\begin{proposition} [\citep{ezra2023approximability}] \label{prop:coverage_hardness} 
For every $M > 1$ and every $0 < \varepsilon < e^{-1}$, on input $(k,f)$, where $k\in \mathbb{N}$ and $f=(U, \{w_u\}_{u\in U}, A, h)$ is a coverage function with $Im(f)\subseteq [0,1]$, such that $\forall i\in A.~f(\{i\}) = \frac{1}{k}$ and exactly one of the following two conditions holds:
    \begin{enumerate}
        \item There exists a set $S\subseteq A$ of size $k$ such that $f(S) = 1$.
        \item Every set $S\subseteq A$ of size $\beta k$ such that $\beta \le M$ satisfies $f(S) \le 1-e^{-\beta}+\varepsilon$.
    \end{enumerate}
    It is NP-hard to determine which of the two conditions is satisfied by the input.    
\end{proposition}

\begin{proof} [Proof of Theorem~\ref{thm:corr_contract_hardness}]
Let $\gamma \in (0,1)$ be a constant, in what follows we rule out a polytime $\gamma$-approximation algorithm for the optimal contract via a reduction from the promise problem in Proposition~\ref{prop:coverage_hardness}.

For any $(k,f': 2^{A'} \rightarrow [0,1])$ per Proposition~\ref{prop:coverage_hardness} with the parameters $M=3, \varepsilon=0.001$, where $f' = (U, \{w_u\}_{u\in U}, A', h')$.
Construct the following instance of the sequential contracts problem: $(\actions, \{c_i\}_{i\in \actions}, f)$ defined as follows. Our set of actions is $\actions = A' \cup \{0\}$. Our costs are $c_i = \frac{1.5}{k+1}$ for any $i\in A'$ and $c_0 = 1-\frac{\gamma}{8}$. Our ``correlated OR'' function is $f=(U, \{w_u\}_{u\in U}, A, h)$, where $h$ is defined as 
\[
h(i) = \begin{cases}
    h'(i) & i\ne 0\\
    U & i=0.
\end{cases}
\]
Note that for any $S\subseteq A$ that doesn't contain $0$ we have $f(S) = f'(S)$, while for any $S\subseteq A$ that contains $0$ we have $f(S) = 1$.

\begin{restatable}{lemma} {reductioncorrectness} \label{lemma: reduction correctness}
    In the constructed sequential contracts problem instance, it holds that \begin{enumerate}
        \item When $f'$ satisfies condition (1) from Proposition~\ref{prop:coverage_hardness} the optimal principal's utility is at least $\frac{1}{4}$.
        \item When $f'$ satisfies condition (2) from Proposition~\ref{prop:coverage_hardness} it holds that $u_P(t_{1-\frac{\gamma}{8}}) > 0$ and $u_P(t_\alpha)=0$ for any $\alpha < 1-\frac{\gamma}{8}$.
    \end{enumerate}
\end{restatable}
The proof of Lemma~\ref{lemma: reduction correctness} is by observing that when condition (1) holds the contract $t_{\frac{3}{4}}$ incentivizes the agent to take all the actions in $A'$ until one of the actions is successful, thus yielding a principal's utility of $\left(1-\frac{3}{4}\right)\cdot 1 = \frac{1}{4}$. Under condition (2), we show that any contract $t_\alpha$ with $\alpha < \frac{\gamma}{8}$ incentivizes the agent to take no actions, thus yielding a utility of $0$, but under the contract $t_{1-\frac{\gamma}{8}}$ the agent takes the action $0$, yielding a utility of $(1-(1-\frac{\gamma}{8}))\cdot 1=\frac{\gamma}{8} > 0$. The details are deferred to Appendix~\ref{app:correlated}.

Let $t_\alpha$ be a $\gamma$-approximation of the optimal contract.
According to Lemma~\ref{lemma: reduction correctness}, under the first case of Proposition~\ref{prop:coverage_hardness} it holds that $\alpha \le 1-\frac{\gamma}{4}$ (since the principal's utility is upper-bounded by $1-\alpha$, and $u_P(t_\alpha) \ge \gamma \cdot \frac{1}{4}$), while under the second case it holds that $\alpha \ge 1-\frac{\gamma}{8}$. Thus $t_\alpha$ can be used to distinguish between the two cases of Proposition~\ref{prop:coverage_hardness}, which rules out a polytime $\gamma$-approximation algorithm (unless P=NP).

\end{proof}

\bibliographystyle{abbrvnat}
\bibliography{bib.bib}

\appendix
\section{Linear vs. General Contracts} \label{app:independent_lin_opt_ratio}
In this appendix we study the gap between the principal's utility in a general contract vs. linear contract. 
We quantify this gap by the worst-case ratio (over all instances) between the principal's utility in the optimal general and linear contracts.

Our main result here is a lower bound of $\Omega(n)$ on this gap, even for $m=3$ outcomes\footnote{$m=3$ is minimal, as $m=2$ is the binary outcome setting, where linear contracs are optimal.}:

\begin{theorem} \label{thm:lin_optimal_ratio}
    There exists an instance with $n$ independent actions and $m=3$ outcomes such that the worst-case ratio between the optimal principal's utility in a general contract and in a linear contract is $\Omega(n)$.
\end{theorem}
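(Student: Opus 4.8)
The plan is to construct an explicit instance with $n$ actions and $m=3$ outcomes in which the optimal linear contract yields the principal only $O(1)$ (or even something vanishing as $n$ grows), while some general contract extracts utility $\Omega(n)$. The design principle is to exploit the one-dimensionality of linear contracts: with a linear contract, raising $\alpha$ to make one ``good'' action attractive simultaneously inflates the payment on a cheap ``decoy'' action's high-reward outcome, so the agent's induced Pandora's Box problem is always dominated by the decoy; a general contract, by contrast, can pay generously on the outcome that only the good actions produce while paying almost nothing on the outcome the decoy produces, thereby steering the agent's search order and halting rule exactly as desired.

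Concretely, I would use three outcomes with rewards $r(1)=0$, $r(2)=$ (small), $r(3)=$ (large, say $H$). There will be one ``productive'' action of moderate cost that yields outcome $3$ with decent probability (and outcome $1$ otherwise), and $n-1$ ``trap'' actions, each very cheap, each yielding outcome $2$ with high probability and outcome $3$ with tiny probability. The point of the traps is: under any linear contract $t_\alpha$, outcome $2$ already pays $\alpha r(2)$, so every trap action has a reservation value that is essentially $\alpha r(2)$ minus a negligible cost term; Weitzman's rule then forces the agent to open all the trap boxes first, halt as soon as he sees the (guaranteed-ish) value $\alpha r(2)$, and never reach the productive action — unless $\alpha$ is pushed so high that the principal's $(1-\alpha)$ fraction is worthless. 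Tuning $r(2)$, the trap costs, and the productive action's parameters so that for \emph{every} $\alpha$ either the agent ignores the productive action or $\alpha$ is essentially $1$, gives $u_P(t_\alpha)=O(r(2))=O(1)$ for all $\alpha$, hence optimal linear utility $O(1)$. Meanwhile the general contract $t$ with $t(3)$ large, $t(2)\approx 0$, $t(1)=0$ makes every trap action have reservation value $\approx 0$ while the productive action has a high reservation value, so the agent opens the productive action first and the principal collects $\approx p\cdot(H - t(3))=\Omega(n)$ by scaling $H$ (and the productive cost) with $n$ while keeping the linear-contract utility bounded.

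The key steps, in order: (1) write down the instance with parameters left symbolic; (2) compute, for a general $\alpha$, the reservation values $z_i(\alpha)$ of all actions using the closed form (or directly from Equation~\eqref{eq:res_value}), show the trap actions' reservation values exceed the productive action's on the whole relevant range of $\alpha$, invoke Observation~\ref{obs:agent_opt_strat_cond} to conclude the agent's best response never uses the productive action (for $\alpha$ bounded away from $1$), and bound $u_P(t_\alpha)$ accordingly — this needs a short case split for $\alpha$ near $1$ where the $(1-\alpha)$ factor kills the utility; (3) exhibit the explicit general contract, recompute reservation values, verify via Observation~\ref{obs:agent_opt_strat_cond} that the agent opens the productive action first and halts with outcome $3$ realized (or continues optimally in a way that still nets $\Omega(n)$), and compute $u_P(t)=\Omega(n)$; (4) take the ratio and choose the scaling of $H$ and costs with $n$.

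The main obstacle I anticipate is step (2): making the argument uniform over all $\alpha \in [0,1]$. It is easy to rig the instance so the linear contract fails at the ``natural'' $\alpha$, but one must rule out that some cleverly chosen $\alpha$ — in particular values where the agent becomes indifferent and tie-breaking kicks in, or intermediate values where he opens \emph{some} traps then the productive action — lets the principal do well. I would handle this by using the convexity/piecewise-linearity of $z_i(\alpha)$ from Lemma~\ref{lem:res_piecwise_lin} to control the reservation values globally, and by arranging the numbers so that the productive action's reservation value stays strictly below at least one trap's reservation value for all $\alpha \le 1-\delta$, with the contribution from the regime $\alpha > 1-\delta$ separately bounded by $(1-\alpha)\cdot r(m) \le \delta \cdot H$, choosing $\delta = \Theta(1/n)$ so this term is $O(1)$ as well.
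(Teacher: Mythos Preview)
Your construction has a genuine gap: with a \emph{single} productive action plus many trap actions, the linear-vs-general gap is $O(1)$, not $\Omega(n)$. The general contract's utility is bounded by the first-best welfare from the productive action, roughly $p\cdot r(3)-c_{\mathrm{prod}}$. To make the productive action worth doing at all you need $c_{\mathrm{prod}}/p < r(3)$; but then the linear contract at $\alpha$ slightly above $c_{\mathrm{prod}}/(p\,r(3))$ already makes the productive action individually attractive (its reservation value becomes positive and, for $\alpha$ a bit higher, exceeds the payment $\alpha r(2)$ from a trap, so the agent does \emph{not} halt after the traps). At that $\alpha$ the principal collects $(1-\alpha)\,p\,r(3)\approx p\,r(3)-c_{\mathrm{prod}}$, a constant fraction of what the general contract can get. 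Your step-(2) plan to kill this regime by bounding it as $(1-\alpha)\cdot H\le \delta H$ with $\delta=\Theta(1/n)$ does not help, because the general contract also needs $t(3)\ge c_{\mathrm{prod}}/p\approx (1-\delta)H$, so its utility $p(H-t(3))$ is likewise $O(p\,\delta H)$: the $\delta$'s cancel in the ratio. The many traps do not create any $n$-dependence, since under any contract the agent halts after the first trap that reveals outcome~2.

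The paper's construction is structurally different and worth contrasting. It uses $n$ \emph{homogeneous} actions (no distinguished productive one) and, crucially, sets $r(2)=r(3)=1$, so the two nonzero outcomes are indistinguishable to a linear contract. Each action $i$ has $p_{i2}+p_{i3}=2^{i-n-1}$ and $c_i=(2^i-i)/2^{n+1}$; under $t_\alpha$ this gives $z_i(\alpha)=\alpha-(1-i\,2^{-i})$, so to make action $i$ worthwhile requires $\alpha\ge 1-i\,2^{-i}$, forcing $(1-\alpha)$ to be tiny exactly when the high-probability actions are reached, and $u_P(t_\alpha)\le n/2^n$. The general contract instead sets $t(2)=1$, $t(3)=\varepsilon$: now the reservation values become $z_i=\varepsilon\, i\,2^{-i}>0$ for \emph{all} $i$, so the agent is willing to take every action, while the principal still keeps almost all of the reward on outcome~3, yielding $\Omega(n^2/2^n)$. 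The $\Omega(n)$ gap comes from the linear contract's inability to incentivize all $n$ actions simultaneously, not from blocking a single good action with decoys.
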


\begin{proof}
Consider an instance with an action set $\actions = [n]$, where action $i\in \actions$ has cost $c_i=\frac{2^i-i}{2^{n+1}}$. 
The outcomes are $\outcomes = \{1, 2, 3\}$, with $r(1)=0$ and $r(2), r(3)=1$. The probabilities are given by
\[
p_{ij} = \begin{cases}
    1-2^{i-n-1} & \text{if } j=1 \\
    c_i & \text{if } j=2 \\
    2^{i-n-1} - c_i & \text{if } j=3.
\end{cases} 
\]
We first establish an upper bound on the principal's utility from a linear contract. Fix some $\alpha \in [0,1]$, and let $t_\alpha$ be the corresponding linear contract.

Let $i\in \actions$ be some action, and consider the reservation value $z_i(\alpha)$, when it is non-negative (the case where it is negative is irrelevant since it implies the agent never takes action $i$). According to Equation~\eqref{eq:res_value} we have
\[
(p_{i2} + p_{i3})(\alpha - z_i(\alpha)) =\E[(\alpha \cdot r(X_i) - z_i(\alpha))^+] = c_i
\]
which implies $z_i(\alpha) = \alpha - \frac{c_i}{p_{i2}+p_{i3}}$. We now that any agent's optimal strategy is to take actions in non-decreasing order of $\frac{c_i}{p_{i2} + p_{i3}} = \frac{(2^i - i)/2^{n+1}}{2^{i-n-1}}=1-i\cdot 2^{-i}$, i.e., in order $1,\dots, n$, until one of them reveals a non-zero outcome or $\frac{c_i}{p_{i2}+p_{i3}} > \alpha$. In fact, this defines a unique optimal strategy. 
Let $\hat{i} = \max_{i\in \actions: 1-i\cdot2^{-i} \le \alpha} i$, i.e., the last action the agent is willing to take. The principal's utility is given by
\[
\begin{split}
u_P(t_\alpha) &= (1-\alpha)\left(1-\prod_{i=1}^{\hat{i}}p_{i0}\right) = (1-\alpha)\left(1-\prod_{i=1}^{\hat{i}}\left(1-2^{i-n-1}\right)\right) \\
&\le (1-\alpha)\left(1-\left(1-\sum_{i=1}^{\hat{i}}2^{i-n-1}\right)\right) ~\le~ (1-\alpha)2^{\hat{i}-n} ~\leq~ (\hat{i}\cdot 2^{-\hat{i}})2^{\hat{i}-n}~=~ \hat{i} 2^{-n} ~\le~ \frac{n}{2^n},
\end{split}
\]
where we've used that $1-\alpha \leq \hat{i} \cdot 2^{-\hat{i}}$ by the definition of $\hat{i}$.

We next establish a lower bound on the principal's utility from the optimal general contract.
Fix some $\varepsilon \in (0,1)$, and let $t$ be the contract defined by
\[
t(j) = \begin{cases}
    0 & \text{if }j=1 \\
    1 & \text{if }j=2 \\
    \varepsilon & \text{if }j=3.
\end{cases}
\]
Now, for any action $i\in \actions$, the reservation value $z_i(t)$ is given by 
\[
\begin{split}
c_i &= \E[(t(X_i) - z_i(t_\varepsilon))^+] = p_{i2} (t(2) - z_i(t))^+ + p_{i3} (t(3)-z_i(t))^+ \\
&=  c_i(1-z_i(t))^+ + p_{i3} (\varepsilon - z_i(t))^+.
\end{split}
\]
The only way to satisfy this equation is with $z_i(t) \in (0, \varepsilon)$, which gives us the equality
\[
\begin{split}
c_i &= c_i(1-z_i(t)) + p_{i3}(\varepsilon - z_i(t)) = c_i(1-z_i(t)) + (2^{i-n-1}-c_i)(\varepsilon - z_i(t)) \\
&= c_i + (2^{i-n-1}-c_i)\varepsilon - 2^{i-n-1} z_i(t).
\end{split}
\]
This simplifies to
\[
z_i(t_\varepsilon) = \varepsilon\left(1-\frac{c_i}{2^{i-n-1}}\right) = \varepsilon \cdot I \cdot 2^{-i}.
\]
According to \cite{weitzman1978optimal}, the agent takes actions in a non-increasing order of $z_i(t)$, i.e.,  in order $1, \dots, n$, and halts when one of them reveals a non-zero outcome. This implies that the principal's utility is given by
\[
\begin{split}
u_P(t) &= \sum_{i=1}^n\left(\prod_{k=1}^{i-1}p_{k1}\right)(p_{i2}(1-t(2))+p_{i3}(1-t(3))))=(1-t(3))\sum_{i=1}^n\left(\prod_{k=1}^{i-1} \left(1- 2^{k-n-1}\right)\right) p_{i3} \\
&\ge (1-\varepsilon)\sum_{i=1}^n \left(1-\sum_{k=1}^{i-1}  2^{k-n-1}\right) p_{i3}\ge (1-\varepsilon) \sum_{i=1}^n (1-2^{i-n-1})(2^{i-n-1}-c_i) \\
&= (1-\varepsilon)\sum_{i=1}^n (1-2^{i-n-1})\frac{i}{2^{n+1}}\ge (1-\varepsilon) \frac{1}{2^{n+1}}\sum_{i=1}^n\frac{1}{2}i = \Omega\left(\frac{n^2}{2^n}\right).
\end{split}
\]
To conclude, the principal's utility from any linear contract is upper bounded by $O\left(\frac{n}{2^n}\right)$, while the utility from a general contract is lower bounded by $\Omega\left(\frac{n^2}{2^n}\right)$.
This gives a gap of $\Omega(n)$, as desired.
\end{proof}

\paragraph{An upper bound.}
We note that this gap is upper bounded by $O(n^2m)$.
This upper bound is based on a result by \citet{dutting2019simple}, showing that the ratio between the optimal principal's utility in a general contract and in a linear contract is upper bounded by the number of critical values of $\alpha$ in the corresponding linear contract. 
Combined with our  Proposition~\ref{prop:polymany_critvalues}, this implies an upper bound of $O(n^2 m)$ on this ratio.

\section{Superpolynomial Number of Best Responses 
} \label{app:brset_lower_bound}

In this section, we show that there exist instances with independent actions that admit super-polynomially many best responses of the agent, when $m$ is not a constant.
Specifically, we prove the following proposition:
\begin{proposition} \label{prop:superpoly_brset}
    For any $m \ge 2$, and any $n\in \mathbb{N}$, there exists an instance of the independent-action sequential contract problem with $n$ actions and $m$ outcomes such that there are $n^{\Omega(m)}$ best responses to monotone contracts.
\end{proposition}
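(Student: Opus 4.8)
The plan is to construct an instance where the action set splits into $\sqrt{m}$ (roughly) independent ``gadgets,'' each of which can be completed in many different orders, and where the overall best response is obtained by concatenating one valid order per gadget. If each gadget admits $n'$ valid orderings and there are $g$ gadgets, we get $(n')^g$ best responses; choosing $n' = \Theta(n/g)$ and $g = \Theta(m)$ (up to log factors) yields $n^{\Omega(m)}$.

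First I would fix a monotone contract $t$ and recall from Observation~\ref{obs:agent_opt_strat_cond} that the agent's optimal strategies are exactly those that (i) take actions in non-increasing order of reservation value $z_i(t)$, (ii) halt precisely when the best revealed payment crosses the next reservation value, and (iii) upon halting pick a revealed outcome of maximal payment. The key leverage for generating many best responses is clause (i): whenever several untaken actions share the \emph{same} reservation value, the agent may order them arbitrarily among themselves, and each such ordering is optimal. So I would design the instance so that one carefully chosen ``layer'' of actions all have a common reservation value $z$, so large that the agent will always want to exhaust that entire layer before halting, and so that all these actions have the same (nonzero) probability of producing outcomes that matter — forcing the agent to actually traverse the whole layer in \emph{some} order rather than stopping early. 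With $k$ actions in this layer, that alone gives $k!$ best responses; to hit $n^{\Omega(m)}$ rather than just $2^{\Omega(n)}$ I would instead use $\Theta(m)$ distinct reservation-value levels, each populated by $\Theta(n/m)$ actions, so the number of orderings is $\bigl((n/m)!\bigr)^{\Theta(m)} = n^{\Omega(m)}$ for, say, $n \ge m^2$.

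Concretely I would use $m$ outcomes $1,\dots,m$ with rewards $0 = r(1) < r(2) < \dots < r(m)$, and partition the $n$ actions into blocks $B_2,\dots,B_m$, where every action $i \in B_j$ has, independently, a small probability $p$ of yielding outcome $j$ and otherwise yields the zero outcome $1$; costs are tuned so that all actions in $B_j$ share reservation value $z_j$, with $z_2 > z_3 > \dots > z_m$ (higher reward outcomes require more ``search,'' hence smaller reservation value — this matches the direction in which Weitzman's rule processes boxes). I would then pick the payments $t(j)$ so that for the agent, revealing outcome $j$ is worth continuing past block $B_{j+1},\dots$ but halting is triggered only once he has worked through all blocks $B_2,\dots,B_m$ with no success, or once a sufficiently high outcome appears — the point being that the unique value-maximizing behavior is ``process blocks in the fixed order $B_2, B_3, \dots, B_m$, within each block in an arbitrary order, halt at the first nonzero realization.'' Counting: the within-block orderings are free and independent across blocks, giving $\prod_{j=2}^{m} |B_j|! \ge \bigl(\lfloor n/m \rfloor!\bigr)^{m-1} = n^{\Omega(m)}$ distinct best responses; I would also check these are genuinely distinct \emph{strategies} (they differ on positive-probability histories, e.g. the all-zero history), and that the tie-breaking in favor of the principal does not collapse them — since all these orderings induce the same outcome distribution, they are all simultaneously optimal for both players and the consistent tie-break just picks one, but the \emph{set} of best responses still has the claimed size, which is what the proposition asserts.

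The main obstacle I anticipate is the simultaneous calibration of costs and payments: I need all actions within a block to have \emph{exactly} equal reservation values (so costs within a block are equal, which is easy), the blocks' reservation values to be strictly ordered in the right direction, \emph{and} the payment vector to make ``exhaust every block before halting'' strictly optimal rather than merely tied — in particular the agent must not want to stop early after revealing some middling outcome $j$, nor skip a block. This amounts to choosing $p$ small and the $t(j)$'s growing fast enough relative to the costs; I expect a clean choice like $r(j) = t(j)$ scaling geometrically and $p \to 0$ makes the inequalities in clause (ii) of Observation~\ref{obs:agent_opt_strat_cond} hold with room to spare, but verifying the halting condition at every intermediate history (including histories where a low-reward nonzero outcome was revealed early) is the step that needs care. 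The extension to $n$ not a multiple of $m$, and to the claim that the bound persists for the smaller family of \emph{monotone} contracts — which is automatic here since the contract I build is itself monotone — should be routine.
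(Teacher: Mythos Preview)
Your proposal misreads what is being counted. In this paper $\bestresponse(t)$ is a \emph{single} strategy (ties are broken consistently; see Section~\ref{sec:model}), so ``$n^{\Omega(m)}$ best responses to monotone contracts'' asks for the size of the image $\{\bestresponse(t): t \text{ monotone}\}$---which is also the quantity relevant to the hardness of the principal's search. You instead fix one contract and count agent-optimal strategies under it. Your final paragraph already concedes that the consistent tie-break ``just picks one'' of your orderings; that is exactly the problem: your construction produces many optimal strategies but only one best response, and since all those orderings induce the same outcome distribution they are in any case irrelevant to the principal.

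The construction cannot be repaired by varying $t$ either. Every action in a block $B_j$ has the same outcome distribution (probability $p$ on outcome $j$, otherwise outcome $1$); equalizing their reservation values therefore forces identical costs, so actions within a block are literally indistinguishable and \emph{no} contract can separate them. As $t$ varies you can only change which blocks are entered and where halting occurs---the intra-block factorial you rely on never materializes as distinct values of $\bestresponse(t)$.

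The paper's approach is dual to yours: it fixes the instance and exhibits many \emph{contracts} with pairwise distinct best responses. Actions are indexed by $(j,i)\in\{2,\dots,m\}\times[\ell]$ with $\ell=n/(m-1)$, where action $(j,i)$ has cost $i\cdot r(j)$ and probability $\tfrac12$ on outcome $j$. When $m\le\sqrt{n}$, contracts $t_v$ indexed by $v\in[\ell]^{m-1}$ control, independently for each $j$, how many of $(j,1),\dots,(j,\ell)$ have positive reservation value, giving $\ell^{m-1}\ge n^{(m-1)/2}$ best responses that differ in the \emph{set} of actions ever taken. When $m>\sqrt{n}$, contracts $t_\rho$ indexed by bijections $\rho$ force $z_{(\rho(1),1)}<\dots<z_{(\rho(m-1),1)}$, giving $(m-1)!=n^{\Omega(m)}$ best responses that differ in the \emph{order} of actions. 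The case split is what makes the bound hold across all relationships between $n$ and $m$.
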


\begin{proof}
Let $m \ge 2$ and let $n\in \mathbb{N}$. Assume without loss of generality\footnote{Otherwise we can take the maximal $n' < n$ such that $m-1$ divides $n'$, and add $n-n'$ ``dummy'' actions.} that $m-1$ divides $n$, and let $\ell = \frac{n}{m-1}$. 
We construct the following instance:
\begin{enumerate}
    \item The set of outcomes is $[m]$, and for any outcome $j\in \{2, \dots, m\}$, the reward is $r(j) = \ell^j$. 
    \item The set of actions is $\actions = \{2, \dots, m\} \times [\ell]$, and for every action $a=(j, i) \in A$, $c_{a} = i \cdot r(j)$.
    \item For any action $a=(j, i)$ and any outcome $j'\in \outcomes$, $p_{aj'}$ is given by
    \[
    p_{aj'} = \begin{cases}
        \frac{1}{2} & \mbox{if } j'=1 \lor j' = j\\
        0 & \mbox{else}.
    \end{cases}
    \]
\end{enumerate}
Essentially, this definition allows us the following two degrees of freedom in the best response, by a careful choice of contracts. Firstly, for any outcome $\{2, \dots, m\}$, we can choose how many actions from $\{j\} \times [\ell]$ are taken with non-zero probability. Secondly, we can choose the order in which the actions $\{2, \dots, m\}\times \{1\}$ are taken. When $m$ is small, we will use the former, and when $m$ is large, the latter. We begin our case by case analysis:

\paragraph{Case 1: $m \le \sqrt{n}$.} For any vector $v\in [\ell]^{m-1}$, let $t_v$ denote the following contract:
\[
t_v(j) = \begin{cases}
    0 & \mbox{if } j=1 \\
    (2+\frac{1}{2\ell})r(j) \cdot v_{j-1} & \mbox{if } j>1.
\end{cases}
\]
Note that for any $v\in [\ell]^{m-1}$, this contract is monotone. 
We claim that for every $v\in [\ell]^{m-1}$, the agent's best response to $t_v$, is different. This follows from the following observation, which means that the set of actions taken with non-zero probability is different.
\begin{observation}
    Let $v\in [\ell]^{m-1}$.  Consider the agent's best response to $t_v$. Let $a=(j,i)$ be some action, it is easy to see that if $i \le v_{j-1}$ then $z_{a}(t_v) > 0$, meaning action $a$ is taken with non-zero probability, and if $i > v_{j-1}$ then $z_{a}(t_v) < 0$, meaning action $a$ is never taken.
\end{observation}
This implies that the number of best responses is lower bounded by
\[
\ell^{m-1} \ge (\sqrt{n})^{m-1} = n^{\Omega(m)},
\]
as needed.
\paragraph{Case 2: $m > \sqrt{n}$.} For any bijection $\rho: [m-1]\rightarrow \{2, \dots, m\}$, denote by $t_\rho$ the following contract:
\[
t_\rho(j) = \begin{cases}
    0 & \mbox{if } j=1 \\
    \rho^{-1}(j) + 2 r(j) & \mbox{if } j>1.
\end{cases}
\]
Note that for any bijection $\rho$, $t_\rho$ is monotone.
We claim that for each $\rho$ the agent's best response to $t_\rho$ has a different action-taking order. This is implied by the following observation:
\begin{observation}
    Let $\rho: [m-1]\rightarrow \{2, \dots, m\}$ be a bijection. It holds that
    \[
    0 < z_{(\rho(1), 1)} (t_\rho) < \dots < z_{(\rho (m-1), 1)} (t_\rho).
    \]
\end{observation}

This implies that the number of best responses is lower bounded by
\[
(m-1)! \ge \left(\frac{1}{2}\sqrt{n}\right)^{\frac{1}{2} m} = n^{\Omega(m)},
\]
as needed.
\end{proof}

\section{Polynomial Time Algorithm For the Principal's Utility Under Independent Actions} \label{app:independent_principal_utility}
In this appendix we prove Proposition~\ref{prop:principal_utility}.
\princutility*

The first challenge we address is that while finding an agent's optimal strategy is easy, it is unclear how to efficiently find an agent's optimal strategy that leads to the best principal's utility.
To overcome this, for a given contract $t$, we devise a contract $\epsiloncontract$ which is sufficiently close to $t$ such that maximizing the agent's utility under $\epsiloncontract$ also maximizes his utility under $t$,  but also causes the agent's utility itself to break ties in favor of the principal. This contract is established in the following proposition, and we later analyze its properties.
\begin{proposition} \label{prop:varepsiloncontract}
    For a small enough $\varepsilon > 0$, it holds that $\epsiloncontract(j) = t(j) + \varepsilon \cdot (r(j) - t(j))$ is a valid contract (i.e., all payments are non-negative), and any strategy $\strat$ which maximizes the agent's utility under $\epsiloncontract$ also maximizes his utility under $t$.
\end{proposition}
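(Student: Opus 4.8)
The plan is to first dispense with validity and then prove the maximization-transfer claim from two facts: for a fixed strategy the agent's utility is affine in the contract, and the set of strategies is finite. For validity, writing $\epsiloncontract(j) = (1-\varepsilon)\,t(j) + \varepsilon\,r(j)$ exhibits each payment as a convex combination of the non-negative numbers $t(j)$ and $r(j)$, so $\epsiloncontract$ is a valid contract for every $\varepsilon \in [0,1]$, in particular for all small enough $\varepsilon$.

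Next I would record the affineness identity. For a \emph{fixed} strategy $\strat$, the joint distribution of the pair $(S(\strat), \omega(\strat))$ is determined by the instance $\mathcal{F}$ alone and does not depend on the contract; hence $u_A(s, \strat) = \sum_{j\in\outcomes} s(j)\,\Pr[\omega(\strat) = j] - c(\strat)$ is an affine function of the contract $s$, where $c(\strat) = \E[\sum_{i\in S(\strat)} c_i]$ does not depend on $s$. Substituting $s = \epsiloncontract = (1-\varepsilon)t + \varepsilon r$, viewing $r$ as the "full-reward" contract $j\mapsto r(j)$, and using $(1-\varepsilon)c(\strat) + \varepsilon c(\strat) = c(\strat)$ yields, for every strategy $\strat$,
\[
u_A(\epsiloncontract, \strat) = (1-\varepsilon)\, u_A(t, \strat) + \varepsilon\, u_A(r, \strat).
\]

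Then I would choose $\varepsilon$. Let $\strat^\star \in \argmax_\strat u_A(t, \strat)$, and let $\delta_{\min} > 0$ be the smallest strictly positive value attained by $u_A(t, \strat^\star) - u_A(t, \strat)$ over all strategies $\strat$; if no positive value occurs, every strategy is $t$-optimal and there is nothing to prove. Let $M = r(m) + \sum_{i\in\actions} c_i$, a finite upper bound on $u_A(r, \strat') - u_A(r, \strat)$ over all pairs of strategies. Pick $\varepsilon > 0$ small enough that $\frac{\varepsilon M}{1-\varepsilon} < \delta_{\min}$. If $\strat$ maximizes $u_A(\epsiloncontract, \cdot)$, then $u_A(\epsiloncontract, \strat) \ge u_A(\epsiloncontract, \strat^\star)$; substituting the displayed identity and rearranging gives $(1-\varepsilon)\bigl(u_A(t, \strat^\star) - u_A(t, \strat)\bigr) \le \varepsilon\bigl(u_A(r, \strat) - u_A(r, \strat^\star)\bigr) \le \varepsilon M$, so the $t$-suboptimality gap of $\strat$ is at most $\frac{\varepsilon M}{1-\varepsilon} < \delta_{\min}$, which by the definition of $\delta_{\min}$ forces it to be $0$; that is, $\strat$ maximizes $u_A(t, \cdot)$ as well.

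The one point requiring care — and the only thing I would flag as a (mild) obstacle — is ensuring that the candidate suboptimality gaps form a finite set bounded away from $0$, so that $\delta_{\min}$ is well defined. This is exactly where one invokes that, with $n$ actions and $m$ outcomes, the agent's strategy space (decision trees of depth at most $n$ with branching at most $m$) is finite. With that in hand, everything else is a one-line affineness computation.
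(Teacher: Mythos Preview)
Your proof is correct and takes a genuinely different route from the paper. The paper argues via the Weitzman characterization of optimal strategies (Observation~\ref{obs:agent_opt_strat_cond}): it lists the strict inequalities among reservation values $z_i(t)$ and payments $t(j)$ that determine the set of optimal strategies under $t$, and then invokes continuity of $z_i(\cdot)$ to conclude these strict inequalities persist under $\epsiloncontract$ for small $\varepsilon$. Your argument bypasses the Pandora's Box structure entirely: you use only that $u_A(\cdot,\strat)$ is affine in the contract for fixed $\strat$, and that the strategy space is finite, to get a uniform suboptimality gap $\delta_{\min}$. This is more elementary and would work verbatim in any hidden-action model with finitely many strategies, not just the sequential one. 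As a bonus, your identity $u_A(\epsiloncontract,\strat)=(1-\varepsilon)u_A(t,\strat)+\varepsilon\,u_A(r,\strat)$ is, after substituting $u_A(r,\strat)=u_P(t,\strat)+u_A(t,\strat)$, exactly the paper's Lemma~\ref{lemma:epsiloncontract_utility}, so you recover that for free. Your validity argument via convex combination is also cleaner than the paper's case split and shows non-negativity for all $\varepsilon\in[0,1]$, not only small $\varepsilon$. The paper's approach buys a slightly more explicit handle on how small $\varepsilon$ must be (via the subsequent lemma bounding $|z_i(t')-z_i(t)|$), which matters for the algorithmic claim that such an $\varepsilon$ can actually be \emph{computed}; your $\delta_{\min}$ is defined over exponentially many strategies and is not obviously computable, so if one needs an effective $\varepsilon$ one would still want the paper's reservation-value bound or an analogous argument.
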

\begin{proof}
    It is easy to see that $\epsiloncontract$ is a valid contract for a small enough $\varepsilon>0$, Indeed, if $t(j)$ is positive, for a small enough $\varepsilon$, it holds that $\epsiloncontract(j)$ is non-negative. On the other hand, if $t(j) = 0$, then $\epsiloncontract(j) \ge t(j) \ge 0$.

    From Observation~\ref{obs:agent_opt_strat_cond}, it suffices to show that for a small enough $\varepsilon$ the following conditions are satisfied:
    \begin{enumerate}
        \item For any $i_1, i_2$ such that $z_{i_1}(t) < z_{i_2} (t)$ it holds that $z_{i_1}(\epsiloncontract) < z_{i_2} (\epsiloncontract)$.
        \item For any $i, j$ such that $t(j) > z_i(t)$ it holds that $\epsiloncontract(j) > z_i(\epsiloncontract)$.
        \item For any $i, j$ such that $t(j) < z_i(t)$ it holds that $\epsiloncontract(j) < z_i(\epsiloncontract)$.
        \item For any $j_1, j_2$ such that $t(j_1) < t(j_2)$ it holds that $\epsiloncontract(j_1) < \epsiloncontract(j_2)$.
    \end{enumerate}
    From the continuity of $z_i(t)$, it is immediate that these conditions hold for a small enough $\varepsilon$.
\end{proof}

Furthermore, a small enough $\varepsilon$ that satisfies Proposition~\ref{prop:varepsiloncontract} can be found. The following lemma bounds the change in $z_i(t)$ as $t$ increases/decreases slightly, which directly implies how to find an $\varepsilon$ that satisfies the inequalities described in the proof of Proposition~\ref{prop:varepsiloncontract}.
\begin{lemma}
    For any two contracts $t, t'$ and any $\delta > 0$,
    if for any outcome $j\in \outcomes$ it holds that
    \[
    t(j) - \delta \le t'(j) \le t(j)+\delta,
    \]
    then it also holds that for any action $i\in \actions$
    \[
    z_i(t) - \delta \le z_i(t') \le z_i(t) + \delta.
    \]
\end{lemma}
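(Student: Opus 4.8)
The goal is to show that the reservation value $z_i$ is $1$-Lipschitz with respect to the sup-norm on contracts: a uniform perturbation of the payments by at most $\delta$ shifts each reservation value by at most $\delta$. The natural approach is to work directly from the defining equation~\eqref{eq:res_value}, which in the independent-action model reads
\[
g_i(z, t) := \E[(t(X_i) - z)^+] = \sum_{j\in\outcomes} p_{ij}(t(j) - z)^+ = c_i.
\]
First I would record the two monotonicity/Lipschitz facts about $g_i$ that drive the argument: for fixed $t$, the map $z \mapsto g_i(z,t)$ is non-increasing (strictly decreasing on the relevant range where it is positive), since each $(t(j)-z)^+$ is non-increasing in $z$; and for fixed $z$, if $t'(j) \ge t(j) - \delta$ for all $j$ then $g_i(z, t') \ge g_i(z, t) - \delta$, because $(t'(j)-z)^+ \ge (t(j)-z)^+ - \delta$ pointwise (the function $x \mapsto (x-z)^+$ is $1$-Lipschitz) and $\sum_j p_{ij} = 1$. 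Symmetrically $t'(j) \le t(j) + \delta$ gives $g_i(z,t') \le g_i(z,t) + \delta$.

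Now I would combine these. Let $z = z_i(t)$ and $z' = z_i(t')$, so $g_i(z,t) = g_i(z',t') = c_i$. To prove $z' \le z_i(t) + \delta$, evaluate $g_i$ at the shifted point: using the second fact with the upper bound $t'(j)\le t(j)+\delta$, and then observing that shifting the payments up by $\delta$ is the same as shifting $z$ down by $\delta$ inside $(\cdot)^+$,
\[
g_i(z + \delta, t') \;\le\; g_i(z+\delta, \text{(the contract }t(\cdot)+\delta)) \;=\; g_i(z, t) \;=\; c_i \;=\; g_i(z', t').
\]
Since $g_i(\cdot, t')$ is non-increasing, $g_i(z+\delta, t') \le g_i(z', t')$ forces $z' \le z + \delta = z_i(t) + \delta$. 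The lower bound $z' \ge z_i(t) - \delta$ is symmetric, using $t'(j) \ge t(j) - \delta$ and $g_i(z-\delta, t') \ge g_i(z-\delta, t(\cdot)-\delta) = g_i(z,t) = c_i = g_i(z',t')$, again invoking monotonicity of $g_i(\cdot,t')$.

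\textbf{Main obstacle.} The only subtlety is the degenerate/edge behavior of the equation defining $z_i$. When $c_i = 0$ the convention is $z_i = \infty$, so the statement must be read as vacuous or interpreted with $\infty - \delta = \infty$; one should note that if $c_i=0$ under $t$ then also $c_i = 0$ under $t'$, so both reservation values are $\infty$ and the inequality holds trivially. More delicately, $g_i(\cdot,t)$ is only weakly decreasing and may be identically zero past a point, so strictly speaking $z_i(t)$ as defined by~\eqref{eq:res_value} need not be unique on the flat part; one should either appeal to the standard convention that $z_i(t)$ is the largest (or a canonical) solution, or note that the argument above bounds the chosen solution regardless, since we only ever use that $c_i > 0$ lies on the strictly-decreasing portion. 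Handling the sign constraint (if $z_i(t)$ happens to be negative, meaning the action is never taken) requires no extra care because the Lipschitz inequality is stated for all real values and the downstream use in Proposition~\ref{prop:varepsiloncontract} only needs continuity near the relevant thresholds. Apart from these conventions, the proof is a two-line squeeze once the two Lipschitz facts about $g_i$ are in place.
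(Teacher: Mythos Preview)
Your proposal is correct and takes essentially the same approach as the paper: both evaluate the defining function $z\mapsto \E[(t'(X_i)-z)^+]$ at the shifted points $z_i(t)\pm\delta$, use the pointwise bound $(t'(j)-z)^+ \gtrless (t(j)\mp\delta - z)^+$ to compare with $c_i$, and then invoke monotonicity of this function to trap $z_i(t')$. The paper's write-up is terser (it leaves the monotonicity step and the $c_i=0$ edge case implicit), whereas you spell these out; but the underlying argument is identical.
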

\begin{proof}
    Let $i\in \actions$ be some action.
    By the definition of the reservation value (see Equation~\eqref{eq:res_value}), $z_i(t')$ is such that satisfies
    \[
    \E[(t'(X_i) - z_i(t'))^+] = c_i,
    \]
    and $z_i(t)$ satisfies
    \[
    \E[(t(X_i) - z_i(t))^+] = c_i.
    \]
    We start with the lower bound. To show that $z_i(t') \ge z_i(t) - \delta$ it suffices to show that the inequality $\E[(t'(X_i)-(z_i(t)-\delta))^+] \ge c_i$ holds. Indeed,
    \[
    \E[(t'(X_i)-(z_i(t)-\delta))^+] \ge \E[(t(X_i) - \delta -z_i(t)+\delta)^+] = c_i.
    \]
    For the upper bound, we show that $\E[(t'(X_i)-(z_i(t) + \delta))^+] \le c_i$. Similarly to before we have
    \[
    \E[(t'(X_i)-(z_i(t)+\delta))^+] \le \E[(t(X_i) + \delta -z_i(t)-\delta)^+] = c_i,
    \]
    concluding the proof.
\end{proof}

The following lemma shows an important relation between the agent's utilities in $t$ and $\epsiloncontract$, and the principal's utility in $t$. This relation is the reason we examine $\epsiloncontract$.
\begin{lemma} \label{lemma:epsiloncontract_utility}
    For any contract $t$, any $\varepsilon > 0$ such that $\epsiloncontract$ is a valid contract, and any agent's strategy $\strat$ it holds that
    \[
    u_A(\epsiloncontract, \strat) = \varepsilon \cdot u_P(t, \strat) + u_A(t, \strat).
    \]
\end{lemma}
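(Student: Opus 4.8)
The plan is to reduce the claimed identity to a pointwise algebraic fact and then take expectations. First I would fix the strategy $\strat$ and observe that, in the definitions of $u_A(\cdot,\strat)$ and $u_P(\cdot,\strat)$ from Section~\ref{sec:model}, the random set of taken actions $S(\strat)$ and the random final outcome $\finaloutcome(\strat)$ are functions of the realized outcomes $X_1,\dots,X_n$ only --- the contract enters solely through the payment term. Hence, for a fixed realization of $X_1,\dots,X_n$, writing $S$ and $\finaloutcome$ for the resulting (deterministic) set of actions and final outcome, the integrand defining $u_A(\epsiloncontract,\strat)$ is exactly $\epsiloncontract(\finaloutcome) - \sum_{i\in S} c_i$.

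Next I would substitute the definition $\epsiloncontract(j) = t(j) + \varepsilon\,(r(j)-t(j))$ with $j = \finaloutcome$ and rearrange, obtaining for every realization
\[
\epsiloncontract(\finaloutcome) - \sum_{i\in S} c_i \;=\; \varepsilon\bigl(r(\finaloutcome) - t(\finaloutcome)\bigr) \;+\; \Bigl(t(\finaloutcome) - \sum_{i\in S} c_i\Bigr).
\]
Taking expectations over $X_1,\dots,X_n \sim \mathcal{F}$ and applying linearity of expectation, the first term on the right-hand side becomes $\varepsilon \cdot \E[r(\finaloutcome(\strat)) - t(\finaloutcome(\strat))] = \varepsilon \cdot u_P(t,\strat)$, and the second becomes $\E[t(\finaloutcome(\strat)) - \sum_{i\in S(\strat)} c_i] = u_A(t,\strat)$, which is precisely the claimed equality.

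The only point that needs care --- and the closest thing to an obstacle here --- is justifying that $S(\strat)$ and $\finaloutcome(\strat)$ denote the same random variables whether one evaluates $\strat$ under $t$ or under $\epsiloncontract$; this is immediate from the fact that a strategy determines these random variables, with the contract appearing only in the payment expression. Everything else is a one-line computation, so no further machinery is required.
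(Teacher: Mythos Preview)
Your proof is correct and essentially identical to the paper's: both substitute the definition of $\epsiloncontract$ into the integrand, split by linearity of expectation, and identify the two resulting terms as $\varepsilon\cdot u_P(t,\strat)$ and $u_A(t,\strat)$. Your additional remark that $S(\strat)$ and $\finaloutcome(\strat)$ are the same random variables regardless of which contract appears in the payment term is a helpful clarification the paper leaves implicit.
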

\begin{proof}
    Recall that $S(\strat)$ denotes the random variable depicting the set of actions taken by the agent under $\strat$, and $\omega(\strat)$ is the random variable depicting the project's final outcome under $\strat$ (see Section~\ref{sec:model}).
    \[
    \begin{split}
    u_A(\epsiloncontract, \strat) &= \E[\epsiloncontract(\finaloutcome(\strat)) - \sum_{i\in S(\strat)} c_i] = \E[\varepsilon\cdot (r(\finaloutcome(\strat)) - t(\finaloutcome(\strat)))] + \E[t(\finaloutcome(\strat)) - \sum_{i\in S(\strat)} c_i] \\
    &= \varepsilon \cdot u_P(t, \strat) + u_A(t, \strat),
    \end{split}
    \] 
    as needed.
\end{proof}

As a corollary of Proposition~\ref{prop:varepsiloncontract} and Lemma~\ref{lemma:epsiloncontract_utility} we get
\begin{corollary}
    For any contract $t$, let $\epsiloncontract$ be per Proposition~\ref{prop:varepsiloncontract}. Any agent's optimal strategy $\strat$ under the contract $\epsiloncontract$ satisfies $u_P(t, \strat) = u_P(t)$.
\end{corollary}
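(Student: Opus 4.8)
The plan is to derive this corollary by combining the two facts just established about $\epsiloncontract$: that it induces exactly the same set of agent-optimal strategies as $t$ (Proposition~\ref{prop:varepsiloncontract}), and that on that common set the agent's utility under $\epsiloncontract$ is an affine, strictly increasing function of the principal's utility under $t$, by the identity $u_A(\epsiloncontract,\strat) = \varepsilon\cdot u_P(t,\strat) + u_A(t,\strat)$ of Lemma~\ref{lemma:epsiloncontract_utility}. Intuitively, an agent optimizing under $\epsiloncontract$ is forced to perform precisely the principal-favorable tie-break among his $t$-optimal strategies, which is exactly what the best response $\bestresponse(t)$ does.

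Concretely, I would fix an agent's optimal strategy $\strat$ under $\epsiloncontract$ and set $\strat^\star = \bestresponse(t)$. First, by Proposition~\ref{prop:varepsiloncontract}, $\strat$ also maximizes $u_A(t,\cdot)$; since $\strat^\star$ does too, we get $u_A(t,\strat) = u_A(t,\strat^\star)$. Next, using optimality of $\strat$ under $\epsiloncontract$ we have $u_A(\epsiloncontract,\strat) \ge u_A(\epsiloncontract,\strat^\star)$; applying Lemma~\ref{lemma:epsiloncontract_utility} to both sides, cancelling the now-equal terms $u_A(t,\cdot)$, and dividing by $\varepsilon>0$ yields $u_P(t,\strat) \ge u_P(t,\strat^\star) = u_P(t)$. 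For the reverse inequality, I would invoke the tie-breaking rule: $\strat$ is itself a $t$-optimal strategy, and by definition $\bestresponse(t)$ maximizes the principal's utility among all $t$-optimal strategies, so $u_P(t) = u_P(t,\strat^\star) \ge u_P(t,\strat)$. Combining the two inequalities gives $u_P(t,\strat) = u_P(t)$.

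There is no heavy machinery required here; the only points needing care are that $\strat^\star$ itself need not be $\epsiloncontract$-optimal — which does not matter, since the argument only uses the single inequality $u_A(\epsiloncontract,\strat) \ge u_A(\epsiloncontract,\strat^\star)$ — and that the tie-breaking rule is applied in the correct direction, i.e.\ that $\bestresponse(t)$ attains the \emph{maximum} of $u_P(t,\cdot)$ over $t$-optimal strategies. I expect this latter bookkeeping step, rather than any real difficulty, to be the one to state most carefully.
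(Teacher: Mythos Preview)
Your proof is correct and is exactly the argument the paper has in mind; the paper simply states this as an immediate corollary of Proposition~\ref{prop:varepsiloncontract} and Lemma~\ref{lemma:epsiloncontract_utility} without writing out the details, and your write-up fills those in cleanly. One small imprecision: in your opening summary you say $\epsiloncontract$ ``induces exactly the same set of agent-optimal strategies as $t$,'' but Proposition~\ref{prop:varepsiloncontract} only gives the inclusion that $\epsiloncontract$-optimal strategies are $t$-optimal --- fortunately your actual argument uses only this direction, so the proof itself is unaffected.
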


To address the challenge of computing the principal's utility under some strategy, we consider a simple type of agent's strategies for which we can efficiently compute the principal's utility.
\begin{definition} \label{def:non_adaptive_strat} 
    An agent's strategy $\strat$ is \emph{non-adaptive} if it can be described by a tuple $(\sigma, \rho, \tau)$, where 
    \begin{enumerate}
        \item $\sigma:[n] \rightarrow \actions$ is a bijection which defines the action taking order in the following way: if $\strat$ doesn't halt before it, the $i$th action is taken is $\sigma(i)$.
        \item $\rho: \outcomes\rightarrow [m]$ is a bijection describing the agent's preference of outcomes, i.e., the agent prefers outcome $j_1$ over $j_2$ if $\rho(j_1) > \rho(j_2)$.
        \item $\tau = \{\tau_i\}_{i\in \actions}\subseteq [m]$ defines the halting condition in the following way: before taking an action $i \in \actions$, $\strat$ halts if his current preferred outcome $j$ satisfies $\rho(j) \ge \rho(\tau_i)$.
        \item Upon halting $\strat$ chooses its preferred outcome according to $\rho$ as the project's final outcome.
    \end{enumerate}
\end{definition}
By Observation~\ref{obs:agent_opt_strat_cond} it is easy to see that a non-adaptive optimal strategy exists, and can be easily found\footnote{Let $\sigma$ describe some ordering of the actions such that the reservation value is non-increasing, Let $\rho$ be some ordering that is non-increasing, and let $\tau_i\in \argmax_{j\in [m]: t(j)<z_i(t)}t(j)$.}.

For any non-adaptive strategy $\strat= (\sigma, \tau, \mu)$ one can compute the probability of the final outcome being $j$ for any $j\in \outcomes$ by iterating over the actions taken, according to the order $\sigma$, while maintaining the probability of $j$ being the agent's preferred revealed outcome so far. From this distribution over the outcomes, the principal's utility is immediate. This provides an $O(n m^2)$-time algorithm for computing the principal's utility from a contract. 

\section{Additional Proofs from Section~\ref{sec:independent}} \label{app:independent_proofs}
In this appendix we provide technical details on proofs from Section~\ref{sec:independent}.

\subsection{Proof of Lemma~\ref{lem:res_piecwise_lin}} \label{app:proof_lem_res_piecewise_lin}
In this section we present the proof of Lemma~\ref{lem:res_piecwise_lin},
\respiecwiselin*

\begin{proof}
    If $c_i = 0$, $z_i(\alpha)=\infty$, and we are done.
    
    Otherwise, if $c_i > 0$,
    for any $\alpha\in [0,1]$, the random variable $t_\alpha(X_i)$ attains values within $\{\alpha \cdot r(j)\}_{j\in \outcomes}$, and equals $\alpha \cdot r(j)$ with probability $p_{ij}$.
    Thus, $z_i(\alpha)$ is given by
    \begin{equation} \label{eq:res_lin_contract}
    \sum_{j\in \outcomes: \alpha r(j) >z_i(\alpha)} p_{ij}(\alpha \cdot r(j) - z_i(\alpha)) = c_i.
    \end{equation}
    
    Now, for any $j\in \outcomes$ we define $\alpha_{i,j} = \inf\{\alpha \in [0,1] \mid z_i(\alpha) \ge \alpha \cdot r(j)\}$, which we interpret as $\infty$ if the set is empty. Intuitively, $\alpha_{i,j}$ is the threshold of $\alpha$ from which the agent is willing to execute action $i$ when he has already revealed outcome $j$. We also denote $\alpha_{i,0} = 0$.  Observe that $\alpha_{i,0} \le \alpha_{i,1} \le \dots \le \alpha_{i,m}$, and that $\alpha_{i,m} = \infty$ (since $z_i(\alpha)$ can never be greater than $\alpha \cdot r(m)$ for $i$ such that $c_i>0$).
    
    For any $j\in [m]$, such that $\alpha_{i,(j-1)} \ne \alpha_{i,j}$ we consider the interval $[\alpha_{i,(j-1)}, \alpha_{i,j})$. 
    For any $\alpha \in [\alpha_{i,(j-1)}, \alpha_{i,j})$, the set $\{j\in \outcomes \mid \alpha \cdot r(j) > z_i(\alpha)\}$ is precisely $\{j, \dots, m\}$. Plugging this into Equation~\eqref{eq:res_lin_contract} gives us 
    \[
    \sum_{k=j}^m p_{ik}(\alpha \cdot r(k) - z_i(\alpha)) = c_i,
    \]
    implying that within the interval $[\alpha_{i,(j-1)}, \alpha_{i,j})$, 
    \[
    z_i(\alpha) = \frac{\sum_{k=j}^m p_{ik} r(k)}{\sum_{k=j}^m p_{ik}} \cdot \alpha - \frac{c_i}{\sum_{k=j}^m p_{ik}}.
    \]
    Therefore, $z_i(\alpha)$ is piecewise linear and its slope is monotonically increasing.
    
    We have shown at most $m$ intervals, within each of which $z_i(\alpha)$ is linear, and such that from one interval to the next the slope is weakly increasing, which proves our desired piecewise linearity and convexity.
    Since we've given an explicit expression for $z_i(\alpha)$ within each interval $[\alpha_{i,(j-1)}, \alpha_{i,j})$, the only missing piece for an explicit expression for $z_i(\alpha)$ is explicit closed-form expressions for $\alpha_{i,j}$ for any $j \in \outcomes$. Indeed, an explicit expression for $\alpha_{i,j}$ follows from the continuity of $z_i(\alpha)$. By definition of $\alpha_{i,j}$, because $z_i(\alpha)$ is continuous, the following equality holds
    \[
    z_i(\alpha) = \alpha_{i,j} \cdot r(j),
    \]
    which, in combination with Equality~\ref{eq:res_lin_contract}, gives
    \[
    \alpha_{i,j} = \frac{c_i}{\sum_{k=j}^m p_{ik}(r(k) - r(j))}.
    \]
    Note that all of our closed-form expressions are efficiently computable, concluding the proof.
\end{proof}

\subsection{Proof of Lemma~\ref{lemma:crit_value_cond}} \label{app:proof_lemma_crit_value_cond}
In this section we provide the proof for Lemma~\ref{lemma:crit_value_cond}.
\critvaluecond*

\begin{proof}
    Let $\alpha'\in (0,1)$ be some critical value. Denote $\strat = \bestresponse(t_{\alpha'})$. According to Observation~\ref{obs:crit_value_opt_strat}, $\strat$ isn't an optimal strategy directly to the left of $\alpha'$.
    Consider which condition of Observation~\ref{obs:agent_opt_strat_cond} is violated by $\strat$ directly to the left of $\alpha'$.

    Assume condition (1) of Observation~\ref{obs:agent_opt_strat_cond} is violated by $\strat$ to the left of $\alpha'$, i.e., there exist actions $i_1\ne i_2\in \actions$ such that with non-zero probability when action $i_1$ is taken, $i_2$ is still untaken and $z_{i_1}(\alpha) < z_{i_2}(\alpha)$. Because $\strat$ is optimal at $\alpha'$, we have $z_{i_1}(\alpha') \ge z_{i_2}(\alpha')$, which implies (from continuity) the first condition of Lemma~\ref{lemma:crit_value_cond} holds, as needed.

    Assume condition (2) of Observation~\ref{obs:agent_opt_strat_cond} is violated,  i.e., with non-zero probability when $j\in \outcomes$ is a revealed outcome of maximum payment and $i\in \actions$ is an untaken action of maximum $z_i(\alpha)$, it holds that either $\alpha \cdot r(j) < z_i(\alpha)$ and $\strat$ halts or $\alpha \cdot r(j) > z_i(\alpha)$ and $\strat$ doesn't halt.
    If $\alpha \cdot r(j) < z_i(\alpha)$ and $\strat$ halts, then because $\strat$ is optimal at $\alpha'$, we have $\alpha' \cdot r(j) \ge z_i(\alpha')$, which implies the second condition of Lemma~\ref{lemma:crit_value_cond}, as needed.
    Otherwise, if $\alpha \cdot r(j) > z_i(\alpha)$ and $\strat$ doesn't halt, then because $\strat$ is optimal at $\alpha'$ we have $\alpha'\cdot r(j) \le z_i(\alpha')$, which also implies the second condition of Lemma~\ref{lemma:crit_value_cond}, as needed.
    
    Finally, assume towards contradiction that condition (3) of Observation~\ref{obs:agent_opt_strat_cond} is violated to the left of $\alpha'$, i.e. with non-zero probability upon halting $\alpha \cdot r(\omega(\strat)) < \alpha \cdot r(j)$ for some revealed outcome $j$. This implies $r(\omega(\strat)) < r(j)$, which implies $\alpha' \cdot r(\omega(\strat)) < \alpha' \cdot r(j)$, contradicting $\strat$ being optimal at $\alpha'$.
\end{proof}

\subsection{Proof of Lemma~\ref{lem:polytope_max}} \label{app:proof_lemma_polytope_max}
In this section we prove Lemma~\ref{lem:polytope_max}.
\polytopemax*
\begin{proof}
    We begin by noting that the principal's utility is bounded(for example by the agent's utility under the linear contract $t_1$), therefore it has a supremum in $Q$. We denote this supremum by $u$, i.e., $u=\sup\{u_P(t)\mid t\in Q\}$.
    
    Let $\{t^\ell\}_{\ell \in \mathbb{N}}$ be a sequence in $Q$ with utilities that converge to $u$. Since $Q$ is a polytope, and therefore bounded, the sequence $\{t^\ell\}_{\ell \in \mathbb{N}}$ has a converging subsequence, assume without loss of generality that it is $\{t^\ell\}_{\ell \in \mathbb{N}}$ itself, and denote its limit by $t'$. By showing $u_P(t') \ge u$, we are done. 
    
    Because the number of possible agent's strategies is finite, there exists a strategy $\strat$ and subsequence of $\{t^\ell\}_{\ell \in \mathbb{N}}$ in which the agent's best response is always $\strat$. Assume without loss of generality that this subsequence is $\{t_\ell\}_{\ell \in \mathbb{N}}$ itself.
    Since for any strategy $\strat'$ it holds that $u_A(t, \strat')$ is continuous (in $t$), and since $\strat$ maximizes the agent's utility in every $t_\ell$, it also maximizes the agent's utility in $t'$, which due to tie-breaking in favor of the principal means
    \[
    u_P(t') = u_P(t', \bestresponse(t')) \ge u_P(t', \strat) = u,
    \]
    where the last equality is due to the continuity (in $t$) of $u_P(t, \strat$).
\end{proof}

\subsection{Proof of Lemma~\ref{lem:polytope_boundary}} \label{app:proof_lemma_polytope_boundary}
In this section we prove Lemma~\ref{lem:polytope_boundary}
\polytopeboundary*

\begin{proof}
    Let $t^0\in Q$ be such that maximizes the principal's utility in $Q$. If $t^0\notin \relint(Q)$, we are done. Otherwise, we show a contract $t^1$ in the relative boundary of $Q$ with utility $u_P(t^1) \ge u_P(t^0)$. 
    Note that for any agent's strategy $\strat$, the principal's utility from a contract $t$ when the agent engages in strategy $\strat$ is given by the following linear (in $t$) expression:
    \[
    u_P(t, \strat)  = \sum_{j\in \outcomes} P_j (r(j)-t(j)),
    \]
    where $P_j$ is the probability of the project's outcome being $j$ when the agent engages in strategy $\strat$.
    It is known that a linear expression is maximized at a vertex of a polytope, meaning $u_P(t,\bestresponse(t^0))$ is maximized at some $t^1$ which is a vertex of $Q$ and therefore in its relative boundary. 
    Since $\bestresponse(t^0)$ maximizes the agent's utility in $t^0 \in \relint(Q)$, due to our assumption that $(\actionspreorder, \outcomespreorder, \haltgreater, \haltsmaller)$ is constant in $\relint(Q)$, we get that $\bestresponse(t^0)$ maximizes the agent's utility everywhere in $\relint(Q)$.
    Because, when fixing any strategy $\strat$, the agent's utility $u_A(t,\strat)$ is continuous in $t$, we get that $\bestresponse(t_0)$ also maximizes the agent's utility in $t_1$. Due to tie-breaking in favor of the principal, it holds that
    \[
    u_P(t^1) = u_P(t^1, \bestresponse(t^1)) \ge u_P(t^1, \bestresponse(t^0)) \ge u_P(t^0, \bestresponse(t^0)) = u_P(t^0),
    \]
    as needed.
\end{proof}

\subsection{Proof of Proposition~\ref{prop:final_hyperplane_arrangment}} \label{app:hyperplane_arrangement_correctness}
In this section we prove Proposition~\ref{prop:final_hyperplane_arrangment}
\finalhyperplanearrangment*
\begin{proof}
    Let $F$ be some face of the hyperplane arrangement such that $F\subseteq \contractregion$, and let $t^1, t^2 \in \relint(F)$. 
    
    Assume by contradiction that $\outcomespreorder(t^1) \ne \outcomespreorder(t^2)$. This means there exist some $j_1, j_2 \in \outcomes$ such that $j_1 \le_{\outcomespreorder(t^1)} j_2$ but $j_1 \not\le_{\outcomespreorder(t^2)} j_2$. Equivalently, $t^1(j_1) \le t^1(j_2)$ but $t^2(j_1) > t^2(j_2)$. This implies the existence of $t\in \relint(F)$ such that $t(j_2) = t(j_1)$, meaning $t$ is in $E_{j_1, j_2}$ but $t^2$ is not, contradicting Observation~\ref{obs:face_relint_cond}.

    We have that $\outcomespreorder$ is constant within $\relint(F)$, which means there exists some bijection $\rho: [m] \rightarrow[m]$ such that 
    \[
    t(\rho(1)) \le \dots \le t(\rho(m))
    \]
    for any $t\in \relint(F)$. 
    
    Assume by contradiction that $\haltgreater(t^1) \ne \haltgreater(t^2)$. This implies the existence of $i\in \actions$ and $j\in \outcomes$ such that $t^1(\rho(j)) \le z_i(t^1)$ but $t^2(\rho(j)) > z_i(t^2)$. Note that $t^1(\rho(j)) \le z_i(t^1)$ implies 
    \[
    \sum_{k=j}^m p_{i\rho(k)} (t^1(\rho(k)) - t^1(\rho(j)) \ge c_i,
    \]
    and $t^2(\rho(j)) > z_i(t^2)$ implies
    \[
    \sum_{k=j}^m p_{i\rho(k)} (t^2(\rho(k)) - t^2(\rho(j)) < c_i.
    \]
    This implies the existence of $t\in \relint(F)$ such that $t\in T_{\rho, i, j}$ but $t^2 \notin T_{\rho, i, j}$, contradicting Observation~\ref{obs:face_relint_cond}.
    
    The proof for $\haltsmaller$ being constant is almost entirely the same as what we did for $\haltgreater$ above.

    Finally, assume by contradiction that there exist $t^1, t^2\in \relint(F)$ with $\actionspreorder(t^1) \ne \actionspreorder(t^2)$. 
    This implies the existence of $i_1, i_2 \in \actions$ such that $z_{i_1} (t^1) \le z_{i_2} (t^1)$ but $z_{i_1} (t^2) > z_{i_2} (t^2)$. From continuity, there exists $t\in \relint(F)$ such that $z_{i_1}(t) = z_{i_2}(t)$, which means $t\in Z_{i_1, i_2, \haltgreater_{i_1}, \haltgreater_{i_2}}$ but $t^2\notin Z_{i_1, i_2, \haltgreater_{i_1}}$, contradiction.
\end{proof}

\subsection{Proof of Claim~\ref{claim:equal_spread_util}} \label{app:equal_spread_util}
In this section we prove Claim~\ref{claim:equal_spread_util}
\equalspreadutil*
\begin{proof}
    By Observation~\ref{obs:agent_best_response}, the agent's best response to the equal-spread contract $t^S$ is to take actions $1$ and $2$, then take action $3$ if and only if none of the outcomes in $S\cup \{k+1\}$ were realized (either $k+1$ in action $1$ or an outcome $j\in S$ in action $2$), and then halt, choosing outcome $k+1$ if it was realized, one of the outcomes in $S$ if one was realized, or the $0$ outcome if none of the outcomes in $S \cup \{k+1\}$ were realized. 
    
    We first turn our attention to the principal's expected reward. It is equal exactly to the probability with which outcome $k+1$ is realized when the agent best responds. This can happen in one of two ways; either outcome $k+1$ is realized from action $1$, which happens with probability $1-\varepsilon$, or action $1$ didn't produce outcome $k+1$, action $2$ didn't produce an outcome in $S$, and action $3$ yielded outcome $k+1$, which happens with probability $\varepsilon (1- \sum_{j\in S} a_j)q$.

    Overall, the principal's expected reward is 
    \[
    1-\varepsilon + \varepsilon(1- x)q.
    \]
    We now turn to the expected payment. The principal pays $\frac{c}{x+q}$ if any of the outcomes in $S\cup \{k+1\}$ were realized, i.e., with probability $1-\varepsilon(1- x)(1-x-q)$.
    Overall, the principal's utility is 
    \[
    u_P(t^S) = 1-\varepsilon + \varepsilon (1- x)q - \left(1-\varepsilon(1- x)(1-x-q)\right)\frac{c}{x+q},
    \]
    as needed.
\end{proof}
\subsection{Proof of Claim~\ref{claim:opt_is_equal_spread}} \label{app:opt_is_equal_spread}
In this section we prove Claim~\ref{claim:opt_is_equal_spread}
\optisequalspread*
\begin{proof}
    Let $t^\star$ denote the optimal contract. We start by claiming that $t^\star(k+1) \ge t^\star(i)$ for any $i\in [k]$. 
    Suppose towards contradiction that there exists some $i\in [k]$ such that $t^\star(i) > t^\star(k+1)$. Then, the social welfare, and thus the principal's utility is upper bounded by $(1-a_i) \cdot r_{k+1} < 1-\varepsilon$ (the last inequality is by our choice of $\varepsilon$). Note that the principal's utility from $t^{[k]}$, as given by Claim~\ref{claim:equal_spread_util} when plugging in $x=0.2$, is greater than $1-\varepsilon$, this is a contradiction. 

    Additionally, note that $z_{3}(t^\star) \ge 0$, since otherwise 
    the social welfare and thus the principal's utility is upper-bounded by $1-\varepsilon$ (by Observation~\ref{obs:agent_best_response}), which is lower than $u_P(t^{[k]})$ (by Claim~\ref{claim:equal_spread_util}).
    Now, define \[
    S = \{j\in [k] \mid t^\star(j) > z_{3}(t^\star)\},
    \]
    we claim that $t^\star = t^S$, thus proving that $t^\star$ is an equal-spread contract. Observe that
    \[
    \begin{split}
    \left(\sum_{j \in S} a_j + q\right) t^\star(k+1) &= \sum_{j \in S \cup \{k+1\}} p_{(k+2), j} t^\star(k+1) \\
    &= \sum_{j \in S} a_j (t^\star(k+1) - t^\star(j)) + \sum_{j \in S \cup \{k+1\}} p_{(k+2), j} t^\star(j)\\
    &= \sum_{j \in S} a_j (t^\star(k+1) - t^\star(j)) + \sum_{j \in S \cup \{k+1\}} p_{(k+2), j} (t^\star(j) - z_{k+2}(t^\star) ) \\ &+ \sum_{j \in S \cup \{k+1\}} p_{(k+2), j}z_{k+2}(t^\star) \\
    &= \sum_{j \in S } a_j (t^\star(k+1) - t^\star(j)) + c + \left(\sum_{j \in S} a_j + q\right)  z_{k+2}(t^\star),
    \end{split}
    \]
    where the last inequality is by the definition of $z_{k+2}(t^\star)$. Rewriting, we get
    \[
    \begin{split}
    t^\star(k+1) &\ge \frac{\sum_{j \in S} a_j (t^\star(k+1) - t^\star(j))}{q+\sum_{j\in S} a_j}+\frac{c}{q+\sum_{j\in S} a_j} + z_{k+2}(t^\star)\\
    &= \frac{\sum_{j \in S} a_j (t^\star(k+1) - t^\star(j))}{q+\sum_{j\in S} a_j}+t^S(k+1)+z_{k+2}(t^\star).
    \end{split}
    \]
    We additionally note that, due to Observation~\ref{obs:agent_best_response}, $t^\star$ and $t^S$ cause the agent to have a best response with the same reward (since they both take the same actions under the same conditions, and choose the good outcome when available).
    This implies that the difference between $u_P(t^S)$ and $u_P(t^\star)$ is simply the difference in expected payments, and since $t^\star$ is optimal we get
    \[
    \begin{split}
    0 &\ge u_P(t^S) - u_P(t^\star) = -\sum_{j\in [k+1]} \Pr[\omega = j] (t^S(j)  t^\star(j)) \\
    &\ge \Pr[\omega = k+1](t^\star(k+1)-t^S(k+1)) - \sum_{j\in S} Pr[\omega = j] (t^S(j)-t^\star(j))\\
    &\ge (1-\varepsilon) (t^\star(k+1)-t^S(k+1)) -  \sum_{j\in S} Pr[\omega = j] (t^S(k+1)-t^\star(j)) \\
    &\ge (1-\varepsilon) (t^\star(k+1)-t^S(k+1)) - \sum_{j\in S} Pr[\omega = j] (t^\star(k+1)-t^\star(j)) \\
    &\ge (1-\varepsilon) (t^\star(k+1)-t^S(k+1)) - \sum_{j\in S} \varepsilon \cdot 2 a_j (t^\star(k+1)-t^\star(j))\\
    &\ge (1-\varepsilon) \frac{\sum_{j \in S} a_j (t^\star(k+1) - t^\star(j))}{q+\sum_{j\in S} a_j} + (1-\varepsilon) z_{k+2}(t^\star) - 2\varepsilon\sum_{j\in S} a_j (t^\star(k+1) - t^\star(j)) \\
    &=
    \left(\frac{1-\varepsilon}{q+\sum_{j\in S} a_j} - 2\varepsilon\right) \sum_{j\in S} a_j (t^\star(k+1) - t^\star(j)) + (1-\varepsilon) z_{3}(t^\star).
    \end{split}
    \]
    Since $\sum_{j\in S} a_j (t^\star(k+1) - t^\star(j)), z_{k+2}(t^\star) \ge 0$, this implies they both equal $0$,  which in turn proves $t^\star = t^S$. This is because $\sum_{j\in S} a_j (t^\star(k+1) - t^\star(j)) = 0$, implies $t^\star(k+1)=t^\star(j)$ for all $j\in S$, and by definition of $S$ we have 
    \[
    0=z_3(t^\star) = \frac{\sum_{j\in S \cup \{k+1\}}p_{3,j} t^\star(j)-c}{\sum_{j\in S} a_j+q} = \frac{t^\star(k+1)\sum_{j\in S \cup \{k+1\}} p_{3,j}-c}{\sum_{j\in S} a_j+q},
    \]
   which gives $t^\star(j) = \frac{c}{\sum_{j\in S} a_j + q} = t^S(j)$ for all $S\cup \{k+1\}$. If we assume by contradiction that $t^\star(j) > 0$ for some $j\notin S\cup \{k+1\}$, we can get a strictly better utility by decreasing it to $0$, thus proving that $t^\star = t^S$, as needed.
\end{proof}

\section{Lower Bound of $\Omega(m)$ on the Number of Critical Points} \label{app:crit_points_m_dependence}
In this section, we show that there exist instances with $2$ independent actions and $\Omega(m)$ critical points. More specifically, we prove the following
\begin{proposition} 
    For any $m \ge 2$, there exists an instance of the independent-action sequential contract problem with $2$ actions and $m$ outcomes such that there are $\Omega(m)$ critical points.
\end{proposition}
\begin{proof}
    Consider the following instance:
    \begin{enumerate}
        \item Action $1$ is free ($c_1 = 0$) and induces a uniform distribution over the outcomes ($p_{1j}=\frac{1}{m}$ for all $j\in [m]$).
        \item Action $2$ has a cost of $\frac{1}{2m}$ ($c_2=\frac{1}{2m}$) and induces a uniform distribution over the outcomes ($p_{2j}=\frac{1}{m}$ for all $j\in [m]$).
        \item For any $j\in [m]$, the reward for outcome $j$ is $j-1$ ($r(j) = j-1$).
    \end{enumerate}
    The agent's best response to a linear contract $\alpha$ is always to take action $1$ first, then take action $2$ if and only if $\alpha (X_{1}-1) \ge z_2(\alpha)$, and finally choose a maximal revealed outcome. Thus, the agent's best response changes whenever $z_2(\alpha) = \alpha (j-1)$ for some $j\in [m]$. Note that this is a ``real'' change in the sense that it changes the agent's behavior with non-zero probability, since with probability $\frac{1}{m}$ after taking action $1$ $X_1=j$.
    Furthermore, note that $z_2(0) < 0$, we claim that $z_2(1) > 1\cdot (m-2)$. Indeed, $z_2(1)$ satisfies Equation~\ref{eq:res_lin_contract}, i.e.,
    \[
    \sum_{j\in [m]: j-1>z_2(1)} p_{2j} (j-1-z_2(1)) = \frac{1}{2m}.
    \]
    If $z_2(1)\le m-2$ we'd have
    \[
    \sum_{j\in [m]: j-1>z_2(1)} p_{2j} (j-1-z_2(1)) \ge p_{2m}(m-1-z_2(1))\ge \frac{1}{m} >\frac{1}{2m},
    \]
    which is a contradiction.
    Overall, we have that $z_2(0)<0$, $z_2(1) > 1(m-1-1)$, and $z_2(\alpha)$ is continuous (from Lemma~\ref{lem:res_piecwise_lin}),
    thus proving that $z_2(\alpha)$ intersects with $\alpha (j-1)$ for all $j=1,\dots,m-1$, yielding at least $m-1$ critical points.
\end{proof}

\section{An Equivalence Between Coverage Functions and ``Correlated Maximum'' Functions} \label{app:generalized_coverage_equiv}
In this section we prove a generalization to Proposition~\ref{prop:coverage_equiv} which may be of independent interest. Specifically we show an equivalence between a set function which maps subsets of correlated random variables to their expected maximum value and coverage functions. In this equivalence the representation size of each is polynomial in the representation size of the other, which implies hardness results with respect to one representation imply hardness results with respect to the other. This equivalence is formalized in the following proposition.

\begin{proposition}
Let $\{X_i\}_{i\in A}$ be (correlated) random variables which attain values within some set $M$ and have support $\mathcal{R}\subseteq M^A$ and probability density function $p: \mathcal{R} \rightarrow [0,1]$. There exists a coverage function  $f=(U, \{w_u\}_{u\in U}, A, h)$ with $|U| \le |\mathcal{R}|^2$ such that $f(S) = \E[\max_{i\in S} X_i]$.
Moreover, for every coverage function $f=(U, \{w_u\}_{u\in U}, A, h)$, there exist (correlated) random variables $\{X_i\}_{i\in A}$, with support $\mathcal{R}\subseteq \{0, L\}^A$, for some $L \ge 0$, such that $f(S) = \E[\max_{i\in S} X_i]$ and $|\mathcal{R}| \le |U|$.
\end{proposition}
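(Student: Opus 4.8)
I would prove the two assertions separately, and in both directions bootstrap from the Bernoulli (``correlated OR'') equivalence already established in Proposition~\ref{prop:coverage_equiv}; the only genuinely new ingredient is a layer-cake (threshold) decomposition of the maximum, which lets me reduce the ``correlated maximum'' to a non-negative combination of ``correlated OR'' instances.

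\emph{Coverage $\to$ random variables (second assertion).} The observation that trivializes this direction is that since the target variables are $\{0,L\}$-valued, $\max_{i\in S}X_i = L\cdot\indicator[\exists i\in S.~X_i>0]$, hence $\E[\max_{i\in S}X_i]=L\cdot\Pr[\exists i\in S.~X_i>0]$. So the plan is: first discard from $U$ every element lying in no $h(i)$ (this leaves every value $f(S)$ unchanged); then set $L:=f(A)=\sum_{u\in U}w_u$ (if $f\equiv 0$, take $L=0$ and $X_i\equiv 0$), so that $g(S):=f(S)/L$ is a coverage function with $Im(g)\subseteq[0,1]$ and total weight $g(A)=1$; apply the second half of Proposition~\ref{prop:coverage_equiv} to $g$ to obtain correlated Bernoullis $\{Y_i\}_{i\in A}$ with $\Pr[\exists i\in S.~Y_i>0]=g(S)$; and finally set $X_i:=L\cdot Y_i$, which is $\{0,L\}$-valued. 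For the support bound note that, in the construction of Proposition~\ref{prop:coverage_equiv}, the auxiliary ``zero'' atom is sampled with probability $1-g(A)=0$, and every surviving element of $U$ lies in some $h(i)$ so no realization is identically $0$; hence the support of $\{Y_i\}$, and therefore of $\{X_i\}$, has size at most $|U|$ rather than $|U|+1$. Correctness is immediate: $\E[\max_{i\in S}X_i]=L\cdot g(S)=f(S)$.

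\emph{Random variables $\to$ coverage (first assertion).} Here I would use a layer-cake decomposition. For $v\in\mathcal{R}$ let $0=d^v_0<d^v_1<\dots<d^v_{k_v}$ enumerate the distinct values in $\{0\}\cup\{v_i:i\in A\}$; then for every $S\subseteq A$, $\max_{i\in S}v_i=\sum_{\ell=1}^{k_v}(d^v_\ell-d^v_{\ell-1})\,\indicator[\exists i\in S.~v_i\ge d^v_\ell]$. Averaging this identity against $p(v)$ suggests the coverage function with ground set $U=\{(v,\ell):v\in\mathcal{R},\,1\le\ell\le k_v\}$, weights $w_{(v,\ell)}=p(v)\,(d^v_\ell-d^v_{\ell-1})\ge 0$, and membership $h(i)=\{(v,\ell):v_i\ge d^v_\ell\}$; one then checks directly from the definition of a coverage function that $\sum_{(v,\ell)}w_{(v,\ell)}\,\indicator[\exists i\in S.~(v,\ell)\in h(i)]=\sum_{v\in\mathcal{R}}p(v)\max_{i\in S}v_i=\E[\max_{i\in S}X_i]$, and its ground set has size $\sum_{v\in\mathcal{R}}k_v\le|\mathcal{R}|^2$. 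An alternative route, reusing Proposition~\ref{prop:coverage_equiv} as a black box, is: for each threshold value $a$ appearing in $\mathcal{R}$, apply the first half of Proposition~\ref{prop:coverage_equiv} to the Bernoullis $\indicator[X_i\ge a]$ (whose joint support has size $\le|\mathcal{R}|$) to get a coverage function $g_a$ with $g_a(S)=\Pr[\exists i\in S.~X_i\ge a]$, and then take $f=\sum_a (a-a^-)\,g_a$ (where $a^-$ is the previous threshold), using that a non-negative linear combination of coverage functions is again coverage — realize it as the disjoint union of the ground sets with the weights rescaled and $h$ defined coordinatewise.

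\textbf{Where the work is.} Neither direction is deep once Proposition~\ref{prop:coverage_equiv} is in hand; the two points that require care are (i) in the forward direction, choosing the decomposition so that the ground set stays within the claimed $|\mathcal{R}|^2$ bound — the per-realization threshold decomposition above does exactly this, since it introduces one element per realization per value level — and (ii) in the reverse direction, the short bookkeeping argument (pruning unused elements of $U$ and taking $L=f(A)$) that forces the auxiliary zero atom to have zero mass and thereby shaves the ``$+1$'' off the support bound inherited from Proposition~\ref{prop:coverage_equiv}. Everything else reduces to verifying the defining coverage identity and is routine.
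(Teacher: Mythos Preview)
Your proposal is correct and follows essentially the same approach as the paper: both directions rest on the layer-cake/threshold decomposition of $\max_{i\in S} v_i$ for the forward direction and on the ``sample $u\in U$ with probability proportional to $w_u$ and set $X_i = L\cdot\indicator[u\in h(i)]$'' construction for the reverse. The only cosmetic differences are that the paper indexes thresholds globally by the value set $M$ (taking $U=\mathcal{R}\times M$ with weights $p(v)(m-\mathrm{prev}(m))$) rather than per realization, and builds the random variables directly instead of routing through Proposition~\ref{prop:coverage_equiv} and then shaving off the ``$+1$''.
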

\begin{proof}
     Let $\{X_i\}_{i\in A}$ be (correlated) random variables which attain values within some set $M$ and have support $\mathcal{R}\subseteq M^A$ and probability density function $p: \mathcal{R} \rightarrow [0,1]$. For any $m\in M$ we denote 
    \[
    \prev(m) = \max \{x \mid x\in M : x<m\},
    \]
     which we interpret as $0$ when the set is empty.
     
     We now define the coverage function $f=(U, \{w_u\}_{u\in U}, A, h)$, where $U=\mathcal{R}\times M$, for each $u=(v, m)\in U$ we define $w_u = p(v) (m-\prev(m))$, and $h(i) = \{(v, m)\in U \mid v_i \ge m\}$.
    Note that $|M| \le |\mathcal{R}|$, which means $|U| = |\mathcal{R}|\cdot |M| \le |\mathcal{R}|^2$ as needed. 
     Now,
    \[
    \begin{split}
    \E[\max_{i\in S} X_i] &= \sum_{v\in \mathcal{R}} p(v) \max_{i\in S} v_i = \sum_{v\in \mathcal{R}} p(v) \sum_{m \le \max_{i\in S} v_i} (m - \prev(m))\\
    &= \sum_{v\in \mathcal{R}}  \sum_{m \le \max_{i\in S} v_i} p(v) \ (m - \prev(m)) = \sum_{v\in \mathcal{R}}  \sum_{m \le \max_{i\in S} v_i} w_{(v,m)}\\
    &= \sum_{(v,m)\in U} w_{v,m} \indicator[\exists i\in S. ~v_i \ge m] = \sum_{(v,m)\in U} w_{v,m} \indicator[\exists i\in S.~ (v,m)\in h(i)] = f(S),
    \end{split}
    \]
    as needed.

    In the other direction, let $f=(U, \{w_u\}_{u\in U}, A, h)$. Denote $L = \sum_{u\in U} w_u$.
    We define our variables $\{X_i\}_{i\in A}$ by sampling an element $x\in U$, where each element $u\in U$ is sampled with probability
    \[
    Pr[x=u] = w_u\frac{1}{L}.
    \]
    For any $i\in A$ and $u\in U$ we denote $X_i(u) = L\cdot \indicator[u\in h(i)]$. Our random variables $\{X_i\}_{i\in A}$ are defined as $X_i = X_i(x)$, for any $i\in A$. Now:
    \[
    f(S) = \sum_{u\in U} w_u \indicator[\exists i\in S.~u\in h(i)] = \sum_{u\in U} Pr_x[x = u]\cdot L \cdot \indicator[\exists i\in S. X_i(u) = L] = \E_x[\max_{i\in S} X_i].
    \]
    Since each $X_i$ is a function of $x$, and there are $|U|$ possible values of $x$ we get that $\mathcal{R}$, which is defined as support of $\{X_i\}_{i\in A}$, has size $|\mathcal{R}|\le |U|$, as needed.
\end{proof}

\section{Omitted Details from the Proof of Theorem~\ref{thm:corr_contract_hardness}} \label{app:correlated}
In this appendix we provide missing details from the proof of Theorem~\ref{thm:corr_contract_hardness}, namely the correctness of the reduction, as cast in Lemma~\ref{lemma: reduction correctness}.
\reductioncorrectness*
Recall the reduction stated in Section~\ref{sec:corr}:
For any $(k,f': 2^{A'} \rightarrow [0,1])$ per Proposition~\ref{prop:coverage_hardness} with the parameters $M=3, \varepsilon=0.001$, where $f' = (U, \{w_u\}_{u\in U}, A', h')$.
Construct the following instance of the sequential contracts problem: $(\actions, \{c_i\}_{i\in \actions}, f)$ defined as follows. Our set of actions is $\actions = A' \cup \{0\}$. Our costs are $c_i = \frac{1.5}{k+1}$ for any $i\in A'$ and $c_0 = 1-\frac{\gamma}{8}$. Our ``correlated OR'' function is $f=(U, \{w_u\}_{u\in U}, A, h)$, where $h$ is defined as 
\[
h(i) = \begin{cases}
    h'(i) & i\ne 0\\
    U & i=0.
\end{cases}
\]

We first note the following Lemma:
\begin{lemma} \label{lem:no_zero}
    For any $\alpha < c_0$, the agent's best response to $t_\alpha$ never contains $0$.
\end{lemma}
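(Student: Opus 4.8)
The plan is to show that including action $0$ in the agent's strategy is never beneficial when $\alpha < c_0$, essentially because the cost of action $0$ is so high that it cannot be recovered by the marginal gain it provides. First I would recall, via Proposition~\ref{prop:strat_corr_pandora_binary}, that any best response can be written as an ordered tuple $\strat = (i_1, \dots, i_\ell)$, opening boxes in order and halting at the first non-zero outcome (or when the tuple is exhausted). Suppose toward a contradiction that the agent's best response $\strat$ to $t_\alpha$ contains $0$, say $i_p = 0$. I would compare $\strat$ to the strategy $\strat'$ obtained by deleting $0$ from the tuple.

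The key computation is to bound the change in the agent's utility. Since $f(S) = 1$ whenever $0 \in S$, once action $0$ is reached (which happens with probability $1 - f(\{i_1,\dots,i_{p-1}\})$), the agent definitely gets a success, so any actions scheduled after $0$ are never executed and contribute nothing — hence I may assume $p = \ell$, i.e. $0$ is the last box. Let $q = 1 - f(\{i_1,\dots,i_{\ell-1}\})$ be the probability that action $0$ is actually reached. Using the formulas $u_A(t_\alpha,\strat) = \alpha \cdot f(\{i_1,\dots,i_\ell\}) - c(\strat)$ and the recursive form of $c(\strat)$ from the excerpt, the difference $u_A(t_\alpha, \strat) - u_A(t_\alpha, \strat')$ equals $q \cdot \bigl(\alpha \cdot (1 - f(\{i_1,\dots,i_{\ell-1}\})) \cdot (\text{something}) - c_0\bigr)$; more carefully, $f(\{i_1,\dots,i_\ell\}) = 1$ since $i_\ell = 0$, while $f(\{i_1,\dots,i_{\ell-1}\}) = 1 - q$, so the added expected reward from including $0$ is $\alpha \cdot q$ and the added expected cost is $q \cdot c_0$. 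Thus the utility change is $q(\alpha - c_0) < 0$ since $\alpha < c_0$ and $q \ge 0$ (and $q > 0$ by the w.l.o.g. assumption that the last box is opened with positive probability). This contradicts optimality of $\strat$ — and in the boundary case $q = 0$ the two strategies are payoff-equivalent but tie-breaking toward fewer actions / toward the principal would still not select $\strat$.

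I expect the main obstacle to be the bookkeeping needed to reduce to the case where $0$ is the last box in the tuple: one must argue that actions after $0$ are pure dead weight (they are never reached, since reaching $0$ guarantees halting with a success) and can be pruned without changing any utility, and then handle the tie-breaking convention carefully so that the agent's best response, which breaks ties in favor of the principal and then consistently, genuinely avoids $0$ rather than merely being indifferent. The arithmetic itself — unwinding $c(\strat) = \sum_{k=1}^\ell (1 - f(\{i_1,\dots,i_{k-1}\})) c_k$ and the utility expressions — is routine once the structural reduction is in place.
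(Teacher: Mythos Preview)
Your proposal is correct and follows essentially the same argument as the paper: reduce to the case where $0$ is the final action (the paper does this via the standing assumption that the last box is opened with positive probability, you do it by pruning never-reached actions), then compute the utility difference between $\strat$ and the truncated strategy $\strat'$ as $q(\alpha - c_0) < 0$ to derive a contradiction. The only minor difference is that the paper does not bother with the $q=0$ edge case because its normalization convention already rules it out.
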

\begin{proof}
    Let $\strat = (s_1, \dots, s_\ell)$ be the agent's best response to the contract $t_\alpha$, and assume by contradiction that it contains action $0$. It is clear that  $0$ is the final action, due to our assumption that in the strategy representation the final action must be taken with non-zero probability.
    
    Denote $\strat' = (s_1, \dots, s_{\ell - 1})$, we show that $\strat'$ has a greater utility than $\strat$, leading to a contradiction. Indeed, 
    \[
    \begin{split}
    c(\strat) - c(\strat') &= \sum_{i=1}^\ell (1-f(\{s_1, \dots, s_{i-1}\})) \cdot c_{s_i} - \sum_{i=1}^{\ell-1} (1-f(\{s_1, \dots, s_{i-1})) \cdot c_{s_i} \\
    &=(1-f(\{s_1, \dots, s_{\ell -1}\})) \cdot c_0,
    \end{split}
    \]
    which means
    \[
    \begin{split}
    u_A(t_\alpha, \strat) - u_A(t_\alpha, \strat') &= \alpha \cdot f(\{s_1, \dots, s_{\ell-1} \}) - c(\strat)- (\alpha \cdot f(\{s_1, \dots, s_\ell \}) - c(\strat')) \\
    &= (c_0 - \alpha)\cdot (1- f(\{s_1, \dots, s_{\ell - 1} \})) < 0,
    \end{split}
    \]
    where the last inequality is because the probability of the last action (i.e., $s_\ell$) being taken is non-zero. This contradicts $\strat$ being a best response.
\end{proof}

We now prove the correctness of our reduction.
\begin{proof} [Proof of Lemma~\ref{lemma: reduction correctness}]
We consider the two possible condition of Proposition~\ref{prop:coverage_hardness}:
\paragraph{Case 1: $f'$ satisfies condition (1) from Proposition~\ref{prop:coverage_hardness}. }
We bound the principal's utility from the optimal contract from below by her utility from the contract $t_{\frac{3}{4}}$. 

Let $S$ be a set that satisfies condition $1$ (i.e., $f'(S)=1$ and $|S|=k$). Note that for any set $T\subseteq S$ it holds that $f'(T) = \frac{|T|}{k}$. The upper bound is from subadditivity, and since for any $i\in A'$ it holds that $f'(\{i\})=\frac{1}{k}$, and the lower bound is since, from subadditivity, $f'(T) \ge f'(S) - f'(S\setminus T) \ge 1-\frac{|S|-|T|}{k}$.
Now, let $s_1, \dots, s_k\in \actions$ be some enumeration of the elements of $S$, and consider the strategy $\strat = (s_1, \dots, s_k)$, we claim that this maximizes the agent's utility. 
The expected cost $c(\strat)$ is
\[
\begin{split}
c(\strat) &= \sum_{i=1}^k (1-f(\{s_1, \dots, s_{i-1}\})) c_{s_i} = \sum_{i=1}^k \left(1-\frac{i-1}{k}\right)\frac{1.5}{k+1} =\frac{1.5}{k+1}\left(k-\frac{1}{k}\sum_{i=1}^k(i-1)\right) \\
&= \frac{1.5}{k+1}\left(k-\frac{1}{k}\frac{k(k-1)}{2}\right) = \frac{1.5}{k+1}\frac{2k - (k-1)}{2} = \frac{3}{4}.
\end{split}
\]
This means that the agent's utility from $\strat$ is
\[
u_A(t_{\frac{3}{4}}, \strat) = \frac{3}{4}f(S) - c(\strat) = 0.
\]
Now, let $\strat' = (a_1, \dots, a_\ell)$ be an agent's strategy which doesn't contain 0 (this is without loss of generality due to Lemma~\ref{lem:no_zero} and because $\frac{3}{4} < 1-\frac{\gamma}{8}$). We show that the agent's utility from $\strat'$ is non-positive, which, due to tie-breaking in favor of the principal, means that $\strat$ (or some other strategy which leads to a success with a probability of 1) is the agent's best response.

If $\ell \le k$, then
\[
\begin{split}
c(\strat') &= \sum_{i=1}^\ell (1-f(\{a_1, \dots, a_{i-1}\})) c_{a_i} \ge \sum_{i=1}^\ell \left(1-\frac{i-1}{k}\right)\frac{1.5}{k+1} =\frac{1.5}{k+1}\left(\ell-\frac{1}{k}\sum_{i=1}^\ell(i-1)\right) \\
&= \frac{1.5}{k+1}\left(\ell-\frac{1}{k}\frac{\ell(\ell-1)}{2}\right) = \frac{1.5}{k+1}\left(\frac{\ell (2k-(\ell - 1))}{2k}\right) \ge \frac{1.5\cdot \ell}{2k},
\end{split}
\]
where the first inequality is due to subadditivity, for any $S\subseteq A'$ it holds that $f(S) = f'(S) \le \frac{|S|}{k}$, and the last inequality is because $\ell \le k$.
This means the agent's utility from $\strat'$ is at most
\[
u_A\left(t_{\frac{3}{4}}, \strat'\right) = \frac{3}{4} f(\{a_1, \dots, a_\ell\}) - c(\strat') \le \frac{3}{4} \frac{\ell}{k} - \frac{3}{4}\frac{\ell}{k} = 0,
\]
as needed.

If $\ell > k$, we can bound $c(\strat')$ from below by $c((a_1, \dots, a_k))$, which as argued above is at least $\frac{3}{4}$, which means the agent's utility from $\strat'$ is at most
\[
u_A\left(t_{\frac{3}{4}}, \strat'\right) = \frac{3}{4} f(\{a_1, \dots, a_k\}) - c(\strat') \le \frac{3}{4} \cdot 1 - \frac{3}{4} = 0,
\]
as needed.

Because $\strat$ maximizes the agent's best response, and because the agent breaks ties in favor of the principal, the principal's utility from $t_\frac{3}{4}$ is at least
\[
u_P\left(t_\frac{3}{4}\right) \ge \left(1-\frac{3}{4}\right) f(S) = \frac{1}{4}, 
\]
as needed.
\paragraph{Case 2: $f'$ satisfies condition (2) from Proposition~\ref{prop:coverage_hardness}. }
It is easy to see that the agent's best response to a contract $t_{1-\frac{\gamma}{8}}$ is $(0)$, which means $u_P(t_{1-\frac{\gamma}{8}}) = \frac{\gamma}{8} > 0$. We conclude the proof by showing that under a contract $t_\alpha$ with $\alpha < 1-\frac{\gamma}{8}$, the agent's utility from any non-empty strategy which doesn't include $0$ is negative. This means that the agent's best response to $t_\alpha$ is to do nothing, implying $u_P(t_\alpha) = 0$, as needed.

Let $\alpha < 1-\frac{\gamma}{8}$, and consider some strategy $(s_1, \dots, s_\ell)$ which doesn't include $0$. 

If $\ell < \frac{k}{10}$, the expected cost of strategy $\strat$ is at least 
\[
c(\strat) \ge  \frac{1.5}{k+1}\left(\frac{\ell (2k-(\ell - 1))}{2k}\right) > \frac{1.5\cdot \ell}{2k}\left(\frac{(2-\frac{1}{10}) k}{k+1}\right),
\]
where the first inequality is the same as we've seen in the analysis of case 1. For $k \ge 3$ this implies $c(\strat) > \frac{\ell}{k}$, which means the agent's utility from $\strat$ is bounded by
\[
u_A(t_\alpha, \strat) = \alpha \cdot f(\{s_1, \dots, s_\ell\}) - c(\strat) < 1\cdot\frac{\ell}{k} - \frac{\ell}{k} < 0,
\]
where the inequality is from our earlier observation that $f(T) \le \frac{|T|}{k}$, and from our bound on $c(\strat)$.  

If $\frac{ k}{10} \le \ell \le 3k$, by applying condition (2) we can bound the expected cost $c(\strat)$ from below by
\[
\begin{split}
c(\strat) &= \sum_{i=1}^\ell (1-f(\{s_1, \dots, s_{i-1}\}) c_{s_i} \ge \sum_{i=1}^{\ell} (e^{-(i-1)/k}-\varepsilon) c_{s_i} = \frac{1.5}{k+1} \left(\sum_{i=1}^{\ell} e^{-(i-1)/k}-\ell\varepsilon\right) \\
&=  \frac{1.5}{k+1} \left(\frac{1-e^{-\ell/k}}{1-e^{-1/k}}-\ell\varepsilon\right) > 1.5 \left(\frac{1-e^{-\ell/k}}{(1-e^{-1/k})(k+1)} -3\varepsilon\right) > 1.5\left(\frac{1-e^{-\ell/k}}{1.3}  - 3\varepsilon\right) \\
&> 1-e^{-\ell / k}+\varepsilon,
\end{split}
\]
where the second to last inequality is because $(1-e^{-x})(x+1)<1.3$ for any $x > 0$, and the last inequality is because our choice of $\varepsilon$ is such that for $\ell \ge \frac{k}{10}$ the inequality is satisfied.
This implies
\[
u_A(t_\alpha, \strat) = \alpha \cdot f(\{s_1, \dots, s_\ell\}) - c(\strat) < \alpha (1-e^{-\ell / k}+\varepsilon) - (1-e^{-\ell / k}+\varepsilon) < 0.
\]
If $\ell > 3k$, we can bound the cost of $c(\strat)$ from below by the cost of $c((s_1, \dots, s_{3k}))$, which, as argued above is at least
\[
1.5\left(\frac{1-e^{-3}}{1.3}  - 3\varepsilon\right) > 1,
\]
implying the agent's utility is negative, concluding the proof.
\end{proof}

\end{document}